\let\euscr\mathscr
\pgfplotsset{width=10cm,compat=1.9}
\theoremstyle{definition}
\newtheorem{theorem}{Theorem}[section]
\newtheorem{definition}[theorem]{Definition}
\newtheorem{corollary}[theorem]{Corollary}
\newtheorem{lemma}[theorem]{Lemma}
\newtheorem{proposition}[theorem]{Proposition}
\newtheorem{problem}{Problem}
\newtheorem*{problem*}{Problem}
\newtheorem*{remark*}{Remark}
\definecolor{specBoxOutlineColor}{rgb}{0.122, 0.435, 0.698}
\definecolor{specBoxBgColor}{rgb}{1.000, 0.997, 0.990} 
\newcommand{\specBox}[1]{%
\begin{tcolorbox}[colframe=specBoxOutlineColor,colback=specBoxBgColor,boxrule=0.5pt,arc=4pt,
      left=6pt,right=6pt,top=2pt,bottom=2pt,boxsep=0pt,width=\columnwidth,fontupper=\footnotesize]%
      {\emph{#1}}
\end{tcolorbox}%
}
\newcommand{\figlabel}[1]{\label{fig:#1}}
\newcommand{\figref}[1]{Fig.~\ref{fig:#1}}
\newcommand{\seclabel}[1]{\label{sec:#1}}
\newcommand{\secref}[1]{Section~\ref{sec:#1}}
\newcommand{\deflabel}[1]{\label{def:#1}}
\newcommand{\defref}[1]{Definition~\ref{def:#1}}
\newcommand{\thmlabel}[1]{\label{thm:#1}}
\newcommand{\thmref}[1]{Theorem~\ref{thm:#1}}
\newcommand{\proplabel}[1]{\label{prop:#1}}
\newcommand{\lemlabel}[1]{\label{lem:#1}}
\newcommand{\lemref}[1]{Lemma~\ref{lem:#1}}
\newcommand{\corlabel}[1]{\label{cor:#1}}
\newcommand{\corref}[1]{Corollary~\ref{cor:#1}}
\newcommand{\applabel}[1]{\label{app:#1}}
\newcommand{\appref}[1]{Appendix~\ref{app:#1}}
\newcommand{\algolabel}[1]{\label{app:#1}}
\newcommand{\algoref}[1]{Algorithm~\ref{app:#1}}
\newcommand{\linelabel}[1]{\label{line:#1}}
\newcommand{\lineref}[1]{Line~\ref{line:#1}}
\newcommand{\set}[1]{\{#1\}}
\newcommand{\setpred}[2]{\set{#1 \,|\, #2}}
\newcommand{\tuple}[1]{\langle#1\rangle}
\newcommand{\proj}[2]{#1|_{#2}}
\renewcommand{\emptyset}{\varnothing}
\newcommand{\nats}{\mathbb{N}}
\newcommand{\rats}{\mathbb{Q}}
\newcommand{\true}{{\sf true}}
\newcommand{\false}{{\sf false}}
\newcommand{\pass}{\sf ok}
\newcommand{\fail}{\sf fail}
\newcommand{\procs}{\euscr{P}}
\newcommand{\vals}{\euscr{V}}
\newcommand{\methods}{\euscr{M}}
\newcommand{\Times}{\euscr{T}}
\newcommand{\alphabet}{\Sigma}
\newcommand{\hist}{H}
\newcommand{\lin}{\ell}
\newcommand{\methodAttr}{{\sf m}\xspace}
\newcommand{\procAttr}{{\sf p}\xspace}
\newcommand{\valAttr}{{\sf v}\xspace}
\newcommand{\argAttr}{{\sf arg}\xspace}
\newcommand{\retAttr}{{\sf ret}\xspace}
\newcommand{\invTimeAttr}{{\sf inv}\xspace}
\newcommand{\resTimeAttr}{{\sf res}\xspace}
\newcommand{\methodOf}[1]{\methodAttr(#1)}
\newcommand{\procOf}[1]{\procAttr(#1)}
\newcommand{\valOf}[1]{\valAttr(#1)}
\newcommand{\argOf}[1]{\argAttr(#1)}
\newcommand{\returnOf}[1]{\retAttr(#1)}
\newcommand{\invTimeOf}[1]{\invTimeAttr(#1)}
\newcommand{\resTimeOf}[1]{\resTimeAttr(#1)}
\newcommand{\opr}[3]{{#1}_{#3}\texttt{(}#2\texttt{)}}
\newcommand{\abs}[1]{{\sf abs}(#1)}
\definecolor{ADTSetColor}{rgb}{0.0, 0.525, 0.043}
\newcommand{\ADT}{\mathcal{D}}
\newcommand{\registerDS}{\texttt{\textcolor{ADTSetColor}{register}}\xspace}
\newcommand{\counterDS}{\texttt{\textcolor{ADTSetColor}{counter}}\xspace}
\newcommand{\mutexDS}{\texttt{\textcolor{ADTSetColor}{mutex}}\xspace}
\newcommand{\setDS}{\texttt{\textcolor{ADTSetColor}{set}}\xspace}
\newcommand{\stackDS}{\texttt{\textcolor{ADTSetColor}{stack}}\xspace}
\newcommand{\sizeStackDS}{\texttt{\textcolor{ADTSetColor}{size-visible-stack}}\xspace}
\newcommand{\queueDS}{\texttt{\textcolor{ADTSetColor}{queue}}\xspace}
\newcommand{\pqueueDS}{\texttt{\textcolor{ADTSetColor}{priority-queue}}\xspace}
\newcommand{\multisetDS}{\texttt{\textcolor{ADTSetColor}{multiset}}\xspace}
\newcommand{\mapDS}{\texttt{\textcolor{ADTSetColor}{map}}\xspace}
\newcommand{\faaDS}{\texttt{\textcolor{ADTSetColor}{faa-register}}\xspace}
\newcommand{\casDS}{\texttt{\textcolor{ADTSetColor}{cas-register}}\xspace}
\newcommand{\semDS}{\texttt{\textcolor{ADTSetColor}{semaphore}}\xspace}
\newcommand{\stackDSUpper}{\texttt{\textcolor{ADTSetColor}{Stack}}\xspace}
\newcommand{\queueDSUpper}{\texttt{\textcolor{ADTSetColor}{Queue}}\xspace}
\newcommand{\spec}{\mathbb{T}}
\newcommand{\RegisterSpec}{{\spec}_{\registerDS}}
\newcommand{\StackSpec}{{\spec}_{\stackDS}}
\newcommand{\QueueSpec}{{\spec}_{\queueDS}}
\definecolor{methodNameColor}{rgb}{0.122, 0.435, 0.698}
\newcommand{\methodName}[1]{{\tt \textcolor{methodNameColor}{#1}}\xspace}
\definecolor{valueColor}{rgb}{0.698, 0.111, 0.111}
\newcommand{\val}[1]{{\tt\small \textcolor{valueColor}{#1}}\xspace}
\newcommand{\vtrue}{\val{\top\!\!\!\top}}
\newcommand{\vfalse}{\val{\bot\!\!\!\bot}}
\newcommand{\vok}{\val{\pass}}
\newcommand{\vfail}{\val{\fail}}
\newcommand{\methodsRegisters}{\methods_{\registerDS}}
\newcommand{\mread}{\methodName{read}}
\newcommand{\mwrite}{\methodName{write}}
\newcommand{\methodsCounter}{\methods_{\counterDS}}
\newcommand{\inc}{\methodName{inc}}
\newcommand{\dec}{\methodName{dec}}
\newcommand{\get}{\methodName{get}}
\newcommand{\methodsQueue}{\methods_{\queueDS}}
\newcommand{\methodsPQueue}{\methods_{\pqueueDS}}
\newcommand{\enq}{\methodName{enq}}
\newcommand{\deq}{\methodName{deq}}
\newcommand{\peek}{\methodName{peek}}
\newcommand{\methodsStack}{\methods_{\stackDS}}
\newcommand{\methodsSizeStack}{\methods_{\sizeStackDS}}
\newcommand{\push}{\methodName{push}}
\newcommand{\pop}{\methodName{pop}}
\newcommand{\methodsSet}{\methods_{\setDS}}
\newcommand{\contains}{\methodName{has}}
\newcommand{\remove}{\methodName{del}}
\newcommand{\methodsMutex}{\methods_{\mutexDS}}
\newcommand{\methodsMultiset}{\methods_{\multisetDS}}
\newcommand{\methodsMap}{\methods_{\mapDS}}
\newcommand{\methodsCAS}{\methods_{\casDS}}
\newcommand{\methodsFAA}{\methods_{\faaDS}}
\newcommand{\methodsSem}{\methods_{\semDS}}
\newcommand{\add}{\methodName{add}}
\newcommand{\mapPut}{\methodName{put}}
\newcommand{\cas}{\methodName{cas}}
\newcommand{\faa}{\methodName{faa}}
\newcommand{\acq}{\methodName{acq}}
\newcommand{\rel}{\methodName{rel}}
\newcommand{\keys}{\euscr{K}}
\newcommand{\key}{\mathcal{k}}
\newcommand{\dom}{\mathop{\mathrm{dom}}}
\newcommand{\visited}{{\tt visited}}
\newsavebox{\figholder}
\newenvironment{scatterplot}[2]{%
  \begin{tikzpicture}
  \begin{axis}[
    width=\linewidth,
    xlabel={#1},
    ylabel={#2},
    grid=both,
    grid style={dashed,gray!30},
    legend style={at={(0.02,0.98)}, anchor=north west, font=\small},
    mark options={scale=1.2}
  ]
}{%
  \end{axis}
  \end{tikzpicture}
}
\newcommand{\addscatter}[3]{%
  \addplot+[
    x filter/.code=\pgfmathparse{\pgfmathresult + 10}, 
    only marks,
    mark size=1.2pt,
    #3
  ] table[
    x index=0,
    y index=1,
    col sep=space
  ] {#1};
  \addlegendentry{#2};
}
\newcommand{\np}{\textsf{NP}}
\newcommand{\poly}{\textsf{poly}}
\newcommand{\fpt}{\textsf{FPT}}
\newcommand{\fptlin}{\texttt{FPTLin}\xspace}
\newcommand{\lbl}[1]{\textsf{lbl}(#1)}
\newcommand{\histpo}[1]{<_{#1}}
\newcommand{\idealsOf}[1]{\textsf{Ideals}_{#1}}
\newcommand{\frontiersOf}[1]{\mathcal{F}_{#1}}
\newcommand{\fg}[1]{G_{#1}}     
\newcommand{\fgnodes}[1]{N_{#1}}     
\newcommand{\fgedges}[1]{E_{#1}}     
\newcommand{\init}{\textsf{init}}
\newcommand{\final}{\textsf{final}}
\newcommand{\match}{{\tt match}}
\newcommand{\partstate}{\mathcal{S}}
\newcommand{\ifunc}{\mathcal{I}}
\newcommand{\inferrule}[2]{\frac{\begin{array}{c}#1\end{array}}{\begin{array}{c}#2\end{array}}}
\newcommand{\ltstrans}[1]{\xrightarrow{#1}}
\newcommand{\lang}[1]{\mathcal{L}(#1)}
\newcommand{\streach}[1]{{\tt Reach}_\stackDS(#1)}
\newcommand{\transclo}[1]{{\tt TransClo}_\stackDS(#1)}
\newcommand{\matmul}[1]{{\tt MatMul}_\stackDS(#1)}
\newcommand{\boolmatmul}[1]{{\tt BMM}(#1)}
\newcommand{\qbot}{\bot}
\newcommand{\IdlRel}[1]{Q^{\sf ideals}_{#1}}
\newcommand{\queueStep}{\mathrel{\leadsto_{\queueDS}}}
\newcommand*\circledsmall[1]{\tikz[baseline=(char.base)]{
  \node[shape=circle,draw=black,fill=orange!20!white,inner sep=0.5pt, solid] (char) {\textcolor{black}{\tt\footnotesize{#1}}};}}
\newcommand{\repr}{\textsf{Rep}}
\newcommand{\nonTerminals}[1]{\mathsf{NT}_{#1}}
\newcommand{\nullableNonTerminals}[1]{\mathsf{NT}^{\emptyset}_{#1}}
\newcommand{\prodRules}[1]{\rightarrow_{#1}}
\begin{document}

\title{Fixed Parameter Tractable Linearizability Monitoring}

\author{Zheng Han Lee}
\orcid{0009-0000-7130-2493}
\affiliation{%
  \institution{National University of Singapore}
  \city{Singapore}
  \country{Singapore}
}
\email{zhlee@u.nus.edu}

\author{Umang Mathur}
\orcid{0000-0002-7610-0660}
\affiliation{%
  \institution{National University of Singapore}
  \city{Singapore}
  \country{Singapore}
}
\email{umathur@nus.edu.sg}

\keywords{linearizability, monitoring, complexity, tractability, language reachability}

\begin{abstract}
We study the linearizability monitoring problem, which asks whether a given concurrent history of a data structure is equivalent to some sequential execution of the same data structure. In general, this problem is $\np$-hard, even for simple objects such as registers. Recent work has identified tractable cases for restricted classes of histories, notably unambiguous and differentiated histories.

We revisit the tractability boundary from a fine-grained, parameterized perspective. We show that for a broad class of data structures --- including stacks, queues, priority queues, and maps—linearizability monitoring is fixed-parameter tractable when parameterized by the number of processes. Concretely, we give an algorithm running in time $O(c^{k} \cdot \poly(n))$, where $n$ is the history size, $k$ is the number of processes, and $c$ is a constant, yielding efficient performance when $k$ is small.
Our approach reduces linearizability monitoring to a language reachability problem on graphs, which asks whether a labeled graph admits a path whose label sequence belongs to a fixed language $L$. We identify classes of languages that capture the sequential specifications of the above data structures and show that language reachability is efficiently solvable on the graph structures induced by concurrent histories.

Our results complement prior hardness results and existing tractable subclasses, and provide a unified algorithmic framework. We implement our approach and demonstrate significant runtime improvements over existing algorithms, which exhibit exponential worst-case behavior.
\end{abstract}

\begin{CCSXML}
<ccs2012>
   <concept>
       <concept_id>10003752.10010070</concept_id>
       <concept_desc>Theory of computation~Theory and algorithms for application domains</concept_desc>
       <concept_significance>300</concept_significance>
       </concept>
   <concept>
       <concept_id>10010147.10011777</concept_id>
       <concept_desc>Computing methodologies~Concurrent computing methodologies</concept_desc>
       <concept_significance>300</concept_significance>
       </concept>
   <concept>
       <concept_id>10011007.10011074.10011099</concept_id>
       <concept_desc>Software and its engineering~Software verification and validation</concept_desc>
       <concept_significance>500</concept_significance>
       </concept>
 </ccs2012>
\end{CCSXML}

\ccsdesc[300]{Theory of computation~Theory and algorithms for application domains}
\ccsdesc[300]{Computing methodologies~Concurrent computing methodologies}
\ccsdesc[500]{Software and its engineering~Software verification and validation}

\maketitle


\section{Introduction}
\seclabel{intro}

Linearizability, originally proposed by Herlihy and Wing~\cite{Wing1990},
serves as the standard correctness criterion for implementations of concurrent
data structures. Conceptually, it asks: given a concurrent implementation of an
abstract data type (ADT), are all of its behaviors equivalent to those of an
ideal sequential implementation? Full-fledged formal verification of
linearizability is undecidable in general~\cite{Bouajjani2013}, and the known
decidable classes of programs and specifications are rather restricted.

The focus of this paper is the more pragmatic \emph{linearizability monitoring} problem. 
Instead of asking that all behaviors of an implementation
be correct, linearizability monitoring asks the more modest question: given a
single execution history $\hist$ produced by running a concurrent
implementation of an ADT, is $\hist$ equivalent to some execution of a
sequential implementation? 
Linearizability monitors form a core component of stress
testing and stateless model checking, where implementations are exercised
repeatedly and the resulting histories are checked offline for violations.
Efficient algorithms for the monitoring problem translate directly into more
effective testing and exploration, and are also the subject of study in this work.

The complexity of monitoring depends on both the ADT and the shape of the histories. 
In~\cite{Gibbons1997} Gibbons and Korach studied the complexity of the monitoring problem for one
of the simplest \registerDS ADTs where objects expose $\mwrite$ and $\mread$ operations.
They showed that, in general, the monitoring
problem for \registerDS is $\np$-complete, and further showed that the problem
is polynomial-time solvable when the input history additionally associates
each $\mread$ operation to a unique $\mwrite$ operation, or equivalently,
under the restriction that each value is written to at most once in the history.
This \emph{unambiguity} restriction,
was recently shown to also yield polynomial time monitoring algorithms for other data structures
including \stackDS{}s, \queueDS{}s and \pqueueDS{}s~\cite{lee2025,Abdulla2025}.
Intuitively, the unambiguity restriction (or \emph{data differentiatedness})
applies to a history of the above ADTs if each value is added (and removed) exactly once.

The motivation behind both the above restrictions stem from the
idea that many ADTs enjoy \emph{data independence}, i.e., if a history is admitted,
then it continues to be admitted even after substituting values in it with other values.
While ADTs such as \stackDS{}s, \queueDS{}s and \pqueueDS{}s satisfy data independence,
it is unclear if this property holds beyond a small class of ADTs.
More generally, the unambiguity restriction presupposes a clear demarcation between `add'
and `remove' operations~\cite{lee2025}, which are difficult to demarcate for many
of ADTs, including \counterDS{}s, \faaDS{}s or \sizeStackDS{}s (see \figref{oodt-specs} in \secref{oodt-lin}),
where values stored in the abstract state need not correspond directly to values appearing in operations.
Finally, even when the sequential specification of the ADT observes data independence,
it may not always be easy to modify the actual source code of the concurrent implementation
to make it generate unambiguous histories, for practical reasons such as
to avoid degradation of performance, or simply because the source code may not be available at the time of testing.

The theme of this work is to design tractable algorithms for linearizability monitoring
of a large class of ADTs that work for arbitrary histories.
Some observations towards this are in order.
First, for the \setDS{} and \multisetDS{} ADTs, the monitoring algorithms
proposed in~\cite{Abdulla2025} do not assume unambiguity restriction and work in polynomial time.
Second, $\np$-hardness results for the case of other simple ADTs such as \registerDS{}s~\cite{Gibbons1997} 
and subsequently for \stackDS{}, \queueDS{} and \pqueueDS{}~\cite{Gibbons2002LinJournal} 
forbid the existence of fully polynomial time algorithms (unless $\sf{P} = \sf{NP}$).
Nonetheless, Gibbons and Korach~\cite{Gibbons1997} showed that,
monitoring for \registerDS{} can be performed in time $O(2^k \cdot n)$,
where $n$ is the total number of operations in the history and $k$
is the number of processes involved.
That is, intractability of monitoring for \registerDS{} arises
primarily when the number of processes in the history scales arbitrarily.
In other words, linearizability monitoring for \registerDS{} becomes 
tractable when the number of processes is fixed, i.e., 
it is FPT in the parameter $k$ (number of processes).
In this work, we further the parameterized complexity landscape, 
initiated by Gibbons and Korach for \registerDS,
and ask \emph{for what class of ADTs is linearizability monitoring
FPT in the number of processes?}
In the following, we discuss our contributions 
\circledsmall{C1} --- \circledsmall{C5} towards answering this question.

\myparagraph{ADT classes for FPT linearizability monitoring}
Towards the above goal, we identify three broad families of ADTs for which 
linearizability monitoring is FPT.
The first class of ADTs we identify is that of \emph{order-oblivious} data types (OODTs),
and their parametric extension \emph{$\alpha$-order-oblivious} data types ($\alpha$-OODTs, $\alpha \in \nats$) (\circledsmall{C2}).
Intuitively, these are ADTs for which the abstract state that a given sequence
of operation results in depends primarily upon the multiset of these operations,
and does not depend on the precise order except for
a small finite amount of order-sensitive control; the exact amount is determined by
parameter $\alpha$.
We show that for an $\alpha$-OODT $\ADT$, linearizability monitoring
can be performed in time proportional to $O(\alpha{}nk2^k)$ for histories with $n$
operations over $k$ processes (\thmref{alpha-OODT-FPT}).
The second class is that of context-free ADTs (\circledsmall{C3}),
whose sequential specifications are given using a context free grammar, and includes the $\stackDS{}$ ADT.
We show that for such ADTs, linearizability can be monitored in time
proportional to $O(g_\hist \cdot n^3 \cdot 2^{3k})$ (\thmref{CFL-ADT-FPT}).
Here, $g_\hist$ denotes the size of the grammar
projected to the operations of the given history $\hist$.
Furthermore, we show that this running time can be optimized
using fast matrix multiplication for $\stackDS$ (\thmref{stack-FPT}).
Third, we study \queueDS (\circledsmall{C4}), whose specification is
neither order-oblivious nor context-free and show that linearizability can be monitored
for \queueDS in $O(k2^{2k}\cdot n^2)$ time (\thmref{queue-FPT}).

\myparagraph{Linearizability as language reachability}
The inspiration behind the above ADT classes and their linearizability monitoring,
in fact, comes from the FPT algorithm
for that of \registerDS,
initially proposed by Gibbons and Korach~\cite{Gibbons1997}.
At a high level, this algorithm constructs a 
directed acyclic graph,
whose nodes correspond to subsets of operations 
and whose edges correspond to addition of a single new operation that 
(a) would not violate the semantics of $\registerDS{}$ and 
(b) be consistent with the happens-before order
induced by the history.
Then, the linearizability monitoring problem can be viewed as the problem of reaching 
a designated sink node
from a designated source node in this graph.
We insist that this algorithm can be lightly reformulated 
to decouple the structure induced by the history from that induced by the ADT as follows.
First, the graph edges only need to account for consistency with happens-before of this history;
this is often referred to as the \emph{frontier graph} of this history, since nodes
represent the different \emph{frontiers} that correspond to different allowable
prefixes of the history.
Second, instead of asking if there is any path from the source to the sink, we
ask if there is a path in the frontier graph 
that is also labeled with a sequence of operations that is legal as per the
sequential specification of the \registerDS{} ADT.
This alternative formalism is an instance of the \emph{language reachability} problem,
which in general, is parametrized by a formal language $L$
and asks, given an input directed labeled graph $G$, is there a walk
from the source vertex $s$ to the target vertex $t$ that is labeled with a word
that belongs to $L$.
In this work, we further this reformulation and approach 
linearizability monitoring for arbitrary ADTs from the unifying
standpoint of language reachability:
the graph in question is the frontier graph of the history
and the language in question is the sequential specification $\spec{}_{\ADT}$
of the ADT $\ADT{}$ (\circledsmall{C1}).
This reduction factors the monitoring problem
into a graph component, which is uniform and well understood, and a language
component, which depends only on the sequential
specification~\cite{Melski1997,koutris2023fine,Conrado2025}.

The frontier graph of a history with $n$ operations on $k$ processes has size $O(nk2^k)$.
FPT algorithms for linearizability monitoring can thus be obtained
for ADTs $\ADT$ for which the language reachability problem against their sequential specification
$\spec_{\ADT}$ can be solved in polynomial time in the size of the graph.
For the class of $\alpha$-OODTs (and OODTs), we show that language reachability can be solved
in polynomial time (assuming membership checking can be solved in polynomial time).
The class of CFL ADTs are those whose specifications are context free languages, and the language reachability problem
for them is already known to be solvable in polynomial time.
The most interesting case is that of the $\queueDS{}$ ADT.
As such, the generic language reachability problem for them is known to be undecidable for arbitrary graphs.
Even for the smaller class of directed acyclic graphs (DAGs), $\QueueSpec{}$-language
reachability is known to be
$\np$-hard, and we show that this holds even for bounded pathwidth DAGs.
Nonetheless, we prove that, for the class of frontier graphs arising in the context
of linearizability monitoring, a
specialized rewriting system over triples of frontiers yields a polynomial-time
algorithm for language reachability, and thus an \fpt{} monitor.

\myparagraph{Implementation and evaluation}
We implement these algorithms in our tool, \fptlin, and evaluate it on histories
generated from realistic concurrent data-structure implementations. Our
implementation supports \stackDS, \queueDS, \pqueueDS, and a variety of OODTs
and $\alpha$-OODTs \circledsmall{C5}. We compare \fptlin against the state-of-the-art
linearizability monitors LinP~\cite{lee2025}, LiMo~\cite{Abdulla2025}, and VeriLin~\cite{Jia2023} on histories
produced using the Scal benchmarking suite~\cite{Scal2015}. Across a range of
workloads, \fptlin matches or outperforms VeriLin for moderate numbers of
threads and remains robust as concurrency increases, where VeriLin often times
out or exhausts memory. Moreover, \fptlin scales to histories with around one
million operations for \queueDS and \pqueueDS, staying competitive with LinP and LiMo,
while not being restricted to ambiguous histories.

\myparagraph{Organization}
The rest of the article is organized as follows.
After presenting our preliminaries in \secref{prelim},
we discuss, in \secref{lang-reachability}, frontier graphs and show 
the reduction of linearizability monitoring
to language reachability over such graphs.
In \secref{oodt-lin}, \secref{cfl-adt} and \secref{queue-adt}, we discuss FPT monitoring
for respectively $\alpha$-OODTs, CFL ADTs (and \stackDS{})
and the \queueDS{} ADT.
We discuss  the details of the implementation and evaluation of our algorithms in~\secref{eval},
discuss related work in \secref{related} and concluding remarks in \secref{conclusions}.
Proofs of many of our results have been delegated to the appendix.

\section{Preliminaries}
\seclabel{prelim}

Here, we recall the background relevant for describing the linearizability problem.
Expert readers may skip to later sections.

\subsection{Histories and Operations}

The focus of this work is to design algorithms for
linearizability monitoring.
The input to this problem is a \emph{history} that tracks 
execution of a concurrent data structure, consisting of operations 
performed by multiple client processes
invoking the data structure concurrently.

\myparagraph{Operations}
A history consists of operations accessing a concurrent data structure.
Each operation is a tuple of the form 
$o = \tuple{id, p, x, m, v_{\argAttr}, v_{\retAttr}, t_{\invTimeAttr}, t_{\resTimeAttr}}$.
Here, $id$ is a unique identifier for $o$,
$p \in \procs$ denotes the identifier of the process that performs $o$,
$x$ denotes the concurrent object (instance of some ADT $\ADT$) that $o$ accesses,
$m \in \methods_{\ADT}$ denotes the method of the ADT $\ADT$ (we assume $\methods_\ADT$ is finite),
$v_{\argAttr}, v_{\retAttr} \in \vals_{\ADT}$ 
denote the argument and return values of the operation $o$, and
$t_{\invTimeAttr}, t_{\resTimeAttr} \in \rats_{\geq 0}$ 
denote the (rational) time corresponding to the invocation and 
response of the operation $o$;
we require that $t_{\invTimeAttr} < t_{\resTimeAttr}$. 
We will use $\procOf{o}$, $\methodOf{o}$, $\argOf{o}$, 
$\returnOf{o}$, 
$\invTimeOf{o}$ and $\resTimeOf{o}$ 
to denote respectively the process $p$, method $m$, value $v_{\argAttr}$,
value $v_{\retAttr}$, 
invocation time $t_{\invTimeAttr}$ and response time $t_{\resTimeAttr}$ of operation $o$. 
When clear from context, we will drop the identifier $id$.
Since linearizability is a local property~\cite{Wing1990}, it suffices
to only talk about histories and operations of a single object, in which
case we will also drop the object identifier $x$.

\myparagraph{Concurrent and Sequential Histories}
A \emph{concurrent history} $\hist$ is a finite collection of operations. 
For instance, the history $\hist_{\sf stack} = \set{o_1 = \tuple{p_1, \push, \val{3}, \vok, 1, 4}, o_2 = \tuple{p_2, \pop, \_ , \val{3}, 3, 6}, o_3=\tuple{p_1, \pop, \_, \vfail, 5, 7}}$ of a stack object
comprises of three operations $o_1$, $o_2$ and $o_3$.
$o_1$ pushes the value $\val{3}$ in the invocation/response interval $[1,4]$.
$o_2$ is a pop operation  by process $p_2$ that returns value $\val{3}$ in the interval $[3,6]$. 
Finally, $o_3$ is a pop operation
by process $p_1$ in the interval $[5, 7]$ that failed to execute.
We assume histories are well-formed in that, for each process $p \in \procs_\hist$,
the invocation/response intervals of any two operations are disjoint,
where $\procs_\hist$ is the set of processes of operations in $\hist$.
The size $|\hist|$ of a history $\hist$ denotes the number of operations in $\hist$.
We use $\Times_\hist = \bigcup_{o\in \hist} \set{\invTimeOf{o}, \resTimeOf{o}}$ 
to denote the set of invocation and response times in $\hist$.
A history $\hist$ is said to be \emph{sequential} if all time intervals in it are non-overlapping,
i.e., for every pair $o_1 \neq o_2 \in \hist$, we have
either $\resTimeOf{o_1} < \invTimeOf{o_2}$ or
$\resTimeOf{o_2} < \invTimeOf{o_1}$.
A history $\hist$ naturally induces a strict partial order $\histpo{\hist}$
on its operations such that $o_1 \histpo{\hist} o_2$ iff $\resTimeOf{o_1} < \invTimeOf{o_2}$;
$\histpo{\hist}$ is also popularly referred to as the \emph{happens-before}
order of $\hist$.
Observe that if $\hist$ is sequential, then $\histpo{\hist}$ is a total order.


\myparagraph{Linearizations}
A linearization of a history $\hist$ is an injective map
$\lin : \hist \to \mathbb{Q}{\ge 0}$ assigning each operation $o \in \hist$ a time $\lin(o) \in \rats_{\geq 0}$ 
such that $\invTimeOf{o} < \lin(o) < \resTimeOf{o}$.
In other words, $\lin$ defines a total order on operations consistent with the operation's invocation
and response obligations in the history.
For the stack history $\hist_{\sf stack}$ above, examples of linearizations of $\hist_{\sf stack}$ include
$\lin_1 = {o_1 \mapsto 2.5, o_2 \mapsto 3.5, o_3 \mapsto 6}$ and
$\lin_2 = {o_1 \mapsto 2.75, o_2 \mapsto 2.5, o_3\mapsto 6}$;
only $\lin_1$ agrees with the sequential specification of the (stack) ADT, formally described next.

\myparagraph{Sequential Specifications}
The sequential behavior of an ADT $\ADT$ plays a key role
in defining the linearizability specification for a concurrent implementation
and is, thus, also integral to our work; we formalize it here.
An abstract operation is a pair ${\tt a} = \tuple{m, v_{\argAttr}, v_{\retAttr}}$ 
where $m \in \methods_{\ADT}$ and 
$v_{\argAttr}, v_{\retAttr} \in \vals_\ADT$ (also written $\opr{m}{v_{\argAttr}}{v_{\retAttr}}$).
An abstract sequential history is a finite sequence $\tau = {\tt a}_1 \cdot {\tt a}_2 \cdots {\tt a}_n$
of abstract operations.
In other words, $\tau$ is a word over the (possibly infinite) alphabet 
$\alphabet_{\ADT} = \setpred{\opr{m}{v_{\retAttr}}{v_{\argAttr}}}{m \in \methods_{\ADT}, v_{\argAttr} \in \vals_{\ADT}, v_{\retAttr} \in \vals_{\ADT}}$
of abstract operations.
The sequential specification of an ADT (abstract data type) $\ADT$ with method set $\methods_{\ADT}$ is the prefix-closed set $\spec_{\ADT} \subseteq \alphabet^*_{\ADT}$ of all legal sequential behaviors of $\ADT$.
For example, the sequential specification $\QueueSpec$ of the \queueDS{} ADT contains
$\tau_1 = \opr{\enq}{\val{1}}{\vok} \cdot \opr{\enq}{\val{2}}{\vok} \cdot \opr{\deq}{}{\val{1}} \cdot \opr{\deq}{}{\val{2}}$
but not
$\tau_2 = \opr{\enq}{\val{1}}{\vok} \cdot \opr{\deq}{}{\val{2}} \cdot \opr{\enq}{\val{2}}{\vok}$.
Likewise, the sequential specification $\StackSpec$ of the \stackDS{} ADT contains
$\tau_3 = \opr{\pop}{}{\vfail} \cdot \opr{\push}{\val{3}}{\vok} \cdot \opr{\pop}{}{\val{3}}$
and $\tau_4 = \opr{\push}{\val{3}}{\vok} \cdot \opr{\pop}{}{\val{3}} \cdot \opr{\pop}{}{\vfail}$
but not $\tau_5 = \opr{\push}{\val{3}}{\vok} \cdot \opr{\pop}{}{\vfail} \cdot \opr{\pop}{}{\val{3}}$.

\myparagraph{Linearizability}
Let $\hist$ be a history of an ADT $\ADT$ and let $\lin$ be a linearization of $\hist$ (with $|\hist| = n$).
Let $o_1, o_2, \ldots, o_{n}$ be an enumeration of the operations of $\hist$
in accordance to $\lin$ (i.e., $\lin(o_{i}) < \lin(o_{i+1})$ for every $1 \leq i < n$).
The abstract sequential history induced by $\lin$
is the sequence
$\tau_\lin = \abs{o_1} \cdot \abs{o_2} \cdots \abs{o_n} \in \alphabet^*_{\ADT}$.
Here, for a given (concrete) operation $o = \tuple{id, p, X, m, v_{\argAttr}, v_{\retAttr}, t_{\invTimeAttr}, t_{\resTimeAttr}}$,
we use the notation $\abs{o} = \opr{m}{v_{\argAttr}}{v_{\retAttr}}$ to denote its abstract operation,
obtained by simply forgetting all components of $o$ except for the method identifier and the values of $o$.
A linearization $\lin$ is said to be \emph{legal} 
with respect to a sequential $\spec_{\ADT}$ if $\tau_\lin \in \spec_{\ADT}$; 
in this case we will write $\lin \in \spec_{\ADT}$.

\begin{definition}[Linearizable History]
A concurrent history $\hist$ is said to be linearizable with respect to the
sequential specification $\spec_{\ADT}$ of an ADT $\ADT$
if there is a linearization $\lin$ of $\hist$ such that $\lin \in \spec_{\ADT}$.
\end{definition}

The key focus of our work is the linearizability monitoring problem,
defined in the following, and its computational aspects.

\begin{problem}[Linearizability Monitoring]
Fix an ADT $\ADT$ with sequential specification $\spec_{\ADT}$.
Given a concurrent history $\hist$ as input,
decide whether $\hist$ is linearizable w.r.t.\ $\spec_{\ADT}$.
\end{problem}

\myparagraph{LTSs for sequential specifications}
In this work, we represent sequential specifications using labeled transition systems (LTSs).
Formally, for an ADT $\ADT$ with methods $\methods$ and values $\vals$, its LTS is a triple
${\sf LTS}_\ADT = (S_\ADT, s_0, \rightarrow)$, where
$S_\ADT$ is the set of states, $s_0 \in S_\ADT$ is the initial state, and
$\rightarrow \subseteq S_\ADT \times \alphabet_\ADT \times S_\ADT$
is the labeled transition relation.
We write $s \ltstrans{\opr{m}{v_{\argAttr}}{v_{\retAttr}}} s'$ to denote $(s, m(v), s') \in \rightarrow$.
A run of ${\sf LTS}_\ADT$ is a finite sequence
$\rho = s_0 \ltstrans{\tuple{m^{(1)}, v^{(1)}_{\argAttr}, v^{(1)}_{\retAttr}}} s_1 \cdots s_{k-1} \ltstrans{\tuple{m^{(k)}, v^{(k)}_{\argAttr}, v^{(k)}_{\retAttr}}} s_k$
of steps from the transition relation $\rightarrow$ starting from the initial state $s_0$.
The labeling of $\rho$, written $\lambda(\rho)$, is the sequence
$\tuple{m^{(1)}, v^{(1)}_{\argAttr}, v^{(1)}_{\retAttr}} \cdots \tuple{m^{(k)}, v^{(k)}_{\argAttr}, v^{(k)}_{\retAttr}}$ of abstract operations in $\rho$.
Finally, the sequential specification  of $\ADT$ can then
be described as the set $\spec_{\ADT} = \setpred{\lambda(\rho)}{\rho \text{ is a run of }{\sf LTS}_\ADT}$.
The LTS representation of the sequential specifications
for the ADTs $\stackDS{}$ and $\queueDS{}$ are shown in~\figref{stack-queue-lts}.
for reference.

\begin{figure}[t]
\specBox{
	\noindent\!\!\underline{\bf Stack}: 
	\quad $\methodsStack = \set{\push, \pop, \peek}$, $S_\stackDS = \vals^*$, $s_0 = \epsilon$ \\
	\vspace{-0.1in}
	\noindent
	\begin{align*}
		\begin{array}{cccc}
			\inferrule{s' = s \cdot v}{s \ltstrans{\opr{\push}{v}{\vok}} s'}
			&
			\inferrule{s = s' \cdot v}{s \ltstrans{\opr{\pop}{}{v}} s'}
			&
			\inferrule{\exists s'', s = s' = s'' \cdot v}{s \ltstrans{\opr{\peek}{}{v}} s'}
			&
			\inferrule{s = s' = \epsilon, m \in \set{\peek, \pop}}{s \ltstrans{\opr{m}{}{\vfail}} s'}
		\end{array}
	\end{align*}
	\hrule
    \vspace{0.1in}
	\noindent\!\!\underline{\bf Queue}: 
	\quad $\methodsQueue = \set{\enq, \deq, \peek}$, $S_\queueDS = \vals^*$, $s_0 = \epsilon$ \\
	\vspace{-0.1in}
	\noindent
	\begin{align*}
		\begin{array}{cccc}
			\inferrule{s' = v \cdot s}{s \ltstrans{\opr{\enq}{v}{\vok}} s'}
			&
			\inferrule{s = s' \cdot v}{s \ltstrans{\opr{\deq}{}{v}} s'}
			&
			\inferrule{\exists s'', s = s' = s'' \cdot v}{s \ltstrans{\opr{\peek}{}{v}} s'}
			&
			\inferrule{s = s' = \epsilon, m \in \set{\peek, \deq}}{s \ltstrans{\opr{m}{}{\vfail}} s'}
		\end{array}
	\end{align*}
}
\caption{Sequential specifications $\StackSpec{}$ and $\QueueSpec$ as LTSs.
$\vals$ is a generic value domain with $\set{\vok, \vfail} \cap \vals = \emptyset.$
\figlabel{stack-queue-lts}}
\end{figure}

\section{Linearizability Monitoring as Language Reachability}
\seclabel{lang-reachability}

The algorithms we propose in the paper rely on several key observations,
the most prominent being that linearizability monitoring can essentially be viewed
as a particular graph problem --- language reachability.
Language reachability asks if a given graph has a path labeled
from a fixed formal language, and has previously enjoyed
applications in static analyses~\cite{Conrado2025}.
In the following we recall the formal definition of this problem
and also show how the problem of linearizability monitoring of a concurrent
history reduces to this problem.

\myparagraph{Language reachability} 
Let $\alphabet$ be an alphabet.
Let $G = (N, E)$ be a $\alphabet$-labeled directed graph,
 where $N$ is finite set of nodes and the set $E$ contains edges
of the form $(u, a, v)$ where $u, v \in N, a \in \alphabet$.
For nodes $s, t \in N$, an $(s,t)$-walk of $G$ is a sequence 
$\pi = (u_0, a_1, u_1) \ldots (u_{n-1}, a_n, u_n)$ (with $n \geq 1$) 
such that $u_0 = s$, $u_n = t$, for each $0 \leq i < n$, $(u_i, a_{i+1}, u_{i+1}) \in E$.
Such a walk induces the word $\lbl{\pi} = a_1 \cdot a_2 \cdots a_n$.

\begin{problem}[Language reachability]
Fix a formal language $L \subseteq \alphabet^*$ over the alphabet $\alphabet$.
Given a $\alphabet$-labeled directed graph $G = (N, E)$ and distinguished
nodes $s, t \in N$, the $L$-reachability problem for $(G, s, t)$
asks to determine if there is an $(s, t)$-walk $\pi$ of $G$ whose induced
word satisfies $\lbl{\pi} \in L$.
\end{problem}

\myparagraph{Ideals and frontiers of a history}
Let $\hist$ be a concurrent history of an ADT $\ADT$.
We say that a sub-history $I \subseteq \hist$ is an \emph{ideal} of $\hist$
if the following holds: 
for every $o_1, o_2 \in \hist$ if $o_2 \in I$ and if $\resTimeOf{o_1} < \invTimeOf{o_2}$,
then $o_1 \in I$.
In other words, an ideal is simply a subset of $\hist$ that is downward closed
with respect to the partial order $\histpo{\hist}$.
We will use $\idealsOf{\hist}$ to denote the set of all ideals of $\hist$.
An ideal can be uniquely represented using its \emph{frontier}.
The frontier $f_I$ of an ideal $I$ is the set of its maximal elements (according to $\histpo{\hist}$),
i.e., $f_I = \setpred{o \in I}{\neg (\exists o' \in I, \resTimeOf{o} < \invTimeOf{o'})}$.
We will use $\frontiersOf{\hist} = \setpred{f_I}{I \in \idealsOf{\hist}}$ to represent the set
of all frontiers of $\hist$.
We remark that for two ideals $I_1, I_2 \in \idealsOf{\hist}$,
$f_{I_1} = f_{I_2}$ iff $I_1 = I_2$, and thus there is a bijection between
$\idealsOf{\hist}$ and $\frontiersOf{\hist}$.

\myparagraph{Frontier graph of a history}
Frontier graphs offer a systematic and succinct representation of the
set of all linearizations of a history, and have previously been used
to obtain algorithms for solving consistency problems
for applications like model checking and predictive testing~\cite{Gibbons1997,Shi2025MessagePassing,Mathur2020,tuncc2023optimal,Agarwal2021,chakraborty2024hard,cantin2005complexity,grains2024,farzan2026parametrizingreadsfromequivalencepredictive};
here we show that they also naturally model the linearizability problem.
Let $\hist$ be a history of an ADT $\ADT$.
The \emph{frontier graph} of a history $\hist$ is
a $\alphabet_{\ADT}$-labeled directed graph
 $\fg{\hist} = (\fgnodes{\hist}, \fgedges{\hist})$, where
\begin{itemize}
  \item the set of nodes is simply the set of all ideals, i.e., $\fgnodes{\hist} = \idealsOf{\hist}$, and
  \item the set of edges $\fgedges{\hist} = \setpred{(I_1, \abs{o}, I_2)}{I_1, I_2 \in \fgnodes{\hist}, I_2 = I_1 \uplus \set{o}}$ connects all pairs of ideals $(I_1, I_2)$ such that $I_2$ can be obtained by adding exactly
  one operation, say $o$, to $I_1$, and the label of the edge is precisely the abstract operation $\abs{o}$
  corresponding to $o$. 
\end{itemize}
We use $I^{\init}_{\hist} = \emptyset \in \fgnodes{\hist}$
and $I^{\final}_{\hist} = \hist \in \fgnodes{\hist}$ to denote
the source and destination nodes of the frontier graph respectively.
Some observations about the frontier graph $\fg{\hist}$ of a history $\hist$ are in order.
First, $\fg{\hist}$ is a directed acyclic graph since its edges only
go from an ideal of smaller size to an ideal of a larger size.
Second, its paths capture the set of all linearizations of the history:
\begin{proposition}
Let $\hist$ be a concurrent history.
For every linearization $\lin$ of $\hist$, there is a path $\pi$
in $\fg{\hist}$ starting in $I^{\init}_{\hist}$ and ending in $I^{\final}_{\hist}$
with $\lbl{\pi} = \tau_\lin$.
Further, for every path $\pi$
in $\fg{\hist}$ that starts in $I^{\init}_{\hist}$ and ends in $I^{\final}_{\hist}$, 
there is a linearization $\lin$
of $\hist$ such that $\lbl{\pi} = \tau_\lin$.
\end{proposition}

As a consequence, the linearizability monitoring problem can
be modeled as an instance of language reachability on this graph,
and has been the guiding light for many of the algorithms we propose in this work:

\begin{corollary}
\corlabel{linearizability-monitoring-as-language-reachability}
\protect{\circledsmall{C1}}
Let $\hist$ be a concurrent history of an ADT $\ADT$.
$\hist$ is linearizable against the sequential specification $\spec_{\ADT}$ 
iff $(\fg{\hist}, I^{\init}_{\hist}, I^{\final}_{\hist})$
satisfies $\spec_{\ADT}$-reachability.
\end{corollary}

Even though the above result reduces linearizability monitoring
to language reachability in the frontier graph, this in itself
is not sufficient for coming up with efficient algorithms.
First, language reachability can get intractable even on simple graphs.
Indeed, in cases when even the membership problem in the language $\spec_\ADT$ 
is intractable, the language-reachability problem (and also the linearizability monitoring problem)
is bound to be intractable.
Second, even when the language reachability problem can be solved in polynomial time,
the graph may itself be (exponentially) large.
Fortunately, we show that the size of graph is exponential only in the number
of processes of the history and not in the total number of operations in the history.
This follows because the partial order $\histpo{\hist}$ is an \emph{interval}
order and thus, the number of frontiers (and thus also ideals) is bounded above by an exponential in the width of the order. Finally, the degree of each node is also bounded by the width (number of processes).
Together we have the following:

\begin{lemma}
\lemlabel{frontier-graph-size}
Let $\hist$ be a history of ADT $\ADT$ and let $n = |\hist|$ and $k = |\procs_\hist|$.
Then, for the frontier graph $\fg{\hist} = (\fgnodes{\hist}, \fgedges{\hist})$, we have
$|\fgnodes{\hist}| \leq 1 + n\cdot 2^{k-1}$ and $|\fgedges{\hist}| \leq n\cdot k \cdot 2^{k-1}$.
\end{lemma}

In other words, the size of the frontier graph grows \emph{linearly} when the total
number of processes involved is constant, since the dependence on the this parameter
is purely multiplicative, paving the way for FPT algorithms for monitoring.
We remark that, this is in sharp contrast with the frontier graphs in problems
like consistency testing~\cite{Mathur2020} where the dependence on the number of processes/threads
is not purely multiplicative, but instead takes the form $O(n^{f(k)})$, where $f$
grows at least linearly in its argument.

\section{Linearizability Monitoring for Order-Oblivious Data Types}
\seclabel{oodt-lin}

In this section, we discuss the first class of data structures which
we call order-oblivious data types or OODTs, and their
generalization $\alpha$-OODTs (where $\alpha \in \nats_{> 0}$ is a constant positive natural number),
and show that 
the problem of linearizability monitoring for this class is 
in the complexity class FPT,
where we treat $k$ (the number of threads) as a parameter.


\subsection{Order-Oblivious Data Types}

Informally, \emph{order-oblivious data types} represent ADTs 
for which, the internal abstract state of the object, after having performed
a sequence $\tau$ of $n$ successful operations, can solely be determined
by the set of operations of $\tau$, 
independent of the precise order in which the operations of $\tau$ were performed.
Observe that many common data structures 
share this property --- the final value of a \counterDS{} object depends only on 
the net number of increments and decrements; 
the contents of a \setDS{} (or \multisetDS{}) depend only on which elements were inserted or removed; 
and the state of a \pqueueDS{} is determined solely by the 
multiset of currently stored items, 
independent of the order in which they were added/removed.
We provide a simple mathematical formalization for this class of ADTs in the following.

\myparagraph{State of an ADT}
While the internal state of the actual implementation may be very elaborate,
the abstract state can often be captured by the standard notion of
indistinguishability relation of formal languages~\cite{kozen2007automata}.
The indistinguishability relation $\sim_{\ADT}$ induced by the 
sequential specification $\spec_{\ADT}$ of an ADT $\ADT$ is 
the smallest congruence (under concatenation) 
satisfying the following property: 
for all sequences $\tau_1,\tau_2 \in \alphabet_{\ADT}^*$,
and for every continuation $\tau$,
if it holds that 
$\tau_1\cdot\tau \in \spec_{\ADT}$ iff $\tau_2\cdot\tau \in \spec_{\ADT}$,
then $\tau_1 \sim_{\ADT} \tau_2$.
Intuitively, two sequences are indistinguishable 
when they admit exactly the same set of legal extensions 
according to the sequential specification.
An abstract state of $\ADT$ can then be defined to be an equivalence class of $\sim_{\ADT}$.
We can now state the formal definition of OODTs:

\begin{definition}[Order-Oblivious Data Type]
\protect{\circledsmall{C2}}
An ADT~$\ADT$ with sequential specification~$\spec_{\ADT}$ 
is said to be \emph{order-oblivious} (or an OODT) if, for all sequences 
$\tau_1,\tau_2 \in \spec_{\ADT}$ {\em that are permutations of each other}, 
it holds that $\tau_1 \sim_{\ADT} \tau_2$.
\end{definition}
In other words, for two \underline{valid} abstract sequential histories 
(i.e., both belonging to $\spec_{\ADT}$) of an OODT $\ADT$, if
they only differ in the order of operations they perform, then the state
they end up in is the same.
As we remarked previously, order-obliviousness
is exhibited by several common data structures.
We present some OODTs using their LTS specifications in~\figref{oodt-specs}
and discuss them next.


\begin{figure}[t]
\specBox{
    \!\!\underline{\bf Set}: 
    \quad $\methodsSet = \set{\add, \remove, \contains}$, 
    $S_{\setDS} \subseteq \vals$, 
    $s_0 = \emptyset$ \\
    \vspace{-0.1in}
    \noindent
    \begin{align*}
      \begin{array}{cccccc}
        \inferrule{\begin{aligned}\begin{array}{c}v\notin s \\ s' = s \uplus \set{v}\end{array}\end{aligned}}{s \ltstrans{\opr{\add}{v}{\vok}} s'}
        &
        \!\!\inferrule{v \in s}{s \ltstrans{\opr{\add}{v}{\vfail}} s}
        &
        \!\!\inferrule{\begin{aligned}\begin{array}{c}v \in s \\ s' = s \setminus \set{v}\end{array}\end{aligned}}{s \ltstrans{\opr{\remove}{v}{\vok}} s'}
        &
        \!\!\inferrule{v \notin s}{s \ltstrans{\opr{\remove}{v}{\vfail}} s}
        &
        \!\!\inferrule{v \in s}{s \ltstrans{\opr{\contains}{v}{\vtrue}} s}
        &
        \!\!\inferrule{v \notin s}{s \ltstrans{\opr{\contains}{v}{\vfalse}} s}
      \end{array}
    \end{align*}
    \hrule
    \vspace{0.1in}
    \noindent\!\!\underline{\bf Multiset}: \
    $\methodsMultiset=\set{\add,\remove,\contains}$,\;
    $S_{\multisetDS} = [\vals\to\nats]$,\; $s_0 = \lambda v.\,0$
    \\[-0.1in]
    \begin{align*}
      \begin{array}{ccccc}
        \inferrule{s' = s[s(v)+1/v]}{s \ltstrans{\opr{\add}{v}{\vok}} s'} 
        &
        \inferrule{\begin{aligned}\begin{array}{c}s(v)>0 \\ s' = s[s(v)-1/v]\end{array}\end{aligned}}{s \ltstrans{\opr{\remove}{v}{\vok}} s'} 
        &
        \inferrule{\begin{aligned}\begin{array}{c}s(v)=0 \\ s' = s\end{array}\end{aligned}}{s \ltstrans{\opr{\remove}{v}{\vfail}} s'}
        &
        \inferrule{s(v)>0}{s \ltstrans{\opr{\contains}{v}{\vtrue}} s}
        &
        \inferrule{s(v)=0}{s \ltstrans{\opr{\contains}{v}{\vfalse}} s}
      \end{array}
    \end{align*}
    \hrule
    \vspace{0.1in}
    \!\!\underline{\bf Priority queue}: 
    \quad $\methodsPQueue = \set{\enq, \deq, \peek}$, 
    $S_\pqueueDS = \vals^*$, 
    $s_0 = \epsilon$ \\
    \vspace{-0.1in}
    \noindent
    \begin{align*}
      \begin{array}{ccccccc}
        \inferrule{s' = \mathsf{sort}_{\leq_{\vals}}(v \cdot s)}{s \ltstrans{\opr{\enq}{v}{\vok}} s'}
        &
        \;
        &
        \inferrule{\exists s'', s = s' = s'' \cdot v}{s \ltstrans{\opr{\peek}{}{v}} s'}
        &
        \;
        &
        \inferrule{s = s' \cdot v}{s \ltstrans{\opr{\deq}{}{v}} s'}
        &
        \;
        &
        \inferrule{s = s' = \epsilon, m \in \set{\peek, \deq}}{s \ltstrans{\opr{m}{}{\vfail}} s'}
      \end{array}
    \end{align*}
    \hrule
    \vspace{0.1in}
    \noindent\!\!\underline{\bf Counter}: \
    $\methodsCounter=\set{\inc,\dec,\get}$,\;
    $S_{\counterDS}=\nats$,\; $s_0=\val{0}$
    \\
    [-0.1in]
    \begin{align*}
      \begin{array}{ccccc}
        \inferrule{s' = s+1}{s \ltstrans{\opr{\inc}{}{\vok}} s'}
        &
        \inferrule{s>0 \;\wedge\; s' = s-1}{s \ltstrans{\opr{\dec}{}{\vok}} s'}
        &
        \inferrule{s=0 \;\wedge\; s' = s}{s \ltstrans{\opr{\dec}{}{\vfail}} s'}
        &
        \inferrule{s' = s = v \in \nats}{s \ltstrans{\opr{\get}{}{v}} s'}
      \end{array}
    \end{align*}
    \hrule
    \vspace{0.1in}
    \noindent\!\!\underline{\bf Fetch-and-add register}: \
    $\methodsFAA=\set{\mread,\faa}$,\;
    $S_{\faaDS}=\nats \uplus \set{\bot}$,\; $s_0=0$
    \\[-0.1in]
    \begin{align*}
    \begin{array}{cc}
    \inferrule{s = s' = v}{s \ltstrans{\opr{\mread}{v}{\vok}} s'}
    &
    \inferrule{s' = s + v}{s \ltstrans{\opr{\faa}{v}{s}} s'}
    \end{array}
    \end{align*}
    \hrule
    \vspace{0.1in}
    \noindent\!\!\underline{\bf Counting semaphore with $\ell$ permits)}~\cite{Dijkstra1968Cooperating}: \
    $\methodsSem=\set{\acq,\rel}$,\;
    $S_{\semDS}=\set{0,\dots,\ell}$,\;
    $s_0=\ell$
    \\[-0.1in]
    \begin{align*}
        \begin{array}{cccc}
            \inferrule{s>0,  s' = s-1}{s \ltstrans{\opr{\acq}{}{\vok}} s'}
            &
            \inferrule{s=0,  s' = s}{s \ltstrans{\opr{\acq}{}{\vfail}} s'}
            &
            \inferrule{s<\ell,  s' = s+1}{s \ltstrans{\opr{\rel}{}{\vok}} s'}
            &
            \inferrule{s=\ell, s' = s}{s \ltstrans{\opr{\rel}{}{\vfail}} s'}
        \end{array}
    \end{align*}
    \hrule
    \vspace{0.1in}
    \noindent\!\!\underline{\bf Compare-and-swap register}: \
    $\methodsCAS=\set{\mread,\cas}$,\;
    $S_{\casDS}=\vals \uplus \set{\bot} \setminus \set{\vok, \vfail}$,\; $s_0=\bot$
    \\[-0.1in]
    \begin{align*}
        \begin{array}{cccc}
            \inferrule{s = s' = v}{s \ltstrans{\opr{\mread}{}{v}} s'}
            &
            \inferrule{s = u \;\wedge\; s' = v}{s \ltstrans{\opr{\cas}{\tuple{u,v}}{\vok}} s'}
            &
            \inferrule{s = u \;\wedge\; s' = v}{s \ltstrans{\opr{\cas}{\tuple{u,v}}{\vfail}} s'}
            &
            \inferrule{s \ne u \;\wedge\; s' = s}{s \ltstrans{\opr{\cas}{\tuple{u,v}}{\vfail}} s'}
        \end{array}
    \end{align*}
    \hrule
    \vspace{0.1in}
    \noindent\!\!\underline{\bf Mutex}: \
    $\methodsMutex = \set{\acq,\rel}$,\;
    $S_{\mutexDS} = \procs \uplus \set{\bot}$,\;
    $s_0 = \bot$
    \\[-0.1in]
    \begin{align*}
        \begin{array}{cccc}
            \inferrule{s \in \set{\bot, p},  s' = p}{s \ltstrans{\opr{\acq}{p}{\vok}} s'}
            &
            \inferrule{s \notin \set{\bot, p}, s' = s}{s \ltstrans{\opr{\acq}{p}{\vfail}} s'}
            &
            \inferrule{s = p, s' = \bot}{s \ltstrans{\opr{\rel}{p}{\vok}} s'}
            &
            \inferrule{s \neq p, s' = s}
            {s \ltstrans{\opr{\rel}{p}{\vfail}} s'}
        \end{array}
    \end{align*}
    \hrule
    \vspace{0.1in}
    \noindent\!\!\underline{\bf Size-visible Stack}: 
    \quad $\methodsSizeStack = \set{\push, \pop, \peek}$, $S_\sizeStackDS = (\vals\setminus\set{\vfail})^*$, $s_0 = \epsilon$ \\
    \vspace{-0.1in}
    \noindent
    \begin{align*}
        \begin{array}{cccc}
            \inferrule{s' = s \cdot v, i = |s'|}{s \ltstrans{\opr{\push}{v}{\tuple{\vok, i}}} s'}
            &
            \inferrule{s = s' \cdot v, i = |s'|}{s \ltstrans{\opr{\pop}{}{\tuple{v, i}}} s'}
            &
            \inferrule{\exists s'', s = s' = s'' \cdot v}{s \ltstrans{\opr{\peek}{}{\tuple{v, i}}} s'}
            &
            \inferrule{s = s' = \epsilon, m \in \set{\peek, \pop}}{s \ltstrans{\opr{m}{}{\vfail}} s'}
        \end{array}
    \end{align*}
}
\caption{
Examples of OODTs. $\vals$ is a generic value domain satisfying $\set{\vok, \vfail} \cap \vals = \emptyset.$
\figlabel{oodt-specs}
}
\end{figure}

\myparagraph{Sets, multisets and priority queues}
\setDS{}s, \multisetDS{}s and \pqueueDS{}s are straightforwardly OODTs.
Indeed, for each of them, the collection of operations seen so far completely determines
the abstract state of the ADT.
In the case of a \setDS{}, the state can be determined
by the collection of those values $v$ for which the number of $\opr{\add}{v}{\vok}$ operations
is larger than the number of $\opr{\remove}{v}{\vok}$ operations.  
Likewise, for \multisetDS{}, the state can be determined by looking at the set of operations
seen so far, and collecting those values for which the number of additions is more
than the number of successful deletions.
For the case of \pqueueDS{} as well, it suffices to simply determine the
multiset of values that remain, which is determined solely by the set (and not sequence) of operations seen so far.
It must additionally be noted that, these data structures can further be extended
with following methods, without affecting their OODT status:
$\methodName{\sf max}$ and
$\methodName{\sf min}$ which respectively return the maximum and minimum
elements remaining in the underlying data structure,
or $\opr{\methodName{\sf countIf}}{P}{}$ that returns the number of elements
which satisfy the predicate $P$.

\myparagraph{Counters, fetch-and-add registers, compare-and-swap-registers and counting semaphores}
The abstract state of each of the ADTs \counterDS{}, \faaDS{}, \casDS and \semDS{}
can be represented by an internal counter, which can in turn be determined
by the multi-set of operations that modify this counter successfully.
Here we allow \casDS to fail spuriously as per existing implementations where
a weak exchange is used for better performance.
It must be noted that the return value in each of these ADTs ($\vok$ v/s $\vfail$) 
is crucial for us to be able to make this argument, and in absence of them, some of these ADTs
will cease to be OODTs; the same is also true for $\setDS{}$ and \multisetDS{} ADTs.

\myparagraph{Mutexes}
Let us argue why \mutexDS is an OODT. 
This is because, given a sequential \mutexDS{} history, we know that
at most one $\acq$ may be unmatched in the multiset of successful operations.
Further, the unmatched $\acq$ event also determines the unique owner of the mutex.
This completely determines the abstract state of the $\mutexDS{}$ object.

\myparagraph{Size-visible stacks}
As such, stacks are not OODTs since their state also tracks the precise order
in which values were pushed or popped.
An interesting variant is the \sizeStackDS{} ADT.
This data structure additionally makes visible, the size of the stack
in each operation.
The information of the multiset of such operations (independent of their relative order) 
is enough to recover the contents of the stack as follows.
The element at the $i^\text{th}$ position (from the bottom) of the stack
is the unique (if any) value $v$ for which the number of 
$\opr{\pop}{}{\tuple{v, i}}$ operations is smaller than the number of
$\opr{\push}{v}{\tuple{\vok, i}}$ operations.
For this reason,  \sizeStackDS{} is an OODT.

\begin{proposition}
\protect{\circledsmall{C2}}
Each of the ADTs \setDS{}, \multisetDS{}, \pqueueDS{}, \counterDS{}, \faaDS{}, \casDS{}, \semDS{},
\mutexDS{} and \sizeStackDS{} is an OODT.
\end{proposition}



\subsection{Generalizing OODTs}

Despite its generality, the OODT class still excludes
many ADTs, because their state retains 
limited dependence on the order of operations.
Take for example, the \registerDS{} ADT (\figref{alpha-oodt-specs}),
and two valid abstract sequential histories in $\RegisterSpec{}$:
$\tau_1 = \opr{\mwrite}{\val{2}}{\vok} \cdot \opr{\mwrite}{\val{1}}{\vok}$ and 
$\tau_2 = \opr{\mwrite}{\val{1}}{\vok} \cdot \opr{\mwrite}{\val{2}}{\vok}$.
Observe that, even though $\tau_1, \tau_2$ are both permutations,
we have $\tau_1 \not\sim_{\registerDS} \tau_2$, this is because
$\tau_1 \cdot \opr{\mread}{}{\val{1}} \in \RegisterSpec{}$
but $\tau_2 \cdot \opr{\mread}{}{\val{1}} \notin \RegisterSpec{}$,
making register not an OODT.
Nonetheless, the amount of ordering information retained in the state is extremely 
small --- its abstract state can be summarized by the value written most recently,
and further, can be effectively updated on-the-fly upon observing a future continuation.
This is precisely the motivation behind the generalization we consider here.
Many ADTs require a small finite 
amount of additional ``memory'' to record which of a few order 
configurations occurred, but beyond that, permutations of the same operations remain indistinguishable.
We formalize this with the help of a deterministic finite-state
transition system, as follows:

\begin{definition}[$\alpha$-Order-Oblivious Data Types]
\protect{\circledsmall{C2}}
Let $\alpha \in \nats_{\geq 0}$.
An ADT~$\ADT$ with sequential specification~$\spec_{\ADT}$ is an 
\emph{$\alpha$-Order-Oblivious Data Type} (\emph{$\alpha$-OODT}) 
if there exists a deterministic finite-state transition system
$\mathcal{A}_{\ADT} = (Q_\ADT, \alphabet_{\ADT}, \delta_\ADT, q_\ADT^0)$
with $\lvert Q_\ADT \rvert = \alpha$ such that, for all feasible sequences 
$\tau_1, \tau_2 \in \spec_{\ADT}$, 
if $\tau_1$ and $\tau_2$ are permutations of each other and
$\delta_\ADT(q^0_\ADT, \tau_1) = \delta_\ADT(q^0_\ADT, \tau_2)$, 
then $\tau_1 \sim_{\ADT} \tau_2$.
\end{definition}

In summary, $\alpha$-OODTs are those ADTs whose state is determined
by the set of operations (instead of the full sequence $\tau$) performed so far (as with OODTs), 
together with the control state of the $|\alpha|$ state transition 
system\footnote{Technically, in our setup,
the alphabet $\alphabet_\ADT$ of an $\ADT$ is rightfully allowed to be 
infinite because it mentions values.
In this case, as such, a finite transition system may not suffice to characterize
the class of ADTs like \registerDS where, as we show, the control state can track
the value in the register. 
While we omit this subtlety to keep our presentation crisp, it can nevertheless
also be formalized as follows --- an ADT $\ADT$ over a (finite or infinite) alphabet $\alphabet_\ADT$
falls into the generalized OODT class
if for every finite subset $\Gamma \subseteq_{\sf fin} \alphabet_\ADT$,
the sub-ADT $\spec_\ADT \cap \Gamma^*$ is an $\alpha$-OODT for some $\alpha \in \poly(|\Gamma|)$.
The insistence on $\poly(|\Gamma|)$ is largely to ensure that the running time of 
\algoref{oodt-lin} (\secref{oodt-algo}) is polynomial in the history's size.} 
that the (arbitrarily long) sequence $\tau$ results in.
$\alpha$-OODTs thus generalize OODTs and coincide with them when $\alpha=1$.
Also observe that this definition lends itself to the hierarchy: 
\[\text{OODTs} = 1\text{-OODTs} \subseteq 2\text{-OODTs} \subseteq 3\text{-OODTs} \subseteq  \cdots \]
In the following, we discuss a few representative 
data structures that exhibit finite dependence on operation order;
see \figref{alpha-oodt-specs} for their precise LTS specifications.


\begin{figure}[t!]
\specBox{
    \noindent\!\!\underline{\bf Register}: \
    $\methodsRegisters = \set{\mread, \mwrite}$,\;
    $S_{\registerDS} = \vals \uplus \set{\bot} \setminus \set{\vok}$,\;
    $s_0 = \bot$
    \\[-0.1in]
    \begin{align*}
        \begin{array}{cc}
            \inferrule{s = s' = v}{s \ltstrans{\opr{\mread}{}{v}} s'}
            &
            \inferrule{s' = v}{s \ltstrans{\opr{\mwrite}{v}{\vok}} s'}
        \end{array}
    \end{align*}
    \hrule
    \vspace{0.1in}
    \noindent\!\!\underline{\bf Map}: \
    $\methodsMap=\set{\mapPut, \remove, \get}$,\;
    $S_{\mapDS} = (\keys \rightharpoonup \vals)$, $s_0=\emptyset$
    \\[-0.1in]
    \begin{align*}
      \begin{array}{ccccc}
        \inferrule{s'=s[\key\mapsto v]}{s \ltstrans{\opr{\mapPut}{\tuple{\key,v}}{\vok}} s'}
        &
        \inferrule{\begin{aligned}\begin{array}{c}\key\in\dom(s)\\ s'=s\setminus\set{\key}\end{array}\end{aligned}}{s \ltstrans{\opr{\remove}{\key}{\vok}} s}
        &
        \inferrule{\key\notin\dom(s)}{s \ltstrans{\opr{\remove}{\key}{\vfail}} s}
        &
        \inferrule{\begin{aligned}\begin{array}{c}\key\in\dom(s)\\ v=s(\key)\end{array}\end{aligned}}{s \ltstrans{\opr{\get}{\key}{v}} s}
        &
        \inferrule{\key\notin\dom(s)}{s \ltstrans{\opr{\get}{\key}{\vfail}} s}
      \end{array}
    \end{align*}
}
\caption{Examples of $\alpha$-OODTs (with $\alpha > 1$). $\vals$ is a generic value domain satisfying $\set{\vok, \vfail} \cap \vals = \emptyset.$
\figlabel{alpha-oodt-specs}}
\end{figure}


\myparagraph{Registers and maps}
Consider again the \registerDS{} ADT which supports
$\opr{\mread}{}{v}$ and $\opr{\mwrite}{v}{\vok}$ operations.
As discussed before, this is not an OODT.
Nevertheless, it falls in the more general class above.
When the space of values $\vals$ that the register
can take on is finite, \registerDS{} is an $\alpha_{\registerDS}$-OODT for 
$\alpha_{\registerDS} = |\vals|+1$.
This follows because the state of the register can be 
captured using a simple finite transition system $\mathcal{A}_{\registerDS}$
that tracks the last value written in the sequence of operations seen so far.
More precisely, 
\[
\mathcal{A}_{\registerDS} = 
(Q_\registerDS, \alphabet_{\registerDS}, \delta_\registerDS, q^0_\registerDS),
\]
where $Q_\registerDS = \vals \uplus \set{\bot}$,
$q^0_\registerDS{} = \bot$,
$\delta_\registerDS(v, \opr{\mread}{}{v'}) = v$ and
$\delta_\registerDS(v, \opr{\mwrite}{v'}{\vok}) = v'$.
In the same spirit, a map over finitely many values $\vals$ on a single key $\key \in \keys$ is
an $\alpha_{\mapDS}$-OODT, where $\alpha_{\mapDS} = |\vals|+1$
\footnote{\mapDS{} supporting multiple keys is not truly an $\alpha$-OODT by our characterization;
doing so can risk $\alpha$ to be exponential in the number of keys
(and thus also in the history size).
Thankfully though, linearizability for \mapDS{} is local with respect to its keys.
Henceforth it suffices to assume here that the transition system only associates
with the state of the map projected to a key.}.

\begin{proposition}
\protect{\circledsmall{C2}}
Each of the ADTs \registerDS{} and \mapDS{} (with finite set of values and process IDs)
 is an $\alpha$-OODT, for some finite $\alpha$.
\end{proposition}




\subsection{Frontier Graph Algorithm for Linearizability Monitoring of $\alpha$-OODTs}
\seclabel{oodt-algo}

\myparagraph{Overview}
Recall from \corref{linearizability-monitoring-as-language-reachability} that
the task of linearizability monitoring of a history $\hist$ against a sequential specification
$\spec_{\ADT}$ can be solved by instead solving the $\spec_{\ADT}$-reachability
problem for the frontier graph $\fg{\hist}$.
Here, we show that the class of $\alpha$-OODTs can be monitored
for linearizability efficiently.
This follows because the nodes in the frontier
graph naturally expose part of the abstract state of such an ADT.
In particular, for the class OODTs (i.e., $1$-OODTs), each ideal (i.e., content of a given node)
is a strict refinement of the state of the ADT $\ADT$.
In turn, this means that a simple graph search
algorithm that visits each node in the frontier graph once, 
while determining membership in $\spec_{\ADT}$ at each node, can essentially solve the monitoring problem.
This insight can also systematically be generalized to the larger class
of $\alpha$-OODTs by additionally tracking the state of the automaton.

\begin{algorithm}
\caption{Linearizability monitoring for $\alpha$-OODT $\ADT$}
\algolabel{oodt-lin}
\myproc{$\OODTLin(\hist)$}
{
\Let $\fg{\hist} \gets (\fgnodes{\hist}, \fgedges{\hist})$ be the frontier graph of $\hist$\;
\Let $I^{\init}_{\hist} \gets \emptyset$, $I^{\final}_{\hist} \gets \hist$ \;
\Let $\mathcal{A}_{\ADT} = (Q_{\ADT}, \alphabet_{\ADT}, \delta_{\ADT}, q^0_{\ADT})$ be the finite-state transition system of $\ADT$ \;
\lForEach{$I \in \fgnodes{\hist}$, $q \in Q_{\ADT}$}
{
  $\repr_{I, q} \gets \bot$
}
$\repr_{I^{\init}_{\hist}, q^0_{\ADT}} \gets \varepsilon$ \;
\For{ideals $I \in \fgnodes{\hist}$ in increasing order of $|I|$}
{
  \ForEach{$q \in Q_{\ADT}$ such that $\repr_{I, q} \neq \bot$}
  {
    \For{$(I,\abs{o},I') \in \fgedges{\hist}$}
    {
      \Let $\tau' \gets \repr_{I, q} \cdot \abs{o}$, $q' \gets \delta_\ADT(q,\abs{o})$\;
      \lIf{$\tau' \in \spec_\ADT$ and $\repr_{I', q'} = \bot$}
      {
        $\repr_{I', q'} \gets \tau'$
      }
    }
  }
}
\Return $\big( \exists q \in Q_{\ADT} \cdot (\repr_{I^{\final}_{\hist}, q} \neq \bot)\big)$
}
\end{algorithm}

\myparagraph{Detailed description of the algorithm}
The frontier graph algorithm for linearizability monitoring
of an $\alpha$-OODT $\ADT$ is presented in \algoref{oodt-lin}.
The algorithm proceeds in a layered manner over the frontier graph
$\fg{\hist}$, processing ideals in increasing order of their size.
For each ideal $I$ and state $q \in Q_{\ADT}$, the algorithm maintains
a representative linearization $\tau \in \spec_{\ADT}$, denoted
$\repr_{I,q}$, such that $\tau$ labels some path from the initial ideal
$I^\init_\hist$ to $I$ and satisfies $\delta_\ADT^*(q^0_\ADT, \tau) = q$.
If no such linearization exists, we set $\repr_{I,q} = \bot$.
Initially, $\repr_{I^\init_\hist, q^0_\ADT} = \varepsilon$, and all other
entries are $\bot$.
The algorithm returns $\true$ iff there exists a state
$q \in Q_{\ADT}$ such that $\repr_{I^\final_\hist, q} \neq \bot$,
i.e., iff there exists a feasible linearization of $\hist$.

The algorithm processes ideals $I$ in increasing order of $|I|$.
For each $q \in Q_{\ADT}$ such that $\repr_{I,q} \neq \bot$,
it considers all outgoing edges $(I,\abs{o},I')$ in the frontier graph.
Let $\tau = \repr_{I,q}$. The algorithm attempts to extend $\tau$
by $\abs{o}$, forming $\tau' = \tau \cdot \abs{o}$.
If $\tau' \in \spec_{\ADT}$, then $\tau'$ is a valid linearization
for the successor ideal $I'$. Let $q' = \delta_\ADT(q,\abs{o})$.
If no representative has yet been recorded for $(I',q')$,
the algorithm sets $\repr_{I',q'} \gets \tau'$.
Thus, for each ideal $I$, the algorithm maintains at most one
representative linearization for every state $q \in Q_{\ADT}$.
The correctness of this procedure follows from the definition of
$\alpha$-OODTs: for a fixed ideal $I$, all feasible linearizations
of $I$ are permutations of one another, and hence any two
linearizations that reach the same state $q$ are equivalent
for all future extensions. Therefore, it suffices to retain
a single representative per state.

The time complexity of the algorithm can be determined by observing
that for each ideal $I$, at most $\alpha$ representatives are maintained,
one per state in $Q_{\ADT}$. Each representative is propagated along
every outgoing edge $(I,\abs{o},I')$ at most once, and each propagation
requires a membership check in $\spec_{\ADT}$. Assuming that the function
${\sf mem}_{\ADT}: \nats \to \nats$ bounds the running time for membership
in $\spec_\ADT$, we obtain the following:

\begin{theorem}
  \thmlabel{alpha-OODT-FPT}
  \protect{\circledsmall{C2}}
  Let $\ADT$ be an $\alpha$-OODT.
  Given a history $\hist$ with $n$ operations and $k$ processes, 
  \algoref{oodt-lin} runs in time $O(\alpha{}nk2^k \cdot {\sf mem}_{\ADT}(n))$
  and returns true iff $\hist$ is linearizable.
\end{theorem}

For many data structures in \figref{oodt-specs} and \figref{alpha-oodt-specs},
including \registerDS{} and \pqueueDS{}~\cite{Finkler1999} (interestingly), 
there is a linear time procedure for solving the membership of their specifications. 
This automatically gives us a $O(\alpha{}k2^k \cdot n^2)$ time for their linearizability monitoring 
problem.
Nonetheless, there is obvious room for improvements.
We can see that instead of constructing and verifying an entire linearization $\tau$ 
at each node of the frontier graph, we can perform the membership check
incrementally,  by maintaining a simulated state of the object.
This would, however, often require the operation to be 
backtrackable in an equi-efficient fashion.
For example, since a \pqueueDS{} can be simulated using a self-balancing binary search tree,
where each updates are (also backtrackable in) $O(\log{n})$ time,
we have that the linearizability monitoring for \pqueueDS can be solved in $O(k2^k\cdot n\log{n})$ time.
Similarly, linearizability monitoring for hash-based (multi)sets can be solved in $O(nk2^k)$ time.


\section{Linearizability Monitoring for Context-Free Data Types}
\seclabel{cfl-adt}

In this section, we consider the \stackDS ADT and study its linearizability problem.
We remark that, even with a single value, stacks do not fall into the
OODT class, or for that matter even in  $\alpha$-OODT, no matter what $\alpha$ we pick.
This is because, in order to check the membership $\tau \in \stackDS$, one
requires space that depends upon $|\tau|$.
Nonetheless, we show that linearizability monitoring for $\stackDS$
can be solved in time that is proportional to $O(\poly(n) \cdot c^{k})$
for histories of size $n$ and $k$ processes, i.e., this problem still remains in FPT
where the number of processes is the parameter ($c$ is a fixed constant).
Our algorithm essentially solves an instance of the
CFL-reachability problem for the frontier graph, and for this reason
can be generalized to a wider class of context-free ADTs, which we define next.

\subsection{Context-Free Data Types and their Linearizability}

To cater to the subtlety that ADTs are canonically defined over
a possibly infinite alphabet, while most common presentations (and algorithms thereof)
for CFLs are over a finite alphabet, we define
this class using finite projections:

\begin{definition}[Context-free data type]
\protect{\circledsmall{C3}}
An ADT $\ADT$ with sequential specification $\spec_\ADT$ (over alphabet $\alphabet_\ADT$) is a
context-free data type (CFDT, for short), if for every finite subset $\Gamma \subseteq_{\sf fin} \alphabet_\ADT$,
the language $\spec_\ADT \cap \Gamma^*$ is a context free language.
\end{definition}

Given that the CFL-reachability problem, for a grammar of size $\gamma$
over a graph with $N$ vertices can be solved in time 
$O(\gamma \cdot N^3)$~\cite{CFLReachabilityPavlogianis2023}, we have the following
FPT result:

\begin{theorem}
\thmlabel{CFL-ADT-FPT}
\protect{\circledsmall{C3}}
Let $\ADT$ be a CFDT with sequential specification $\spec_\ADT$ (over alphabet $\alphabet_\ADT$).
The linearizability monitoring problem for a given concurrent history $\hist$
with $n$ operations and $k$ processes can be solved in time $O(g_\hist \cdot n^3 \cdot 2^{3k})$,
where $g_\hist$ is the size of the smallest grammar that describes the CFL $\spec_\ADT \cap \Gamma_\hist^*$,
where $\Gamma_H = \set{\abs{o}}_{o \in \hist}$.
\end{theorem}

For completeness and for facilitating our discussion in later sections,
we provide in \algoref{stack-lin} the frontier graph algorithm for
linearizability monitoring for a context-free ADT $\ADT{}$.

\begin{algorithm}[t]
\caption{Linearizability monitoring for Context-free data type $\ADT$}
\algolabel{stack-lin}
\myproc{$\StackLin(\hist)$}{
\Let $\fg{\hist} \gets (\fgnodes{\hist}, \fgedges{\hist})$ be the frontier graph of $\hist$\;
\Let $\mathcal{C}_{\ADT} = (\nonTerminals{\ADT}, \alphabet_{\ADT, \hist}, \prodRules{\ADT}, S_{\ADT})$ be the CFG for $\spec_{\ADT} \cap \alphabet_{\ADT, \hist}^*$, with $\alphabet_{\ADT, \hist} = \set{\abs{o}}_{o \in \hist}$\;
\Let $\nullableNonTerminals{\ADT} \gets \setpred{A \in \nonTerminals{\ADT}}{A \Rightarrow_{\ADT}^{*} \epsilon}$ \tcp*[h]{nullable non-terminals} \;

\lForEach{$(I, I') \in \fgnodes{\hist} \times \fgnodes{\hist}$}{
   $M(I, I') \gets \emptyset$
}

\tcp{Seed diagonal entries using nullable non-terminals}
\lForEach{$I \in \fgnodes{\hist}$}{
  $M(I,I) \gets \nullableNonTerminals{\ADT}$
}

\tcp{Seed entries corresponding to single frontier-graph edges}
\lForEach{$(I_1, \abs{o}, I_2) \in \fgedges{\hist}$ and $A \prodRules{\ADT} \abs{o}$}{
    $M(I_1, I_2) \gets M(I_1, I_2) \cup \set{A}$
}

\tcp{Dynamic programming over pairs $(I_1,I_2)$ in increasing order of $|I_2|-|I_1|$}
\ForEach{$(I_1, I_2)$ \WHERE $I_1 \subseteq I_2$, in increasing order of difference $|I_2| - |I_1|$}{
  \ForEach{$I_3$ \WHERE $I_1 \subseteq I_3 \subseteq I_2$}{
    $M(I_1, I_2) \gets M(I_1, I_2) \cup 
      \setpred{A}{
        A \rightarrow_\ADT B_1B_2,\;
        B_1 \in M(I_1, I_3),\;
        B_2 \in M(I_3, I_2)
      }$
      \linelabel{cfg-production}
  }
}

\Return $S_{\ADT} \in M(I^{\init}_{\hist}, I^{\final}_{\hist})$
}
\end{algorithm}

\myparagraph{Detailed description of the algorithm}
We work with the grammar $\mathcal{C}_{\ADT} = (\nonTerminals{\ADT}, \alphabet_{\ADT, \hist}, \prodRules{\ADT}, S_{\ADT})$ for $\ADT$, restricted to operations in $\hist$.
We assume $\mathcal{C}_{\ADT}$ is in Chomsky Normal Form, allowing $\epsilon$-derivations.
The algorithm builds a table $M$ where $M(I_1, I_2)$ contains all non-terminals that generate a labeled path between ideals $I_1$ and $I_2$ in the frontier graph.
Thus, $\hist$ is linearizable iff $S_{\ADT} \in M(I^{\init}_{\hist}, I^{\final}_{\hist})$.
Initialization proceeds as follows.
For every ideal $I$, we set $M(I,I)$ to the set of non-terminals deriving $\epsilon$.
For every edge $(I_1, \abs{o}, I_2)$, we add to $M(I_1, I_2)$ all non-terminals generating $\abs{o}$.
We then fill $M$ in increasing order of $|I_2| - |I_1|$.
For each pair $(I_1, I_2)$ and each $I_1 \subseteq I_3 \subseteq I_2$,
we add $A$ to $M(I_1, I_2)$ if $A \rightarrow B_1 B_2$ and
$B_1 \in M(I_1, I_3)$, $B_2 \in M(I_3, I_2)$.
Allowing $I_3 = I_1$ or $I_2$ accounts for $\epsilon$-derivations.
This is analogous to CYK parsing over the acyclic frontier graph,
where any two paths between a pair of ideals are permutations of each other.

\subsection{Optimizing linearizability monitoring for \stackDSUpper{}}

\newcommand{\reducedStSp}{\widehat{\spec}_{\stackDS{}}}

Here, we zoom into the linearizability monitoring
for the \stackDS ADT as defined in \figref{stack-queue-lts}, and discuss fine-grained optimizations
that result from specific properties of its sequential specification $\StackSpec{}$.
Before we proceed, we outline the context free grammar for stacks,
since it will be useful to present our optimizations.
For simplicity, we first only cater to the subset $\reducedStSp \subseteq \StackSpec{}$ 
that only contains sequences that are non-empty, well-matched (i.e., every $\push$ has a corresponding $\pop$)
and do not contain failed operations $\opr{\peek}{}{\vfail}$ and $\opr{\pop}{}{\vfail}$.
We address the full specification later.

\myparagraph{CFG for $\reducedStSp{}$}
Let $\vals$ be a finite subset of values, and let 
 $\alphabet^\vals_{\stackDS} = \setpred{\opr{\push}{v}{\vok}, \opr{\pop}{}{v}, \opr{\peek}{}{v}}{v \in \vals}$.
The CFG for stacks over values is the tuple
$\mathcal{C}^\vals_\stackDS = (\nonTerminals{\stackDS}, \alphabet^\vals_\stackDS, \prodRules{\stackDS}, S_\stackDS)$, where
$\nonTerminals{\stackDS} = \set{S_\varepsilon} \cup \setpred{S_v, S_{\opr{\push}{}{v}}, S_{\opr{\peek}{}{v}}}{v \in \vals}$, is the set of non-terminals, and
$S_\stackDS = S_\varepsilon$ is the starting symbol.
The production rules of $\mathcal{C}^\vals_\stackDS$ are the following (for each $v \in \vals$):

\begin{align*}
\begin{array}{rcl}
S_\varepsilon & \rightarrow_\stackDS & S_{\opr{\push}{}{v}} \, S_v \\
S_\varepsilon & \rightarrow_\stackDS & S_\varepsilon \, S_\varepsilon \\
S_v           & \rightarrow_\stackDS & S_{\opr{\peek}{}{v}} \, S_v \\
S_v           & \rightarrow_\stackDS & S_\varepsilon \, S_v
\end{array}
\qquad
\begin{array}{rcl}
S_v           & \rightarrow_\stackDS & \opr{\pop}{}{v} \\
S_{\opr{\push}{}{v}} & \rightarrow_\stackDS & \opr{\push}{v}{\vok} \\
S_{\opr{\peek}{}{v}} & \rightarrow_\stackDS & \opr{\peek}{}{v}
\end{array}
\end{align*}

Intuitively, the non-terminal $S_v$ (for some $v \in \vals$)
generates those sequences $\tau$ that are legal sequences of operations, assuming 
the internal stack state contains the value $v$ at the top,
i.e., $\opr{\push}{}{v} \cdot \tau \in \reducedStSp{}$.
Likewise, the non-terminal $S_\varepsilon$ generates sequences in $\StackSpec{}$,
i.e., those sequences that are legal stack sequences starting from any stack
configuration (including the empty stack).
The formal statement about the correctness of the above grammar is as follows:
\begin{proposition}
Given a stack history $\hist$, 
we have $\lang{\mathcal{C}^\vals_\stackDS} = \reducedStSp \cap \Gamma_\hist^*$;
here $\Gamma_H = \set{\abs{o}}_{o \in \hist}$, $\vals = \vals_\hist$ and
and $\lang{\mathcal{C}^\vals_\stackDS}$ is the set of sequences accepted
by the grammar $\mathcal{C}^\vals_\stackDS$.
\end{proposition}

Here we can see that for a given stack history $\hist$, $g_\hist = O(|\vals_\hist|)$.
We hence have that $\reducedStSp{}$ reachability for $(\fg{\hist}, I^{\init}_{\hist}, I^{\final}_{\hist})$
can be solved in $O(|\vals_\hist| \cdot |\fgnodes{\hist}|^3)$.

\myparagraph{Removing the $|\vals_\hist|$ factor}
The generic CFDT monitoring algorithm (\algoref{stack-lin})
runs in time $O(|\vals_\hist| \cdot |\fgnodes{\hist}|)$,
where the $|\vals_\hist|$ factor arises at \lineref{cfg-production}
when computing
\[
\setpred{A}{A \rightarrow_\ADT B_1B_2,\; B_1 \in M(I_1,I_3),\; B_2 \in M(I_3,I_2)},
\]
by iterating over $O(|\vals_\hist|)$ non-terminals.
Nonetheless, the grammar $\mathcal{C}^\vals_\stackDS$ is functional:
each $\pi \in \Gamma_{\hist}^*$ is generated by at most one non-terminal.
Thus $|M(I_1,I_2)| \leq 1$ for all ideals $I_1,I_2$.
Moreover, the unique non-terminal (if any) is determined by
the $\push/\pop$ balance in $I_2 \setminus I_1$
and can be precomputed.
Hence the above set is computable in $O(1)$ time,
removing the $|\vals_\hist|$ factor.
\begin{lemma}
  Let $\hist$ be a concurrent \stackDS history. $\reducedStSp$ reachability for $(\fg{\hist}, I^{\init}_{\hist}, I^{\final}_{\hist})$ can be solved in $O(|\fgnodes{\hist}|^3)$.
\end{lemma}


\myparagraph{Sub-Cubic Matrix Multiplication}
It is known that the problem of parsing context-free grammar is solvable via algorithms that runs in sub-cubic time complexity,
thanks to the reduction to boolean matrix multiplication as shown by Valiant~\cite{Valiant1974}.
It has also been shown that CFL-reachability enjoys similar reduction~\cite{koutris2023fine}.
However, a direct application of the above on \stackDS yields a time complexity
poorer than what we previously established.
We show the reduction naturally extends to our case of frontier graph reachability of $\reducedStSp$,
therefore also arriving at a sub-cubic solution.
Note that the derived algorithm serves mostly theoretical interests
and is impractical for implementation due to a large constant factor overhead.
This gives us the following.

\begin{lemma}
  Let $\hist$ be a concurrent \stackDS history.
  $\reducedStSp{}$ reachability for $(\fg{\hist}, I^{\init}_{\hist}, I^{\final}_{\hist})$
  can be solved in $O(|\procs_\hist| \cdot |\fgnodes{\hist}|^\omega)$.
\end{lemma}

\newcommand{\mfunc}{\mathcal{F}}
\newcommand{\rev}{\mathsf{rev}}
\newcommand{\revMethodOf}[1]{\methodAttr^{\rev}(#1)}
\newcommand{\revInvTimeOf}[2]{\invTimeAttr^\rev_{#1}(#2)}
\newcommand{\revResTimeOf}[1]{\resTimeAttr^\rev_{#1}(#1)}

\myparagraph{Handling empty histories, failed operations, and non-well-matchedness}
Let us now address the linearizability monitoring problem for the full specification
$\StackSpec{}$ (as against the reduced specification $\reducedStSp{} \subsetneq \StackSpec{}$ so far in this section).
First, $\epsilon \in \StackSpec{}$ and thus an empty history is always linearizable.

We now turn to failed operations (i.e. those of the form
 $\opr{\peek}{}{\vfail}$ or $\opr{\pop}{}{\vfail}$).
Informally, these operations can be accounted for assuming that
they access a special fresh value $\bot$ that always remains at the bottom of the stack.
Formally, consider the homomorphism on operations given by:
\[
    \mfunc_\bot(o) = 
    \begin{cases}
        o[\methodOf{o} \mapsto \peek, \returnOf{o} \mapsto \bot] & \text{ if } \returnOf{o} = \vfail\\
        \mfunc_\bot(o) = o &  \text{ otherwise}
    \end{cases}
\]
and let $\mfunc_\bot(\hist) = \setpred{\mfunc_\bot(o)}{o\in\hist}$
be the image of the history under the transformation $\mfunc_\bot$.
Now, consider the new history $\hist_\bot = \mfunc_\bot(\hist) \cup \set{\tuple{\_, \push, \bot, \vok, t_{\min{}} - 2, t_{\min{}} - 1}}$ obtained by adding a dummy push operation for the fresh value $\bot$ in the beginning of the history.
Observe that $\hist_\bot$ has no fail operations,
can be computed in $O(|\hist|)$ time and most importantly is equi-linearizable with $\hist$ (see below),
allowing us to assume that there are no fail operations w.l.o.g:
\begin{proposition}
  $\hist$ is linearizable if and only if $\hist_\bot$ is linearizable.
\end{proposition}

We now address the full stack specification, and in particular,
the case when histories do not have equal number of $\push$ and $\pop$
operations (i.e., they cannot be linearized to well-matched stack sequences).
To account for such histories, we effectively make these histories well-matched
by appending a ``mirrored'' copy of the history to itself.
Formally, for an operation $o \in \hist$, consider its dual operation:
\[
    \rev(o) = \tuple{\procOf{o}, \revMethodOf{\methodOf{o}}, \argOf{o}, \returnOf{o}, \revResTimeOf{\hist}{o}, \revInvTimeOf{\hist}{o}}
\]
where $\revMethodOf{\push} = \pop$, $\revMethodOf{\pop} = \push$, $\revMethodOf{\peek} = \peek$,
$\revResTimeOf{\hist}{o} = 2\cdot \mu_\hist + 1 - \invTimeOf{o}$, 
$\revInvTimeOf{\hist}{o} = 2\cdot \mu_\hist + 1 - \resTimeOf{o}$,
with $\mu_\hist = \max\setpred{\resTimeOf{o}}{o \in \hist}$.
Now let $\hist^\rev = \setpred{\rev(o)}{o \in \hist}$ and consider the history
$\hist_\match = \hist \cup \hist^\rev$.
$\hist_\match$ is well matched, can be constructed in $O(\hist)$ time
and most importantly:
\begin{proposition}
 $\hist$ is linearizable if and only if $\hist_\match$ is linearizable.
\end{proposition}

Together, we have the following:

\begin{theorem}
    \thmlabel{stack-FPT}
    \protect{\circledsmall{C3}}
  Given a \stackDS{} history $\hist$ with $n$ operations and $k$ processes,
  linearizability monitoring can be solved in $O(\min\set{2^{3k}\cdot n^3, k^\omega{}2^{\omega{}k} \cdot n^\omega{}})$ time.
\end{theorem}

\section{Linearizability Monitoring for $\queueDSUpper$}
\seclabel{queue-adt}

The $\queueDS{}$ ADT presents a qualitatively different challenge:
its specification is neither order-oblivious nor context-free.
At a high level, $\QueueSpec{}$ is context-sensitive, for which even
membership—and hence language reachability over general graphs—is
undecidable (\secref{queue-lang-reachability}).
Restricting to DAGs does not help: queue reachability remains $\np$-hard.
Thus, tractability does not follow from known results.
Nonetheless, we obtain an FPT algorithm by exploiting the structure of
frontier graphs. The key idea is to reformulate the specification as a
relational rewrite system (\secref{queue-rewrite}) and reduce reachability
to a fixpoint computation.


\subsection{A Rewrite system for $\QueueSpec{}$}
\seclabel{queue-rewrite}


The original formulation of $\QueueSpec{}$ via ${\sf LTS}_{\queueDS{}}$
explicitly tracks the entire queue content and is not amenable to efficient
reachability. We therefore reformulate the specification using relations that
track only limited information about the state.
The key observation is that only the current front of the queue needs to be
tracked precisely; all other values can be handled abstractly. Accordingly,
we view recognition as consuming operations while maintaining a processed
prefix, a remaining suffix, and a distinguished front value (or $\qbot$ if none
is present). The rules below update this front value and allow other operations
to commute past it.

\myparagraph{Relational rewrite system for $\QueueSpec{}$}
To formalize this, we define relations
$R_{v_{\qbot}} \subseteq \alphabet_{\queueDS}^* \times \alphabet_{\queueDS}^*$
for each $v_{\qbot} \in \vals \uplus \set{\qbot}$.
These are the least family $\set{R_{v_{\qbot}}}_{v_{\qbot} \in \vals \uplus \set{\qbot}}$
closed under the following rules;
here $\alpha,\beta \in \alphabet_{\queueDS}^*$, 
$v\in \vals$, $v_{\qbot} \in \vals \uplus \set{\qbot}$ and $v_{\vfail} \in \vals \cup \set{\vfail} $):

{\small
\setlength{\arraycolsep}{2pt}
\begin{align*}
\begin{array}{rcl@{\quad}rcl}
\multicolumn{6}{c}{(\epsilon,\epsilon) \in R_{\qbot}}\\
(\alpha,\epsilon) \in R_{\qbot}
&\!\Rightarrow\!&
(\alpha, \opr{\peek}{}{\vfail}) \in R_{\qbot}
&
(\alpha,\epsilon) \in R_{\qbot}
&\!\Rightarrow\!&
(\alpha, \opr{\deq}{}{\vfail}) \in R_{\qbot} 
\\
(\alpha,\beta) \in R_v
&\!\Rightarrow\!&
(\alpha, \beta{\cdot}\opr{\peek}{}{v}) \in R_v
&
(\alpha,\beta) \in R_v
&\!\Rightarrow\!&
(\alpha, \beta{\cdot}\opr{\deq}{}{v}) \in R_{\qbot}
\\
(\alpha,\beta) \in R_{v_{\qbot}}
&\!\Rightarrow\!&
(\alpha, \beta{\cdot}\opr{\enq}{v}{\vok}) \in R_{v_{\qbot}}
&
(\alpha, \opr{\enq}{v}{\vok}{\cdot}\beta) \in R_{\qbot}
&\!\Rightarrow\!&
(\alpha{\cdot}\opr{\enq}{v}{\vok}, \beta) \in R_v
\\
(\alpha, \opr{\peek}{}{v_{\vfail}}{\cdot}\beta) \in R_{v_{\qbot}}
&\!\Rightarrow\!&
(\alpha{\cdot}\opr{\peek}{}{v_{\vfail}}, \beta) \in R_{v_{\qbot}}
&
(\alpha, \opr{\deq}{}{v_{\vfail}}{\cdot}\beta) \in R_{v_{\qbot}}
&\!\Rightarrow\!&
(\alpha{\cdot}\opr{\deq}{}{v_{\vfail}}, \beta) \in R_{v_{\qbot}}
\end{array}
\end{align*}
}

Informally, when $(\alpha,\beta) \in R_{v_{\qbot}}$, the sequence
$\tau = \alpha \cdot \beta$ is a valid queue execution.
Moreover, among the enqueue operations occurring in $\alpha$,
all of them are matched by dequeue operations in $\beta$
when $v_{\qbot}=\qbot$,
and all but one are matched when $v_{\qbot}=v$.
In the latter case, the unique unmatched enqueue in $\alpha$
has value $v$.
Intuitively, this corresponds to placing a pointer in $\tau$
immediately after the distinguished enqueue operation in $\alpha$.
We can now show the following correspondence:

\begin{restatable}{lemma}{QueueRewriteSystem}
\lemlabel{queue-rewrite-correctness}
$\QueueSpec{} = \bigcup\limits_{v_{\qbot} \in \vals \cup \set{\qbot}} \setpred{\alpha \cdot \beta}{(\alpha, \beta) \in R_{v_{\qbot}}}$
\end{restatable}

\subsection{Linearizability Monitoring for $\queueDS{}$}
\seclabel{queue-lang-reachability}

\myparagraph{$\QueueSpec{}$-reachability on arbitrary (cyclic or acyclic) graphs}
Before turning to frontier graphs, let us recall that queue-language reachability
is already hard in much more general graph classes.
For arbitrary directed graphs, the problem is undecidable.
Indeed, state reachability for queue automata is undecidable, and queue automata
embed directly into our framework: one views each automaton state as a graph node,
and each transition labeled by an enqueue or dequeue operation as a graph edge.
Thus, asking whether a designated automaton state is reachable is a special case
of asking whether there exists a path in the graph whose label belongs to
$\QueueSpec{}$.
On the other hand, if the graph is restricted to be acyclic, the problem becomes
decidable but remains intractable: \cite{schnoebelen2021} shows $\np$-hardness for
reachability in acyclic queue automata, via a reduction that already yields DAGs
of bounded width. Since acyclic queue automata again form a special case of our
setting, it follows that queue-language reachability is $\np$-hard even on bounded-width DAGs.

\myparagraph{Frontier graphs}
Frontier graphs arise from a single partially ordered set of operations:
nodes are ideals and edges add a single operation. Hence, every path
corresponds to a linearization of the same set of operations, and paths
differ only in the ordering of independent operations.
This allows reasoning about paths to be reduced to reasoning about sets
of operations (ideals).
More importantly, explicit representation of nodes as ideals, also allows us to
exploit observations about when linearizations can be reordered to a normal form
by commuting operations against other independent operations.
 We exploit this by lifting the rewrite system
from sequences to ideals, yielding a notion of \emph{queue ideals reachability}.

\begin{definition}[Queue relations over ideals]
\deflabel{queue-ideal-rel}
Let $\hist$ be a concurrent $\queueDS{}$ history.
We define a family of binary relations
$\set{\IdlRel{v_{\qbot}}}_{v_{\qbot} \in \vals_\hist \cup \set{\qbot}}$ (where $\IdlRel{v_{\qbot}} \subseteq \fgnodes{\hist} \times \fgnodes{\hist}$)
as the least family of relations satisfying the following:
for each $I_1, I_2, I_3\in \fgnodes{\hist}$ and
$v \in \vals_\hist$,
$v_{\qbot} \in \vals_\hist \cup \set{\qbot}$,
$v_{\vfail} \in \vals_\hist \cup \set{\vfail}$:
\begin{enumerate}
  \item
  $(I^{\init}_{\hist}, I^{\init}_{\hist}) \in \IdlRel{\qbot}$.

    \item
  $(I_1, I_1) \in \IdlRel{\qbot} \land (I_1, \opr{\deq}{}{\vfail}, I_2) \in E_\hist \implies (I_1, I_2) \in \IdlRel{\qbot}$.

    \item
  $(I_1, I_1) \in \IdlRel{\qbot} \land (I_1, \opr{\peek}{}{\vfail}, I_2) \in E_\hist \implies (I_1, I_2) \in \IdlRel{\qbot}$.

    \item
  $(I_1, I_2) \in \IdlRel{v} \land I_1 \subseteq I_2 \subseteq I_3 \land (I_2, \opr{\peek}{}{v}, I_3) \in E_\hist \implies (I_1, I_3) \in \IdlRel{v}$.

  \item
  $(I_1, I_2) \in \IdlRel{v} \land I_1 \subseteq I_2 \subseteq I_3 \land (I_2, \opr{\deq}{}{v}, I_3) \in E_\hist \implies (I_1, I_3) \in \IdlRel{\qbot}$.

   \item
  $(I_1, I_2) \in \IdlRel{v_{\qbot}}  \land I_1 \subseteq I_2 \subseteq I_3 \land (I_2, \opr{\enq}{v}{\vok}, I_3) \in E_\hist \implies (I_1, I_3) \in \IdlRel{v_{\qbot}}$.

  \item
  $(I_1, I_3) \in \IdlRel{\qbot} \land I_1 \subseteq I_2 \subseteq I_3 \land (I_1, \opr{\enq}{v}{\vok}, I_2) \in E_\hist \implies (I_2, I_3) \in \IdlRel{v}$.

  \item
  $(I_1, I_3) \in \IdlRel{v_{\qbot}} \land I_1 \subseteq I_2 \subseteq I_3 \land (I_1, \opr{\peek}{}{v_{\vfail}}, I_2) \in E_\hist \implies (I_2, I_3) \in \IdlRel{v_{\qbot}}$.

  \item
  $(I_1, I_3) \in \IdlRel{v_{\qbot}}  \land I_1 \subseteq I_2 \subseteq I_3 \land (I_1, \opr{\deq}{}{v_{\vfail}}, I_2) \in E_\hist \implies (I_2, I_3) \in \IdlRel{v_{\qbot}}$.

\end{enumerate}
\end{definition}

The rules above are a direct lifting of the relational specification of 
$\QueueSpec$ from \secref{queue-rewrite} to frontier graphs, 
where concatenation is replaced by extensions of ideals. 
In particular, for $v_{\qbot} \in \vals_\hist \uplus \set{\qbot}$, 
$(I_1, I_2) \in \IdlRel{v_{\qbot}}$ holds iff there exists a linearization 
$\tau \in \QueueSpec$ of $I_2$ whose prefix corresponds to $I_1$, 
such that the $\enq$ operations in the prefix $\tau[:I_1]$ 
are matched by dequeue operations in the suffix $\tau[I_1:I_2]$,
except possibly for the single unmatched front value $v_{\qbot}$ 
(with $v_{\qbot}=\qbot$ indicating that no such value exists).

\begin{restatable}{lemma}{QueueReachability}
\lemlabel{queue-reachability-correctness}
  Let $\hist$ be a concurrent \queueDS history. 
  $(\fg{\hist}, I^{\init}_{\hist}, I^{\final}_{\hist})$ satisfies $\QueueSpec$-reachability
  iff there exist $I' \in \fgnodes{\hist}$ and $v_{\qbot} \in \vals_\hist \uplus \set{\qbot}$
  such that $(I', I^{\final}_{\hist}) \in \IdlRel{v_{\qbot}}$.
\end{restatable}

In \algoref{queue-lin}, we present a saturation algorithm to compute the relations
$\IdlRel{}$. As in the case of $\stackDS{}$, we maintain a table
$M : \fgnodes{\hist} \times \fgnodes{\hist} \to 2^{\vals_\hist \uplus \set{\qbot}}$
such that $M(I_1, I_2) = \setpred{c}{(I_1, I_2) \in \IdlRel{c}}$.
For succinctness, we view the clauses of \defref{queue-ideal-rel} as inducing a transition
relation $(I_1,I_2,c)\queueStep(I_3,I_4,c')$, where one rule derives
$(I_3,I_4)\in \IdlRel{c'}$ from $(I_1,I_2)\in \IdlRel{c}$.
Starting from $M(I^{\init}_{\hist}, I^{\init}_{\hist}) = \set{\qbot}$, the algorithm
performs a worklist-based saturation over $\queueStep$.
By \lemref{queue-reachability-correctness}, $\hist$ is linearizable iff there exist
$I' \in \fgnodes{\hist}$ and $c \in M(I', I^{\final}_{\hist})$.


\begin{algorithm}[t]
\caption{Linearizability monitoring for $\queueDS$}
\algolabel{queue-lin}
\myproc{$\QueueLin(\hist)$}{
\Let $\fg{\hist} \gets (\fgnodes{\hist}, \fgedges{\hist})$ be the frontier graph of $\hist$\;
\lForEach{$(I, I') \in \fgnodes{\hist} \times \fgnodes{\hist}$}{
   $M(I, I') \gets \emptyset$
}
$M(I^{\init}_{\hist}, I^{\init}_{\hist}) \gets \set{\qbot}$;
$W \gets \set{\tuple{I^{\init}_{\hist}, I^{\init}_{\hist}, \qbot}}$\;
\While{$W \neq \emptyset$}{
  remove some $\tuple{I_1,I_2,c}$ from $W$\;
  \ForEach{$\tuple{I_1,I_2,c} \queueStep \tuple{I_3,I_4,c'}$ \WHERE $c' \notin M(I_3,I_4)$}{
    $M(I_3,I_4) \gets M(I_3,I_4) \cup \set{c'}$;
    $W \gets W \cup \set{\tuple{I_3,I_4,c'}}$\;
  }
}

\Return $\exists I' \in \fgnodes{\hist} \text{ such that } M(I', I^{\final}_{\hist}) \neq \emptyset$
}
\end{algorithm}

\begin{corollary}\corlabel{queue-lin-correct}
Given a queue history $\hist$, $\QueueLin{\hist}$ returns $\true$
iff $\hist$ is linearizable.
\end{corollary}

For any fixed pair $(I_1,I_2)$, the set $M(I_1,I_2)$ has size at most $k+1$:
by the characterization above, each non-$\qbot$ value corresponds to a possible
unmatched front value at the cut $I_1$, and distinct such values arise from
different processes. Hence \algoref{queue-lin} processes at most
$O(k \cdot |\fgnodes{\hist}|^2)$ triples, and runs in
$O(k \cdot |\fgnodes{\hist}|^2)$ time.

\begin{theorem}
\thmlabel{queue-FPT}
\protect{\circledsmall{C4}}
Given a \queueDS{} history $\hist$ with $n$ operations and $k$ processes,
\algoref{queue-lin} runs in time $O(k \cdot 2^{2k} \cdot n^2)$ and returns
$\true$ iff $\hist$ is linearizable.
\end{theorem}


\section{Empirical Evaluation}
\seclabel{eval}

As shown by Gibbons et al.~\cite{Gibbons1999Lin,Gibbons2002LinJournal},
checking linearizability is \np-hard for many of the discussed data types.
Prior state-of-the-art tools for general linearizability, such as VeriLin~\cite{Jia2023},
explore a potentially exponential number of linearizations.
This can quickly render naive exploration algorithms infeasible even for modest-sized histories
that require large linearization depths~\cite{ozkan2019checking}.
The main objective of our evaluation is hence to
showcase the improved scaling performance that \fptlin{} has over VeriLin for general linearizability,
while remaining competitive with specialized tools such as LinP and LiMo for unambiguous histories~\cite{lee2025,Emmi2015}.
We generate only unambiguous histories, as required by all the above tools,
with all tools providing sound and complete linearizability results.
These are state-of-the-art tools that have been shown to outperform
other exponential-time enumeration-based algorithms~\cite{lincheck2023,lincheck2024,Lowe2017}.

\myparagraph{Implementation}
We have implemented all our algorithms in our tool \fptlin~\footnote{The source code is available at \href{https://sites.google.com/view/fptlin}{https://sites.google.com/view/fptlin}}.
\fptlin is primarily written in C++ to leverage its high performance and efficiency.
The input to \fptlin is a history specified as a set of operations,
where each operation includes the process that performs it,
along with its method and value attributes.
We use the Scal framework~\cite{Scal2015} to obtain implementations of concurrent data structures,
as well as example client code that exercises them.
Scal ships with a suite of common lock-free implementations of \queueDS and \stackDS.
For our experiments, we use Treiber stack~\cite{thomas1986systems}
and k-relaxed stack~\cite{Henzinger2013}
as linearizable and non-linearizable implementations of \stackDS, respectively.
Likewise, we use Michael-Scott queue~\cite{Michael1996}
and unbounded-size k-FIFO queue~\cite{Kirsch2013}
as linearizable and non-linearizable implementations of \queueDS, respectively.
Scal's suite of implementations, however, does not include any implementation 
of a concurrent \pqueueDS. 
To address this, we implemented a custom lock-based \pqueueDS
along with a simple client based on Scal's producer-consumer routine framework;
the histories generated by this implementation are guaranteed to be linearizable.
We generate histories of varying sizes via a producer-consumer routine,
and run all experiments on Linux 6.19.7-1 CachyOS with a 5.6\,GHz CPU and 15.46\,GB RAM.

\subsection{Scaling with History Size}

We fix the routine to use 5 producers and 5 consumers (5p5c).
We generate 100 histories, each with $N$ successful add/remove operations, where:
\begin{enumerate}
  \item $N \in [10{,}000, 1{,}000{,}000]$ at intervals of $10{,}000$ for queues and priority queues, and
  \item $N \in [50, 5{,}000]$ at intervals of $50$ for stacks.
\end{enumerate}

This corresponds to an average of $\frac{N}{10}$ values added by each producer
and removed by each consumer.
However, by allowing failed $\pop$/$\deq$ operations,
the generated histories may contain slightly more than $N$ operations.
We set a timeout of 100 seconds for the evaluation of each history,
and present the results separately for each data type implementation as follows.

\myparagraph{The \stackDS ADT \protect{\circledsmall{C5}}}
\figref{hist-n-stack} shows that \fptlin maintains a predictable
cubic scaling behavior up to around 5{,}000 operations,
with monitoring completed within 25 seconds.
On the other hand, VeriLin frequently times out for histories with more than 1{,}000 operations.
We implemented an optimized version of \algoref{stack-lin}, following the observation that each entry in the completed production table contains at most one non-terminal due to the grammar’s unambiguity.
We therefore implement each entry as an optional value, avoiding the significantly larger overheads that come with managing sets.
Additionally, this allows us to prune further exploration once a value is found for an entry.

\begin{figure}[t]
\centering

\begin{subfigure}{0.48\textwidth}
\centering
\scalebox{0.9}{
\begin{scatterplot}{operations}{time (s)}
  \addscatter{empirical/hist_n/lin_stack_fptlin.txt}{FPTLin}{color=blue,mark=square}
  \addscatter{empirical/hist_n/lin_stack_linp.txt}{LinP}{color=red,mark=o}
  \addscatter{empirical/hist_n/lin_stack_limo.txt}{LiMo}{color=orange,mark=x}
  \addscatter{empirical/hist_n/lin_stack_verilin.txt}{VeriLin}{color=green,mark=triangle}
\end{scatterplot}
}
\caption{Linearizable \stackDS (treiber)}
\end{subfigure}
\hfill
\begin{subfigure}{0.48\textwidth}
\centering
\scalebox{0.9}{
\begin{scatterplot}{operations}{time (s)}
  \addscatter{empirical/hist_n/nonlin_stack_fptlin.txt}{FPTLin}{color=blue,mark=square}
  \addscatter{empirical/hist_n/nonlin_stack_linp.txt}{LinP}{color=red,mark=o}
  \addscatter{empirical/hist_n/nonlin_stack_limo.txt}{LiMo}{color=orange,mark=x}
  \addscatter{empirical/hist_n/nonlin_stack_verilin.txt}{VeriLin}{color=green,mark=triangle}
\end{scatterplot}
}
\caption{Non-linearizable \stackDS (kstack)}
\end{subfigure}

\caption{\fptlin{}'s scalability w.r.t history size (\stackDS)}
\figlabel{hist-n-stack}
\end{figure}

\begin{figure}[t]
\centering

\begin{subfigure}{0.48\textwidth}
\centering
\scalebox{0.9}{
\begin{scatterplot}{operations}{time (s)}
  \addscatter{empirical/hist_n/lin_queue_fptlin.txt}{FPTLin}{color=blue,mark=square}
  \addscatter{empirical/hist_n/lin_queue_linp.txt}{LinP}{color=red,mark=o}
  \addscatter{empirical/hist_n/lin_queue_limo.txt}{LiMo}{color=orange,mark=x}
  \addscatter{empirical/hist_n/lin_queue_verilin.txt}{VeriLin}{color=green,mark=triangle}
\end{scatterplot}
}
\caption{Linearizable \queueDS (ms)}
\end{subfigure}
\hfill
\begin{subfigure}{0.48\textwidth}
\centering
\scalebox{0.9}{
\begin{scatterplot}{operations}{time (s)}
  \addscatter{empirical/hist_n/nonlin_queue_fptlin.txt}{FPTLin}{color=blue,mark=square}
  \addscatter{empirical/hist_n/nonlin_queue_linp.txt}{LinP}{color=red,mark=o}
  \addscatter{empirical/hist_n/nonlin_queue_limo.txt}{LiMo}{color=orange,mark=x}
  \addscatter{empirical/hist_n/nonlin_queue_verilin.txt}{VeriLin}{color=green,mark=triangle}
\end{scatterplot}
}
\caption{Non-linearizable \queueDS (us-kfifo)}
\end{subfigure}

\caption{\fptlin{}'s scalability w.r.t history size (\queueDS)}
\figlabel{hist-n-queue}
\end{figure}

\begin{wrapfigure}{r}{0.45\textwidth}
\scalebox{0.9}{
\begin{scatterplot}{operations}{time (s)}
  \addscatter{empirical/hist_n/lin_pqueue_fptlin.txt}{FPTLin}{color=blue,mark=square}
  \addscatter{empirical/hist_n/lin_pqueue_linp.txt}{LinP}{color=red,mark=o}
\end{scatterplot}
}
\vspace{-0.2in}
\caption{\fptlin{}'s scalability w.r.t history size (\pqueueDS{})}
\figlabel{hist-n-pqueue}
\end{wrapfigure}

\myparagraph{The \queueDS ADT \protect{\circledsmall{C5}}}
\figref{hist-n-queue} shows that our \queueDS algorithm scales linearly in practice.
This attributes to small gaps between $k$-th invocation and $k$-th response of $\deq$ operations in the histories.
This assumption is also practical to make for most lock-free implementations of the queue data type.
\figref{hist-n-queue} shows that \fptlin{} comfortably handles histories with slightly over 1,000,000 operations, completing monitoring in approximately 4 seconds.
The implemented algorithm is loosely based on \algoref{queue-lin}, involving additional constant factor optimizations.
Similar to the implementation of the algorithm for stack, we implement each entry of the dynamic programming table as an optional value instead of sets to avoid large overheads.
We also observe that given a queue history $\hist$, maintaining two separate frontier graphs for $\proj{\hist}{\enq}$ and $\proj{\hist}{\peek, \deq}$ respectively can help greatly reduce the size of the dynamic programming table.

\myparagraph{The \pqueueDS ADT \protect{\circledsmall{C5}}}
We 
implemented
a framework for order oblivious data types, and provided specializations for set, read-modify-write register, semaphore, and priority queue data types.
We benchmark our algorithm for priority queue since it exhibits the worst time complexity among the implemented specializations.
We implemented a lock-based priority queue to generate histories of guaranteed linearizability.
\figref{hist-n-pqueue} shows that our framework is able to scale up to 1,000,000 operations within 2 seconds,
staying competitive with LinP's runtime.

\subsection{Scalability with Thread Count}

\begin{wrapfigure}{r}{0.5\textwidth}
\vspace{-0.2in}
\scalebox{0.9}{
\begin{scatterplot}{thread count}{time taken per operation ($\mu$s)}
  \addscatter{empirical/hist_p/lin_pqueue_fptlin.txt}{FPTLin}{color=blue,mark=square}
  \addscatter{empirical/hist_p/lin_pqueue_linp.txt}{LinP}{color=red,mark=o}
\end{scatterplot}
}
\vspace{-0.1in}
\caption{\fptlin{}'s scalability w.r.t thread count (\pqueueDS)}
\figlabel{hist-p-pqueue}
\vspace{-0.1in}
\end{wrapfigure}
For all data points,
 we fix the history size to $\approx 1 000 000$ operations for queue and priority queue,
and $\approx 5000$ operations for stack.
\fptlin{} supports a maximum of 64 threads in a history.
Hence, we have 5 runs for 32 data points with configurations ranging from
1p1c to 32p32c with 1p1c increments.

\begin{figure}[t]
\centering

\begin{subfigure}{0.48\textwidth}
\centering
\scalebox{0.9}{
\begin{scatterplot}{thread count}{time taken per operation ($\mu$s)}
  \addscatter{empirical/hist_p/lin_stack_fptlin.txt}{FPTLin}{color=blue,mark=square}
  \addscatter{empirical/hist_p/lin_stack_linp.txt}{LinP}{color=red,mark=o}
  \addscatter{empirical/hist_p/lin_stack_limo.txt}{LiMo}{color=orange,mark=x}
  \addscatter{empirical/hist_p/lin_stack_verilin.txt}{VeriLin}{color=green,mark=triangle}
\end{scatterplot}
}
\caption{Linearizable \stackDS (treiber)}
\end{subfigure}
\hfill
\begin{subfigure}{0.48\textwidth}
\centering
\scalebox{0.9}{
\begin{scatterplot}{thread count}{time taken per operation ($\mu$s)}
  \addscatter{empirical/hist_p/nonlin_stack_fptlin.txt}{FPTLin}{color=blue,mark=square}
  \addscatter{empirical/hist_p/nonlin_stack_linp.txt}{LinP}{color=red,mark=o}
  \addscatter{empirical/hist_p/nonlin_stack_limo.txt}{LiMo}{color=orange,mark=x}
  \addscatter{empirical/hist_p/nonlin_stack_verilin.txt}{VeriLin}{color=green,mark=triangle}
\end{scatterplot}
}
\caption{Non-linearizable \stackDS (kstack)}
\end{subfigure}

\caption{\fptlin{}'s scalability w.r.t thread count (\stackDS)}
\figlabel{hist-p-stack}
\end{figure}

\begin{figure}[t]
\centering

\begin{subfigure}{0.48\textwidth}
\centering
\scalebox{0.9}{
\begin{scatterplot}{thread count}{time taken per operation ($\mu$s)}
  \addscatter{empirical/hist_p/lin_queue_fptlin.txt}{FPTLin}{color=blue,mark=square}
  \addscatter{empirical/hist_p/lin_queue_linp.txt}{LinP}{color=red,mark=o}
  \addscatter{empirical/hist_p/lin_queue_limo.txt}{LiMo}{color=orange,mark=x}
  \addscatter{empirical/hist_p/lin_queue_verilin.txt}{VeriLin}{color=green,mark=triangle}
\end{scatterplot}
}
\caption{Linearizable \queueDS (ms)}
\end{subfigure}
\hfill
\begin{subfigure}{0.48\textwidth}
\centering
\scalebox{0.9}{
\begin{scatterplot}{thread count}{time taken per operation ($\mu$s)}
  \addscatter{empirical/hist_p/nonlin_queue_fptlin.txt}{FPTLin}{color=blue,mark=square}
  \addscatter{empirical/hist_p/nonlin_queue_linp.txt}{LinP}{color=red,mark=o}
  \addscatter{empirical/hist_p/nonlin_queue_limo.txt}{LiMo}{color=orange,mark=x}
  \addscatter{empirical/hist_p/nonlin_queue_verilin.txt}{VeriLin}{color=green,mark=triangle}
\end{scatterplot}
}
\caption{Non-linearizable \queueDS (us-kfifo)}
\end{subfigure}

\caption{\fptlin{}'s scalability w.r.t thread count (\queueDS)}
\figlabel{hist-p-queue}
\end{figure}

\myparagraph{The \stackDS ADT \protect{\circledsmall{C5}}}
\figref{hist-p-stack} shows that \fptlin{} generally performs well with lower thread count,
and maintains a stable runtime as thread count grows for a linearizable implementation of \stackDS,
while VeriLin again often timed out given any configuration.
Although \fptlin{} began timing out for higher thread count for the non-linearizable stack implementation,
which is acceptable for the purpose of verification
as we only require only one result for proof of non-linearizability.

\myparagraph{The \queueDS ADT \protect{\circledsmall{C5}}} 
\figref{hist-p-queue} shows \fptlin{}'s \queueDS algorithm having lower constant factor than that of
the \stackDS algorithm, and better runtime predictability than that of VeriLin.

\myparagraph{The \pqueueDS ADT \protect{\circledsmall{C5}}}
\figref{hist-p-pqueue} shows the shortcoming of \fptlin{} as it is outperformed by LinP
in runtime per operation for \pqueueDS.
\fptlin{} exhibits obvious exponential growth with respect of thread count at the start of the graph.
In practice, being able to process an operation around 100 $\mu$s allows
\fptlin{} to remain a viable option for general linearizability monitoring for OODTs. 


\section{Related Work and Discussion}
\seclabel{related}

\myparagraph{Linearizability and verifying implementations}
Herlihy and Wing introduced linearizability in their seminal work~\cite{Wing1990}.
Since then, a variety of techniques for \emph{verifying} linearizability of
implementations have been developed, e.g., based on simulation, refinement, and
abstraction~\cite{Abdulla2013,Amit2007,Bouajjani2015a}. Bouajjani et
al.~\cite{Bouajjani2013} showed that verification is undecidable in general,
with decidable fragments arising under bounded concurrency~\cite{Alur2000,Hamza2015}
or restrictions on invocation patterns~\cite{vcerny2010model}.
Dongol and Derrick~\cite{DongolDerrickSurvey2015} survey these approaches and
their proof principles, while more recent work develops proof systems designed
for automated or mechanised reasoning about linearizability~\cite{LinearizabilityForward2024,AspectOriented2013}.

\myparagraph{Linearizability monitoring}
Linearizability \emph{monitoring} dates back to early exponential-time
algorithms~\cite{Wing1993}. Gibbons and Korach~\cite{Gibbons1997} established
$\np$-hardness for \registerDS and showed that, under natural restrictions on
histories, monitoring becomes tractable. 
Follow-up work extended these hardness
and tractability results to \stackDS, \queueDS, \pqueueDS, and other 
types~\cite{Gibbons1999Lin,Gibbons2002LinJournal,Emmi2015,bouajjani2018reducing,Enea2017},
often under unambiguity or differentiation assumptions. More recently, log-linear
algorithms have been obtained for \stackDS, \queueDS, \pqueueDS under ambiguity and
for (multi)sets for arbitrary histories~\cite{lee2025,Abdulla2025}. A complementary line of work
builds general-purpose monitors and testing frameworks that explore (potentially
exponential) sets of executions or histories to detect violations~\cite{lincheck2023,lincheck2024,Jia2023,Burckhardt2010,Lowe2017,Zhang2015,Horn2015},
sometimes combined with symbolic reasoning~\cite{relinche2025,Emmi2015} or
bounded-depth heuristics without full correctness guarantees~\cite{ozkan2019checking}.

The work of Lowe~\cite{Lowe2017} comes particularly close to ours.
Lowe proposes a generic graph-search algorithm for linearizability monitoring,
improving on the tree-based search of Wing and Gong~\cite{Wing1993}
by caching previously visited configurations.
Each configuration in Lowe's algorithm consists of a set of operations
that have been linearized so far together with a state of the sequential
specification object.
Since operations are added only when minimal, the set of linearized operations
forms an ideal of the history in our sense.
Thus, Lowe's algorithm can be viewed as exploring a graph whose nodes
consist of pairs of 
(i) a frontier of the history and (ii) a reachable state of the specification object.
Lowe's complexity analysis is expressed in terms of a parameter $B_{p,n}$,
which bounds the number of specification states reachable for a fixed choice
of linearized operations, when the history has $n$ operations and $p$ processes.
In this light, \lemref{frontier-graph-size} provides a complementary,
purely structural bound on the number of possible frontiers of a history.
Combining the two viewpoints, it follows that the size of the search space
explored by Lowe's graph-search algorithm is bounded by
the number of frontiers (as given by \lemref{frontier-graph-size})
multiplied by $B_{p,n}$.
Thus, our lemma can be seen as making explicit the history-dependent component
of Lowe's complexity bound.
This decomposition highlights a clean separation between the combinatorial complexity of the history and the semantic complexity of the ADT.
In particular, while \cite{Lowe2017} derives bounds on $B_{p,n}$ for specific data structures
such as \registerDS{} and \mapDS{} via ad hoc state-counting arguments,
it does not identify a general semantic class of ADTs for which such bounds hold.
In contrast, in \secref{oodt-lin}, we define the class of $\alpha$-OODTs and prove
a uniform FPT bound (\thmref{alpha-OODT-FPT}) for all ADTs in this class.
Another key difference is that $B_{p,n}$ depends on the concrete
representation of the sequential specification object, and may be large
even when the underlying ADT has a simple semantic structure
(for example, due to auxiliary state such as timestamps).
In contrast, our notion of $\alpha$-OODT is defined semantically,
using the indistinguishability relation $\sim_{\ADT}$,
and yields a representation-independent bound.
In particular, $\alpha$-OODTs are not exactly those ADTs for which
$B_{p,n} \le \alpha$, since $B_{p,n}$ counts concrete specification states,
while $\alpha$ bounds the number of semantic equivalence classes.
Finally, our results for stacks and queues in fact, do not follow from this
generic algorithm of~\cite{Lowe2017}.
In fact, for \queueDS{}, the analysis of \cite{Lowe2017}, explicitly observes 
factorial growth in the number of states [``at least $p!^{m/p}$''], 
leading to exponential behavior; the same argument applies to \stackDS{}. 
Our frontier-graph and language-reachability framework instead yields FPT bounds for \stackDS (via context-free reachability) and polynomial-time monitoring for \queueDS on frontier graphs.

\myparagraph{Other works}
Our formulation of monitoring as language reachability on frontier graphs connects
directly to a broader body of work on graph reachability with language
con,straints. Melski and Reps~\cite{Melski1997} gave a dynamic-programming
algorithm for all-pairs context-free reachability, while Azimov et
al.~\cite{Azimov2018} reduced context-free reachability to matrix operations.
Fine-grained complexity results for CFL reachability~\cite{koutris2023fine}
show conditional lower bounds even for restricted grammars, and recent work
extends these ideas to multiple context-free languages~\cite{Conrado2025},
which can likewise be used as sequential specifications in our framework.
More broadly, our work complements works on runtime predictive analysis~\cite{Mathur2020,Kini2017WCP,mathur2021optimal,Mathur2018SHB,OSR2024Shi,TuncDeadlock2023} and consistency checking~\cite{tuncc2023optimal,chakraborty2024hard,Gibbons1997,Shi2025MessagePassing}
for which FPT algorithms have largely remained elusive, except for some recent results~\cite{farzan2026parametrizingreadsfromequivalencepredictive,Ang2025GPrefix}.


\section{Conclusions and Future Work}
\seclabel{conclusions}

We presented a reduction from linearizability monitoring to a graph reachability problem
and developed a fixed-parameter tractable (FPT) algorithm that exploits the tractability of reachability on frontier graphs.
We then investigated three classes of languages (corresponding to data types) for this reachability problem,
each yielding polynomial-time algorithms:
(i) $\alpha$-Order Oblivious Data Types, including \registerDS{}s, counting semaphores, and \pqueueDS{}s;
(ii) Context-Free Data Types, that includes the \stackDS ADT; and
(iii) the \queueDS ADT (treated as a distinct class).
To our knowledge, these constitute the first FPT linearizability 
monitoring algorithms for \stackDS, \queueDS, \pqueueDS,
and a broad family of additional data types.
We implemented our approach in our tool \fptlin
and demonstrated that it scales to million-operation histories
while outperforming existing state-of-the-art general linearizability monitoring tools.
Beyond these results, our work opens up several directions for future research.
An interesting avenue is to generalize our approach to \queueDS to richer concurrent objects such as deques.
Another direction includes algorithmic characterization of when ADTs belong
to the proposed $\alpha$-OODT classes.
Finally, we would like to explore if heuristics such as saturation~\cite{tuncc2023optimal,Shi2025MessagePassing}
improve the performance of the proposed algorithms. 

\section*{Data Availability Statement}
Our tool \fptlin is publicly available at~\cite{toolLink}, and the version used in this work is available at~\cite{artifactLink}.

\begin{acks}
This work is partially supported by the National Research Foundation, Singapore, and Cyber Security Agency of Singapore under its National Cybersecurity R\&D Programme (Fuzz Testing <NRF-NCR25-Fuzz-0001>). Any opinions, findings and conclusions, or recommendations expressed in this material are those of the author(s) and do not reflect the views of National Research Foundation, Singapore, and Cyber Security Agency of Singapore.
\end{acks}

\bibliographystyle{ACM-Reference-Format}
\bibliography{references}

@article{Gibbons1997,
  author  = {Phillip B. Gibbons and Ephraim Korach},
  title   = {Testing Shared Memories},
  journal = {Society for Industrial and Applied Mathematics},
  volume  = {26},
  number  = {4},
  pages   = {1208-1244},
  year    = {1997},
  doi     = {https://doi.org/10.1137/S0097539794279614}
}

@article{Wing1990,
  author  = {Maurice P. Herlihy and Jeannette M. Wing},
  title   = {Linearizability: A Correctness Condition for Concurrent Objects},
  journal = {ACM Transactions on Programming Languages and Systems},
  volume  = {12},
  number  = {3},
  pages   = {463-492},
  year    = {1990},
  doi     = {https://doi.org/10.1145/78969.78972}
}

@inproceedings{Emmi2015,
  title     = {Monitoring refinement via symbolic reasoning},
  author    = {Emmi, Michael and Enea, Constantin and Hamza, Jad},
  booktitle = {Proceedings of the 36th ACM SIGPLAN Conference on Programming Language Design and Implementation},
  pages     = {260--269},
  year      = {2015}
}

@article{Wing1993,
  title     = {Testing and verifying concurrent objects},
  author    = {Wing, Jeannette M. and Gong, Chun},
  journal   = {Journal of Parallel and Distributed Computing},
  volume    = {17},
  number    = {1-2},
  pages     = {164--182},
  year      = {1993},
  publisher = {Elsevier}
}

@inproceedings{Horn2015,
  title        = {Faster linearizability checking via p-compositionality},
  author       = {Horn, Alex and Kroening, Daniel},
  booktitle    = {International Conference on Formal Techniques for Distributed Objects, Components, and Systems},
  pages        = {50--65},
  year         = {2015},
  organization = {Springer}
}

@article{Lowe2017,
  title     = {Testing for linearizability},
  author    = {Lowe, Gavin},
  journal   = {Concurrency and Computation: Practice and Experience},
  volume    = {29},
  number    = {4},
  pages     = {e3928},
  year      = {2017},
  publisher = {Wiley Online Library}
}

@inproceedings{Bouajjani2013,
  title        = {Verifying concurrent programs against sequential specifications},
  author       = {Bouajjani, Ahmed and Emmi, Michael and Enea, Constantin and Hamza, Jad},
  booktitle    = {Programming Languages and Systems: 22nd European Symposium on Programming, ESOP 2013, Held as Part of the European Joint Conferences on Theory and Practice of Software, ETAPS 2013, Rome, Italy, March 16-24, 2013. Proceedings 22},
  pages        = {290--309},
  year         = {2013},
  organization = {Springer}
}

@inproceedings{Scal2015,
  title        = {Scal: A benchmarking suite for concurrent data structures},
  author       = {Haas, Andreas and H{\"u}tter, Thomas and Kirsch, Christoph M and Lippautz, Michael and Preishuber, Mario and Sokolova, Ana},
  booktitle    = {Networked Systems: Third International Conference, NETYS 2015, Agadir, Morocco, May 13-15, 2015, Revised Selected Papers 3},
  pages        = {1--14},
  year         = {2015},
  organization = {Springer}
}

@inproceedings{Jia2023,
  title        = {VeriLin: A Linearizability Checker for Large-Scale Concurrent Objects},
  author       = {Jia, Qiaowen and Lv, Yi and Wu, Peng and Zhan, Bohua and Hao, Jifeng and Ye, Hong and Wang, Chao},
  booktitle    = {International Symposium on Theoretical Aspects of Software Engineering},
  pages        = {202--220},
  year         = {2023},
  organization = {Springer}
}

@inproceedings{Gibbons1999Lin,
  author = {Gibbons, Phillip B. and Bruno, John L. and Phillips, Steven},
  title = {Post-mortem black-box correctness tests for basic parallel data structures},
  booktitle = {Proceedings of the Eleventh Annual ACM Symposium on Parallel Algorithms and Architectures},
  year = {1999},
  pages = {44-53},
  publisher = {Association for Computing Machinery},
  doi = {10.1145/305619.305625}
}

@article{Gibbons2002LinJournal,
  author = {Gibbons, Phillip B and Bruno, John L and Phillips, Steven},
  title = {Black-Box Correctness Tests for Basic Parallel Data Structures},
  journal = {Theory of Computing Systems},
  year = {2002},
  volume = {35},
  number = {4},
  pages = {391--432},
  doi = {doi.org/10.1007/s00224-002-1046-6},
  publisher = {Springer}
}

@inproceedings{Finkler1999,
  title={Checking priority queues},
  author={Finkler, Ulrich and Mehlhorn, Kurt},
  booktitle={Proceedings of the tenth annual ACM-SIAM symposium on Discrete algorithms},
  pages={901--902},
  year={1999}
}

@article{Valiant1974,
  title={General context-free recognition in less than cubic time},
  author={Valiant, Leslie},
  year={1974},
  publisher={Carnegie Mellon University}
}

@inproceedings{Alman2025,
  title={More asymmetry yields faster matrix multiplication},
  author={Alman, Josh and Duan, Ran and Williams, Virginia Vassilevska and Xu, Yinzhan and Xu, Zixuan and Zhou, Renfei},
  booktitle={Proceedings of the 2025 Annual ACM-SIAM Symposium on Discrete Algorithms (SODA)},
  pages={2005--2039},
  year={2025},
  organization={SIAM}
}

@book{thomas1986systems,
  author = {Thomas J. Watson IBM Research Center and Treiber, R.K.},
  title = {Systems Programming: Coping with Parallelism},
  year = {1986},
  publisher = {International Business Machines Incorporated, Thomas J. Watson Research Center},
  url = {https://books.google.se/books?id=YQg3HAAACAAJ}
}

@article{koutris2023fine,
  title={The fine-grained complexity of CFL reachability},
  author={Koutris, Paraschos and Deep, Shaleen},
  journal={Proceedings of the ACM on Programming Languages},
  volume={7},
  number={POPL},
  pages={1713--1739},
  year={2023},
  publisher={ACM New York, NY, USA}
}

@book{kozen2007automata,
  title={Automata and computability},
  author={Kozen, Dexter C},
  year={2007},
  publisher={Springer Science \& Business Media}
}

@InProceedings{schnoebelen2021,
  author =	{Schnoebelen, Philippe},
  title =	{{On Flat Lossy Channel Machines}},
  booktitle =	{29th EACSL Annual Conference on Computer Science Logic (CSL 2021)},
  pages =	{37:1--37:22},
  series =	{Leibniz International Proceedings in Informatics (LIPIcs)},
  ISBN =	{978-3-95977-175-7},
  ISSN =	{1868-8969},
  year =	{2021},
  volume =	{183},
  editor =	{Baier, Christel and Goubault-Larrecq, Jean},
  publisher =	{Schloss Dagstuhl -- Leibniz-Zentrum f{\"u}r Informatik},
  address =	{Dagstuhl, Germany},
  URL =		{https://drops.dagstuhl.de/entities/document/10.4230/LIPIcs.CSL.2021.37},
  URN =		{urn:nbn:de:0030-drops-134712},
  doi =		{10.4230/LIPIcs.CSL.2021.37},
  annote =	{Keywords: Infinite state systems, Automated verification, Flat systems, Lossy channels}
}

@article{lee2025,
  title={Efficient Decrease-and-Conquer Linearizability Monitoring},
  author={Lee, Zheng Han and Mathur, Umang},
  journal={Proceedings of the ACM on Programming Languages},
  volume={9},
  number={OOPSLA2},
  pages={2030--2057},
  year={2025},
  publisher={ACM New York, NY, USA}
}

@article{Abdulla2025,
  title={Efficient Linearizability Monitoring},
  author={Abdulla, Parosh Aziz and Grahn, Samuel and Jonsson, Bengt and Krishna, Shankaranarayanan and Mishra, Om Swostik},
  journal={Proceedings of the ACM on Programming Languages},
  volume={9},
  number={PLDI},
  pages={1937--1960},
  year={2025},
  publisher={ACM New York, NY, USA}
}

@InProceedings{Enea2017,
  author = {Bouajjani, Ahmed and Enea, Constantin and Wang, Chao},
  title = {{Checking Linearizability of Concurrent Priority Queues}},
  booktitle = {28th International Conference on Concurrency Theory (CONCUR 2017)},
  year = {2017},
  volume = {85},
  pages = {16:1--16:16},
  publisher = {Schloss Dagstuhl -- Leibniz-Zentrum f{\"u}r Informatik},
  address =	{Dagstuhl, Germany},
  URL =		{https://drops.dagstuhl.de/entities/document/10.4230/LIPIcs.CONCUR.2017.16},
  URN =		{urn:nbn:de:0030-drops-78079},
  doi =		{10.4230/LIPIcs.CONCUR.2017.16},
  annote =	{Keywords: Concurrency, Linearizability, Model Checking},
  series = {Leibniz International Proceedings in Informatics (LIPIcs)},
  editor = {Meyer, Roland and Nestmann, Uwe}
}

@inproceedings{ozkan2019checking,
  author = {Ozkan, Burcu Kulahcioglu and Majumdar, Rupak and Niksic, Filip},
  title = {Checking linearizability using hitting families},
  booktitle = {Proceedings of the 24th Symposium on Principles and Practice of Parallel Programming},
  year = {2019},
  pages = {366-377},
  publisher = {Association for Computing Machinery},
  doi = {10.1145/3293883.3295726}
}

@inproceedings{Azimov2018,
author = {Azimov, Rustam and Grigorev, Semyon},
title = {Context-free path querying by matrix multiplication},
year = {2018},
isbn = {9781450356954},
publisher = {Association for Computing Machinery},
address = {New York, NY, USA},
url = {https://doi.org/10.1145/3210259.3210264},
doi = {10.1145/3210259.3210264},
booktitle = {Proceedings of the 1st ACM SIGMOD Joint International Workshop on Graph Data Management Experiences \& Systems (GRADES) and Network Data Analytics (NDA)},
articleno = {5},
numpages = {10},
keywords = {GPGPU, context-free grammar, context-free path querying, graph databases, matrix multiplication, transitive closure},
location = {Houston, Texas},
series = {GRADES-NDA '18}
}

@inproceedings{Michael1996,
  author = {Michael, Maged M. and Scott, Michael L.},
  title = {Simple, fast, and practical non-blocking and blocking concurrent queue algorithms},
  booktitle = {Proceedings of the Fifteenth Annual ACM Symposium on Principles of Distributed Computing},
  year = {1996},
  pages = {267-275},
  publisher = {Association for Computing Machinery},
  doi = {10.1145/248052.248106}
}

@inproceedings{Henzinger2013,
  author = {Henzinger, Thomas A. and Kirsch, Christoph M. and Payer, Hannes and Sezgin, Ali and Sokolova, Ana},
  title = {Quantitative relaxation of concurrent data structures},
  booktitle = {Proceedings of the 40th Annual ACM SIGPLAN-SIGACT Symposium on Principles of Programming Languages},
  year = {2013},
  pages = {317-328},
  publisher = {Association for Computing Machinery},
  doi = {10.1145/2429069.2429109}
}

@inproceedings{Kirsch2013,
  author = {Kirsch, Christoph M and Lippautz, Michael and Payer, Hannes},
  title = {Fast and scalable, lock-free k-FIFO queues},
  booktitle = {Parallel Computing Technologies: 12th International Conference, PaCT 2013, St. Petersburg, Russia, September 30-October 4, 2013. Proceedings 12},
  year = {2013},
  pages = {208--223}
}

@inproceedings{Burckhardt2010,
  author = {Burckhardt, Sebastian and Dern, Chris and Musuvathi, Madanlal and Tan, Roy},
  title = {Line-up: a complete and automatic linearizability checker},
  booktitle = {Proceedings of the 31st ACM SIGPLAN Conference on Programming Language Design and Implementation},
  year = {2010},
  pages = {330-340},
  publisher = {Association for Computing Machinery},
  doi = {10.1145/1806596.1806634}
}

@inproceedings{Zhang2015,
  author = {Zhang, Shao Jie},
  title = {Scalable automatic linearizability checking},
  booktitle = {Proceedings of the 33rd International Conference on Software Engineering},
  year = {2011},
  pages = {1185-1187},
  publisher = {Association for Computing Machinery},
  doi = {10.1145/1985793.1986037}
}

@inproceedings{Bouajjani2015a,
  author = {Bouajjani, Ahmed and Emmi, Michael and Enea, Constantin and Hamza, Jad},
  title = {Tractable refinement checking for concurrent objects},
  booktitle = {Proceedings of the 42nd Annual ACM SIGPLAN-SIGACT Symposium on Principles of Programming Languages},
  year = {2015},
  pages = {651--662}
}

@inproceedings{Alur2000,
  author = {Alur, R. and McMillan, K. and Peled, D.},
  title = {Model-checking of correctness conditions for concurrent objects},
  booktitle = {Proceedings 11th Annual IEEE Symposium on Logic in Computer Science},
  year = {1996},
  volume = {},
  number = {},
  pages = {219-228},
  doi = {10.1109/LICS.1996.561322}}

@inproceedings{Hamza2015,
  title={On the complexity of linearizability},
  author={Hamza, Jad},
  booktitle={International Conference on Networked Systems},
  pages={308--321},
  year={2015},
  doi={10.1007/s00607-018-0596-7},
  organization={Springer}
}

@article{cantin2005complexity,
  author = {Cantin, J.F. and Lipasti, M.H. and Smith, J.E.},
  title = {The complexity of verifying memory coherence and consistency},
  journal = {IEEE Transactions on Parallel and Distributed Systems},
  year = {2005},
  volume = {16},
  number = {7},
  pages = {663-671},
  doi = {10.1109/TPDS.2005.86}}

@inproceedings{vcerny2010model,
  author = {\v{C}ern\'{y}, Pavol and Radhakrishna, Arjun and Zufferey, Damien and Chaudhuri, Swarat and Alur, Rajeev},
  title = {Model checking of linearizability of concurrent list implementations},
  booktitle = {Proceedings of the 22nd International Conference on Computer Aided Verification},
  year = {2010},
  pages = {465-479},
  publisher = {Springer-Verlag},
  series = {CAV'10},
  doi = {10.1007/978-3-642-14295-6_41}
}

@InProceedings{bouajjani2018reducing,
  author = {Bouajjani, Ahmed
and Emmi, Michael
and Enea, Constantin
and Hamza, Jad},
  title = {On Reducing Linearizability to State Reachability},
  booktitle = {Automata, Languages, and Programming},
  year = {2015},
  pages = {95--107},
  publisher = {Springer Berlin Heidelberg},
  doi = {10.1007/978-3-662-47666-6_8},
  editor = {Halld{\'o}rsson, Magn{\'u}s M.
and Iwama, Kazuo
and Kobayashi, Naoki
and Speckmann, Bettina}
}

@article{chakraborty2024hard,
  author = {Chakraborty, Soham and Krishna, Shankara Narayanan and Mathur, Umang and Pavlogiannis, Andreas},
  title = {How Hard Is Weak-Memory Testing?},
  journal = {Proc. ACM Program. Lang.},
  year = {2024},
  volume = {8},
  number = {POPL},
  publisher = {Association for Computing Machinery},
  doi = {10.1145/3632908}
}

@article{tuncc2023optimal,
  author = {Tun\c{c}, H\"{u}nkar Can and Abdulla, Parosh Aziz and Chakraborty, Soham and Krishna, Shankaranarayanan and Mathur, Umang and Pavlogiannis, Andreas},
    title = {Optimal Reads-From Consistency Checking for C11-Style Memory Models},
    year = {2023},
    issue_date = {June 2023},
    publisher = {Association for Computing Machinery},
    address = {New York, NY, USA},
    volume = {7},
    number = {PLDI},
    url = {https://doi.org/10.1145/3591251},
    doi = {10.1145/3591251},
    journal = {Proc. ACM Program. Lang.},
    month = jun,
    articleno = {137},
    numpages = {25},
    keywords = {complexity, concurrency, weak memory models}
    }

@article{TuncDeadlock2023,
  author = {Tun\c{c}, H\"{u}nkar Can and Mathur, Umang and Pavlogiannis, Andreas and Viswanathan, Mahesh},
  title = {Sound Dynamic Deadlock Prediction in Linear Time},
  journal = {Proc. ACM Program. Lang.},
  year = {2023},
  volume = {7},
  number = {PLDI},
  publisher = {Association for Computing Machinery},
  doi = {10.1145/3591291}
}

@inproceedings{Agarwal2021,
  author = {Agarwal, Pratyush and Chatterjee, Krishnendu and Pathak, Shreya and Pavlogiannis, Andreas and Toman, Viktor},
  title = {{Stateless Model Checking Under a Reads-Value-From Equivalence}},
  booktitle = {Computer Aided Verification},
  year = {2021},
  pages = {341--366},
  publisher = {Springer International Publishing},
  editor = {Silva, Alexandra and Leino, K. Rustan M.},
  doi = {10.1007/978-3-030-81685-8_16}
}

@article{mathur2021optimal,
  author = {Mathur, Umang and Pavlogiannis, Andreas and Viswanathan, Mahesh},
  title = {Optimal prediction of synchronization-preserving races},
  journal = {Proc. ACM Program. Lang.},
  year = {2021},
  volume = {5},
  number = {POPL},
  publisher = {Association for Computing Machinery},
  doi = {10.1145/3434317}
}

@inproceedings{Mathur2020,
  author = {Mathur, Umang and Pavlogiannis, Andreas and Viswanathan, Mahesh},
  title = {The {{Complexity}} of {{Dynamic Data Race Prediction}}},
  booktitle = {Proceedings of the 35th {{Annual ACM}}/{{IEEE Symposium}} on {{Logic}} in {{Computer Science}}},
  year = {2020},
  pages = {713--727},
  doi = {10.1145/3373718.3394783},
  publisher = {ACM}
}

@article{grains2024,
  author = {Farzan, Azadeh and Mathur, Umang},
  title = {Coarser Equivalences for Causal Concurrency},
  journal = {Proc. ACM Program. Lang.},
  year = {2024},
  volume = {8},
  number = {POPL},
  publisher = {Association for Computing Machinery},
  doi = {10.1145/3632873}
}

@inproceedings{lincheck2023,
  author = {Koval, Nikita and Fedorov, Alexander and Sokolova, Maria and Tsitelov, Dmitry and Alistarh, Dan},
  title = {Lincheck: A Practical Framework for Testing Concurrent Data Structures on {JVM}},
  booktitle = {Computer Aided Verification: 35th International Conference, CAV 2023, Paris, France, July 17-22, 2023, Proceedings, Part I},
  year = {2023},
  pages = {156-169},
  publisher = {Springer-Verlag},
  doi = {10.1007/978-3-031-37706-8_8}
}

@article{relinche2025,
  author = {Golovin, Pavel and Kokologiannakis, Michalis and Vafeiadis, Viktor},
  title = {RELINCHE: Automatically Checking Linearizability under Relaxed Memory Consistency},
  year = {2025},
  volume = {9},
  number = {POPL},
  publisher = {Association for Computing Machinery},
  doi = {10.1145/3704906}
}

@inproceedings{lincheck2024,
  author = {Potapov, Aleksandr and Zuev, Maksim and Moiseenko, Evgenii and Koval, Nikita},
  title = {Testing Concurrent Algorithms on {JVM} with Lincheck and IntelliJ IDEA},
  booktitle = {Proceedings of the 33rd ACM SIGSOFT International Symposium on Software Testing and Analysis},
  year = {2024},
  pages = {1821-1825},
  publisher = {Association for Computing Machinery},
  doi = {10.1145/3650212.3685301}
}

@article{DongolDerrickSurvey2015,
  author = {Dongol, Brijesh and Derrick, John},
  title = {Verifying Linearisability: A Comparative Survey},
  journal = {ACM Comput. Surv.},
  year = {2015},
  volume = {48},
  number = {2},
  publisher = {Association for Computing Machinery},
  doi = {10.1145/2796550}
}

@article{LinearizabilityForward2024,
  author = {Jayanti, Prasad and Jayanti, Siddhartha and Yavuz, Ugur Y. and Hernandez, Lizzie},
  title = {A Universal, Sound, and Complete Forward Reasoning Technique for Machine-Verified Proofs of Linearizability},
  journal = {Proc. ACM Program. Lang.},
  year = {2024},
  volume = {8},
  number = {POPL},
  publisher = {Association for Computing Machinery},
  doi = {10.1145/3632924}
}

@inproceedings{AspectOriented2013,
  author = {Henzinger, Thomas A. and Sezgin, Ali and Vafeiadis, Viktor},
  title = {Aspect-Oriented linearizability proofs},
  booktitle = {Proceedings of the 24th International Conference on Concurrency Theory},
  year = {2013},
  pages = {242-256},
  publisher = {Springer-Verlag},
  series = {CONCUR'13},
  doi = {10.1007/978-3-642-40184-8_18}
}

@inproceedings{OSR2024Shi,
  author = {Shi, Zheng and Mathur, Umang and Pavlogiannis, Andreas},
  title = {Optimistic Prediction of Synchronization-Reversal Data Races},
  booktitle = {Proceedings of the IEEE/ACM 46th International Conference on Software Engineering},
  year = {2024},
  publisher = {Association for Computing Machinery},
  doi = {10.1145/3597503.3639099}
}

@article{Shi2025MessagePassing,
author = {Shi, Zheng and M\o{}ldrup, Lasse and Mathur, Umang and Pavlogiannis, Andreas},
title = {The Complexity of Testing Message-Passing Concurrency},
year = {2026},
issue_date = {January 2026},
publisher = {Association for Computing Machinery},
address = {New York, NY, USA},
volume = {10},
number = {POPL},
url = {https://doi.org/10.1145/3776643},
doi = {10.1145/3776643},
journal = {Proc. ACM Program. Lang.},
month = jan,
articleno = {1},
numpages = {32},
keywords = {Golang, channels, concurrency, consistency, message passing, testing}
}

@misc{Ang2025GPrefix,
  author = {Zhendong Ang and Azadeh Farzan and Umang Mathur},
  title = {Enhanced Data Race Prediction Through Modular Reasoning},
  year = {2025},
  url = {https://arxiv.org/abs/2504.10813},
  eprint = {2504.10813},
  archiveprefix = {arXiv},
  primaryclass = {cs.PL}
}

@inproceedings{Kini2017WCP,
author = {Kini, Dileep and Mathur, Umang and Viswanathan, Mahesh},
title = {Dynamic race prediction in linear time},
year = {2017},
isbn = {9781450349888},
publisher = {Association for Computing Machinery},
address = {New York, NY, USA},
url = {https://doi.org/10.1145/3062341.3062374},
doi = {10.1145/3062341.3062374},
booktitle = {Proceedings of the 38th ACM SIGPLAN Conference on Programming Language Design and Implementation},
pages = {157–170},
numpages = {14},
keywords = {Concurrency, Online Algorithm, Race Prediction},
location = {Barcelona, Spain},
series = {PLDI 2017}
}

@inproceedings{Melski1997,
author = {Melski, David and Reps, Thomas},
title = {Interconvertbility of set constraints and context-free language reachability},
year = {1997},
isbn = {0897919173},
publisher = {Association for Computing Machinery},
address = {New York, NY, USA},
url = {https://doi.org/10.1145/258993.259006},
doi = {10.1145/258993.259006},
booktitle = {Proceedings of the 1997 ACM SIGPLAN Symposium on Partial Evaluation and Semantics-Based Program Manipulation},
pages = {74-89},
numpages = {16},
location = {Amsterdam, The Netherlands},
series = {PEPM '97}
}

@article{Conrado2025,
author = {Conrado, Giovanna Kobus and Kjelstr\o{}m, Adam Husted and van de Pol, Jaco and Pavlogiannis, Andreas},
title = {Program Analysis via Multiple Context Free Language Reachability},
year = {2025},
issue_date = {January 2025},
publisher = {Association for Computing Machinery},
address = {New York, NY, USA},
volume = {9},
number = {POPL},
url = {https://doi.org/10.1145/3704854},
doi = {10.1145/3704854},
journal = {Proc. ACM Program. Lang.},
month = jan,
articleno = {18},
numpages = {30},
keywords = {CFL reachability, Dyck reachability, context-sensitive languages, static analysis}
}

@inproceedings{Abdulla2013,
author = {Abdulla, Parosh Aziz and Haziza, Fr\'{e}d\'{e}ric and Hol\'{\i}k, Luk\'{a}\v{s} and Jonsson, Bengt and Rezine, Ahmed},
title = {An integrated specification and verification technique for highly concurrent data structures},
year = {2013},
isbn = {9783642367410},
publisher = {Springer-Verlag},
address = {Berlin, Heidelberg},
url = {https://doi.org/10.1007/978-3-642-36742-7_23},
doi = {10.1007/978-3-642-36742-7_23},
booktitle = {Proceedings of the 19th International Conference on Tools and Algorithms for the Construction and Analysis of Systems},
pages = {324-338},
numpages = {15},
location = {Rome, Italy},
series = {TACAS'13}
}

@inproceedings{Amit2007,
author = {Amit, Daphna and Rinetzky, Noam and Reps, Thomas and Sagiv, Mooly and Yahav, Eran},
title = {Comparison under abstraction for verifying linearizability},
year = {2007},
isbn = {9783540733676},
publisher = {Springer-Verlag},
address = {Berlin, Heidelberg},
booktitle = {Proceedings of the 19th International Conference on Computer Aided Verification},
pages = {477-490},
numpages = {14},
location = {Berlin, Germany},
series = {CAV'07}
}

@Inbook{Dijkstra1968Cooperating,
  author="Dijkstra, Edsger W.",
  editor="Hansen, Per Brinch",
  title="Cooperating Sequential Processes",
  bookTitle="The Origin of Concurrent Programming: From Semaphores to Remote Procedure Calls",
  year="2002",
  publisher="Springer New York",
  address="New York, NY",
  pages="65--138",
  isbn="978-1-4757-3472-0",
  doi="10.1007/978-1-4757-3472-0_2",
  url="https://doi.org/10.1007/978-1-4757-3472-0_2"
}

@article{CFLReachabilityPavlogianis2023,
  author = {Pavlogiannis, Andreas},
  title = {CFL/Dyck Reachability: An Algorithmic Perspective},
  year = {2023},
  issue_date = {October 2022},
  publisher = {Association for Computing Machinery},
  address = {New York, NY, USA},
  volume = {9},
  number = {4},
  url = {https://doi-org.libproxy1.nus.edu.sg/10.1145/3583660.3583664},
  doi = {10.1145/3583660.3583664},
  abstract = {CFL/Dyck reachability is a simple graph-theoretic problem: given a CFL/Dyck language L over an alphabet Σ, a graph G = (V, E) of Σ-labeled edges, and two distinguished nodes s, t ∈ V, does there exist a path from s to t that spells out a word in L? This simple notion of language-based graph reachability serves as the algorithmic formulation of a large number of problems in diverse domains, such as graph databases and program static analysis. This paper takes an algorithmic perspective on CFL/Dyck reachability, and overviews several recent advances concerning the decidability and complexity of the problem and some its close variants, as realized in the areas of automata theory and program verification.},
  journal = {ACM SIGLOG News},
  month = feb,
  pages = {5–25},
  numpages = {21}
}

@article{Mathur2018SHB,
author = {Mathur, Umang and Kini, Dileep and Viswanathan, Mahesh},
title = {What happens-after the first race? enhancing the predictive power of happens-before based dynamic race detection},
year = {2018},
issue_date = {November 2018},
publisher = {Association for Computing Machinery},
address = {New York, NY, USA},
volume = {2},
number = {OOPSLA},
url = {https://doi.org/10.1145/3276515},
doi = {10.1145/3276515},
journal = {Proc. ACM Program. Lang.},
month = oct,
articleno = {145},
numpages = {29},
keywords = {Soundness, Race Detection, Happens-Before, Dynamic Program Analysis, Concurrency}
}

@misc{farzan2026parametrizingreadsfromequivalencepredictive,
      title={Parametrizing Reads-From Equivalence for Predictive Monitoring}, 
      author={Azadeh Farzan and Umang Mathur},
      year={2026},
      eprint={2604.06533},
      archivePrefix={arXiv},
      primaryClass={cs.PL},
      url={https://arxiv.org/abs/2604.06533}, 
}

@software{toolLink,
  author = {Lee, Zheng Han and Mathur, Umang},
  title = {FPTLin},
  year = {2026},
  url = {https://github.com/focs-lab/fptlin},
  note = {Accessed: 2026-04-13}
}

@software{artifactLink,
  author       = {Lee, Zheng Han and
                  Mathur, Umang},
  title        = {Artifact for "Fixed Parameter Tractable
                   Linearizability Monitoring"
                  },
  month        = apr,
  year         = 2026,
  publisher    = {Zenodo},
  doi          = {10.5281/zenodo.19472780},
  url          = {https://doi.org/10.5281/zenodo.19472780},
  swhid        = {swh:1:dir:4d0248c44407049132b8aede8eb20904273f409d
                   ;origin=https://doi.org/10.5281/zenodo.19073706;vi
                   sit=swh:1:snp:c3df7b4ac9b44cccb049efeed7ac7109a556
                   3ccd;anchor=swh:1:rel:686ddfe4889dc86c472a86dcf33b
                   fe6fe4dd0140;path=/
                  },
}


\appendix

\pagebreak

\section{Sub-cubic time stack reachability}
\applabel{stack-subcubic}

\myparagraph{One-step Matrix}
Given a stack history $\hist$,
we define an initial $|\fgnodes{\hist}| \times |\fgnodes{\hist}|$ \emph{one-step matrix} of $\hist$, $M$,
indexed by ideals of $\hist$, where each entry is a set of non-terminals,
by initializing entries $M(I_1, I_2)$ where $(I_1, \abs{o}, I_2) \in\fgedges{\hist}$ with corresponding non-terminals,
and empty set otherwise.

\myparagraph{Reducing Linearizability to Transitive Closure}
We also define a non-commutative multiplication $\otimes$ between sets of nonterminals $X_1$ and $X_2$ where $X_1 \otimes X_2 = \setpred{A}{A \rightarrow_\stackDS B_1B_2, B_1 \in X_1, B_2 \in X_2}$. In a similar fashion, we define the same multiplication for matrix to $M''' = M' \otimes M''$ to be analagous to numerical matrix multiplication but with numerical multiplication substituted by $\otimes$ between sets of nonterminals, and accumulation by union. That is,
\[
M'''(I_1, I_2) = \bigcup_{I_3 \in \idealsOf{\hist}}{M'(I_1, I_3) \otimes M''(I_3, I_2)}
\]
The transitive closure of $M$ can be defined as:
\[
M^+ = M^{(1)} \cup M^{(2)} \cup \dots
\]
where
\[
M^{(i)} = \bigcup^{i-1}_{j=1}{M^{(j)} \otimes M^{(i-j)}}
\]
Notice that $M^+$ is exactly the resultant recognition table of \algoref{stack-lin}.
The construction of $M$ is not more expensive than $O(|\fgnodes{\hist}|^2)$.
We denote here $\streach{N, k}$ and $\transclo{N, k}$ as the time complexity of \stackDS reachability and transitive closure respectively
for a given frontier graph of $N$ vertices and $k$ processes.
Since the reduction involves the construction of $M$, we have the following:

\begin{lemma}\lemlabel{l-t-reduce}
  $\streach{N, k} = \transclo{N, k} + O(N^2)$.
\end{lemma}

\myparagraph{Reducing Transitive Closure to Multiplication}
Reader may verify that arranging index in increasing ideal size naturally enforces $M$ to be an upper triangular matrix.
For which, we can trivially apply Valiant's results on reducing transitive closure to matrix multiplication defined above.
We denote $\matmul{N, k}$ as the time complexity of matrix multiplication given matrix size $N$.

\begin{lemma}\lemlabel{t-mm-reduce}
  $\transclo{N, k} = \matmul{N, k} \cdot O(\log{N})$.
\end{lemma}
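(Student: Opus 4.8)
Proof proposal for Lemma~\ref{lem:t-mm-reduce} ($T_\stackDS(N) \leq MM_\stackDS(N) \cdot O(\log N)$).

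The plan is to invoke Valiant's divide-and-conquer reduction from transitive closure to Boolean (here, $\otimes$-semiring) matrix multiplication, verified to carry over to our setting. First I would observe that once we substitute the concrete values $\vals_\hist$ of the stack history into the parameterized grammar $G_\stackDS$, we obtain an ordinary context-free grammar in Chomsky Normal Form over the finite alphabet $\Sigma_\stackDS$ restricted to those values; the set of non-terminals becomes finite (size $O(|\vals_\hist|)$, in fact $O(k)$ for the relevant productions, as noted after Algorithm~\ref{app:stack-lin}). Crucially, the $\otimes$ operation on sets of non-terminals is exactly the product operation Valiant uses: it is associative up to the accumulation by union (i.e. $(A\otimes B)$ distributes over union in each argument), which is all the reduction requires — full associativity of $\otimes$ itself is not needed, only that the matrix product $\otimes$ together with $\cup$ forms a structure on which Valiant's splitting identity $M^{(i)} = \bigcup_{j=1}^{i-1} M^{(j)} \otimes M^{(i-j)}$ holds, and this is immediate from the definition of $M^{(i)}$ given in the excerpt.

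The main argument proceeds by the standard recursive partition. Order the state partitions so that the index set is split into two halves; since the one-step matrix $M$ is supported only on pairs $S_1 \subset S_2$ (it is "upper triangular" with respect to the partial order, and we may linearly extend that order to make it strictly upper triangular), the transitive closure $M^+$ restricted to the top-left block and to the bottom-right block can be computed recursively on inputs of size $N/2$ each, and then the off-diagonal block is obtained by a bounded number of $\otimes$-matrix multiplications of $N/2 \times N/2$ matrices together with the already-computed closures of the diagonal blocks. Writing $C(N)$ for the cost of transitive closure, this yields a recurrence of the shape $C(N) \leq 2\,C(N/2) + O(1)\cdot MM_\stackDS(N/2) + O(N^2)$, which, under the (standard and harmless) assumption that $MM_\stackDS$ is super-additive / grows at least quadratically, solves to $C(N) = O(MM_\stackDS(N) \log N)$. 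I would then note $T_\stackDS(N) \leq C(N)$ by definition, giving the claim.

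The step I expect to require the most care is checking that Valiant's reduction genuinely applies despite $\otimes$ being \emph{non-commutative} and not literally associative. Valiant's original proof is stated for Boolean matrices (the commutative idempotent semiring), so I would need to confirm — by re-examining the splitting identity and the order in which sub-blocks are combined — that the reduction never silently reorders factors or regroups an $\otimes$-product in a way that changes the result. Since the grammar is in CNF and $M^{(i)}$ is defined with the left factor ranging over shorter derivations on the prefix and the right factor over the suffix, the natural left-to-right split respects the non-commutative order, and idempotence of $\cup$ handles the overlap between the $M^{(j)}$ for different $j$; so I expect this to go through, but it is the point where a naive citation of Valiant would be insufficient and an explicit check is warranted. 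A secondary, more routine obstacle is pinning down the exact constant blow-up in the number of $\otimes$-multiplications per recursion level (Valiant's analysis needs a careful accounting here), but this only affects the hidden constant in $O(\log N)$ and not the stated bound.
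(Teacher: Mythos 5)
Your proposal is correct and takes essentially the same route as the paper, whose entire proof of this lemma is a citation to Valiant's reduction; you simply spell out the divide-and-conquer recurrence and the (legitimate) worry about non-commutativity/non-associativity of $\otimes$, which Valiant's argument is specifically designed to accommodate. The only caveat is that your two-block sketch slightly understates the work needed to complete the off-diagonal block under a non-associative product (Valiant's completion step is itself recursive rather than a constant number of multiplications), but you correctly defer that accounting to the cited reduction and it does not affect the stated $MM_\stackDS(N)\cdot O(\log N)$ bound.
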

\begin{proof}
  See~\cite{Valiant1974}.
\end{proof}

\myparagraph{Reducing Multiplication to Boolean Multiplication}
As per Valiant's reduction, a matrix multiplication for a given context-free grammar $G$ is simulated using no more than $|G|^2$ boolean matrix multiplications.
As for our case of monitoring linearizability of stack histories,
the size of the grammar grows with the size of the input history, which is undesirable.
Fortunately, there are lots of unnecessary work done from Valiant's reduction that can be optimized away.
Intuitively, we require no more than $k + 1$ non-terminals per ideal to account for a first term in the matrix multiplication,
and no more than $k$ non-terminals per ideal for the second term.
Hence, we can simulate a single matrix multiplication of size $N$ with the multiplication of a boolean matrix of size $2kN + 2N$.

\begin{restatable}{lemma}{MMReduceBMM}\lemlabel{mm-bmm-reduce}
    $\matmul{N, k} = \boolmatmul{2kN + 2N} + O(kN^2)$.
\end{restatable}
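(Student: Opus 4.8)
The plan is to adapt Valiant's simulation of a structured matrix product over a context-free grammar by a single Boolean matrix multiplication, the twist being to keep the blow-up linear in $k$ (not in $|\vals|$) by exploiting that, at any partition state of $\hist$, only $O(k)$ non-terminals of the value-instantiated stack grammar can ever be relevant. First I would fix, for each partition state $S$, a list $\mathrm{slots}(S)$ of the non-terminals that can occupy a cell incident to $S$ in a binary production of $G_\stackDS$: at most $k+1$ of them when $S$ acts as the source of a first (left) operand, and at most $k+1$ when it acts as the source of a second (right) operand. Both bounds are ones already available: the first is the stated count of $k+1$ relevant first-operand non-terminals per partition state, and the second follows from every binary production of $G_\stackDS$ having the form $P \rightarrow_\stackDS L\,T(v)$, together with the fact, used in the complexity analysis of \algoref{stack-lin} (where \lineref{stack-loop-4} is shown to run $O(k)$ times), that only $O(k)$ values $v$ are in play at any partition state since a derivation from $T(v)$ must end in $\opr{\pop}{v}$. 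Hence $\sum_{S} |\mathrm{slots}(S)| \le (2k+2)N = 2kN + 2N$.

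Next I would encode the operand matrices $M'$ and $M''$ as Boolean matrices $B'$ and $B''$ of dimension at most $2kN + 2N$, indexed by $\{(S,\sigma) : \sigma \in \mathrm{slots}(S)\}$, so that the ordinary Boolean product $B := B' \cdot B''$ reproduces $M' \otimes M''$ after a cheap read-off. The summed index of this product ranges over pairs $(S_3, \sigma)$, and the key design point is that in productions (1) and (2) the left non-terminal ($Push(v)$ or $Peek(v)$) and the right non-terminal $T(v)$ carry the same parameter $v$, so the single slot for $T(v)$ at $S_3$ can serve as the common coordinate that joins both rules at once, while production (3)'s parameter-free left non-terminal $T(\varepsilon)$ gets its own dedicated slot at every state. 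I would then check that with this layout $B$ has a $1$ at $((S_1,\tau),(S_2,\tau'))$ exactly when $\tau' \in (M' \otimes M'')[S_1, S_2]$, so that one Boolean multiplication suffices, rather than the constant-size family of products used in the generic Valiant reduction. Finally, translating $M', M''$ into $B', B''$ and decoding $B$ back touches only the $O(k)$ relevant non-terminals of each of the $N^2$ cell-pairs, so the encode/decode overhead is $O(kN^2)$; adding the single multiplication yields $MM_\stackDS(N) \le BMM(2kN + 2N) + O(kN^2)$.

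The step I expect to be the main obstacle is pinning down the Boolean layout so that it is simultaneously sound (no spurious non-terminal ever appears in $B$) and complete (every join the $\otimes$-definition would make is detected), all within a single multiplication. The delicate case is production (3), whose left non-terminal $T(\varepsilon)$ does not carry the parameter $v$; I expect to accommodate it by broadcasting the $T(\varepsilon)$ bit across the at most $k$ value-slots of the middle state during encoding — still $O(k)$ work per cell — and the part needing care is verifying that this broadcast does not manufacture illegitimate joins. A secondary point worth stating explicitly is why only $O(k)$ non-terminals per partition state are relevant at all: this is precisely where the interval partial-order structure of the input history is used, and it is the reason the Boolean matrix has size $\Theta(kN)$ rather than the useless $\Theta(|\vals|\,N)$ that a naive application of Valiant's reduction to the value-instantiated grammar would produce.
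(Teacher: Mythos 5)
Your overall strategy is the paper's: use the fact that only $O(k)$ value-instantiated non-terminals are relevant per partition state to shrink Valiant's Boolean encoding from size $\Theta(|\vals|\,N)$ to $(2k+2)N$, perform one Boolean multiplication, and pay $O(kN^2)$ to encode and decode. The accounting also matches. The gap is in \emph{where} you attach the $T(v)$ slots. The $O(k)$ bound on values $v$ with $T(v)$ in an entry is a \emph{per-destination} bound: a derivation from $T(v)$ ends with $\opr{\pop}{v}$, so $T(v) \in M(S_3, S_2)$ forces $\opr{\pop}{v}$ to be an operation $o$ with $S_2 \setminus \set{o} \in \partstate_\hist$, of which there are at most $k$ for a \emph{fixed ending state} $S_2$ (this is \lemref{tran-clo-card}(2), and it is also why \lineref{stack-loop-4} of \algoref{stack-lin} runs $O(k)$ times --- there $S_2$ is fixed). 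Your layout instead attaches the $T(v)$ slot to the middle state $S_3$, the source of the right operand, and uses $(S_3, T(v))$ as the summed coordinate. For $B''$ to have only $O(k)$ rows per $S_3$ you would need a per-source bound on $\setpred{v}{\exists S_2,\ T(v)\in M''[S_3,S_2]}$, which is neither stated nor obviously true: a derivation $T(v) \rightarrow_\stackDS T(\varepsilon)\,T(v)$ may begin at $S_3$ with a fully matched block $\opr{\push}{v'}\cdots\opr{\pop}{v'}$, so the value $v$ of the eventual final pop is not constrained by the operations adjacent to $S_3$. This is precisely the production-(3) case you flag as delicate, but broadcasting the $T(\varepsilon)$ bit of the \emph{left} factor across the value slots of $S_3$ does not repair it: the unfixable side is the \emph{right} factor, whose entry $T(v) \in M''[S_3,S_2]$ may have no slot $(S_3,T(v))$ to occupy, so the production-(3) join is missed --- and creating slots for all such $v$ destroys the $O(kN)$ size bound.

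The repair is the layout the paper actually uses: keep the $N$ summed indices as \emph{plain} partition states $S_3$, attach the at most $k+1$ left-operand slots $Push(v), Peek(v), T(\varepsilon)$ to the row index $\tuple{S_1, L}$ and the at most $k$ right-operand slots $T(v)$ to the column index $\tuple{S_2, R}$ --- where the per-destination bound genuinely applies --- and defer the check that $P \rightarrow_\stackDS LR$ is an actual production (including the value match between $L$ and $R$) to the $O(kN^2)$ decode step that reads $M_3(S_1,S_2)$ off the bits $M'_3(\tuple{S_1,L},\tuple{S_2,R})$. A secondary, repairable imprecision: annotating the output column with the produced non-terminal $\tau' = P$ rather than with the right operand $R$ forces $B''$ to over-approximate (it cannot know $P$ without knowing $L$), so your claim that $B$ has a $1$ at $((S_1,\tau),(S_2,\tau'))$ exactly when $\tau' \in (M' \otimes M'')[S_1,S_2]$ fails in the ``only if'' direction unless a decode-time filter discards pairs $(L,P)$ admitting no production; unlike the slot-placement issue, this one is cosmetic.
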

\begin{proof}
  See \appref{stack-proof}.
\end{proof}

Finally, with the combination of \lemref{l-t-reduce}, \lemref{t-mm-reduce} and \lemref{mm-bmm-reduce},
we arrive at an algorithm asymptotically faster than \algoref{stack-lin}~\cite{Alman2025},
assuming $k$ to be constant.

\begin{lemma}
  $\streach{N, k} = O(k^\omega N^\omega)$, where $\omega$ is the boolean matrix multiplication exponent.
\end{lemma}
\begin{proof}
  Assume $k \geq 1$, we have:
  \begin{align*}
    \streach{N, k} &= \transclo{N, k} + O(N^2)\\
    &= \matmul{N, k} \cdot O(\log{N}) + O(N^2)\\
    &= \boolmatmul{2kN + 2N} \cdot O(\log{N}) + O(kN^2) \cdot O(\log{N}) + O(N^2)\\
    &= O((4kN)^\omega) \cdot O(\log{N}) + O(kN^2) \cdot O(\log{N}) + O(N^2)\\
    &= O((kN)^\omega)
  \end{align*}
\end{proof}

\section{Proofs for Stack Section}
\applabel{stack-proof}

\begin{lemma}\lemlabel{tran-clo-card}
  Given a non-empty well-matched stack history $\hist$ with no failed operations.
  Let $M^+$ be the transitive closure of the one-step matrix $M$ of $\hist$.
  For any ideal $I_1 \in \idealsOf{\hist}$:
  \begin{enumerate}
    \item $|\bigcup_{I_2 \in \idealsOf{\hist}}\set{Push(v),Peek(v),T(\varepsilon) \in M^+(I_1, I_2)}| \leq |\procs_\hist| + 1$, and
    \item $|\bigcup_{I_2 \in \idealsOf{\hist}}\set{T(\varepsilon),T(v) \in M^+(I_2, I_1)}| \leq |\procs_\hist| + 1$.
  \end{enumerate}
\end{lemma}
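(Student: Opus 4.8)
The plan is to prove both parts by relating the relevant non-terminals to two \emph{frontier operation sets} of $S_1$,
\[
U(S_1) = \setpred{o \in \hist \setminus S_1}{S_1 \cup \set{o} \in \partstate_\hist}
\quad\text{and}\quad
D(S_1) = \setpred{o \in S_1}{S_1 \setminus \set{o} \in \partstate_\hist},
\]
first establishing $|U(S_1)| \le k$ and $|D(S_1)| \le k$, and then arguing the following. For part~1, every occurrence of $Push(v)$ or $Peek(v)$ in some $M^+(S_1, S_2)$ is witnessed by a distinct operation of $U(S_1)$, so adding the single extra symbol $T(\varepsilon)$ gives the bound $k+1$. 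For part~2, every $T(v)$ with $v \ne \varepsilon$ occurring in some $M^+(S_2, S_1)$ is witnessed by a distinct $\opr{\pop}{v}$-operation of $D(S_1)$, giving the bound $k$.

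For the frontier-set bounds I would use the elementary characterization that $S \subseteq \hist$ is a partition state if and only if $\max_{o \in S}\invTimeOf{o} \le \min_{o \in \hist \setminus S}\resTimeOf{o}$ (with the conventions $\max\emptyset = 0$, $\min\emptyset = \infty$): any witnessing time must lie between these two quantities, and conversely any time in that range witnesses $S$. Writing $\mu = \max_{o \in S_1}\invTimeOf{o}$, for $o \in D(S_1)$ either $\invTimeOf{o} = \mu$, or $\invTimeOf{o} < \mu$, in which case $\max_{o' \in S_1 \setminus \set{o}}\invTimeOf{o'} = \mu$ and the partition-state condition for $S_1 \setminus \set{o}$ forces $\resTimeOf{o} \ge \mu$; in both cases $\invTimeOf{o} \le \mu \le \resTimeOf{o}$, so $D(S_1)$ lies inside the concurrency class of $\mu$, which has size at most $k$. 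A mirror-image argument with $\nu = \min_{o \in \hist \setminus S_1}\resTimeOf{o}$ places $U(S_1)$ inside the concurrency class of $\nu$, giving $|U(S_1)| \le k$.

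For part~1, the only production of $Push(v)$ (resp.\ $Peek(v)$) is the terminal rule $Push(v) \rightarrow_\stackDS \opr{\push}{v}$ (resp.\ $Peek(v) \rightarrow_\stackDS \opr{\peek}{v}$), so $Push(v) \in M^+(S_1, S_2)$ forces $S_2 \setminus S_1$ to be a single $\opr{\push}{v}$-operation $o$, whence $S_2 = S_1 \cup \set{o}$ and $o \in U(S_1)$; likewise for $Peek(v)$. Distinct such non-terminals thus map injectively to distinct operations of $U(S_1)$, and push- and peek-operations occupy disjoint parts of $U(S_1)$, so the number of symbols of the forms $Push(\cdot)$ and $Peek(\cdot)$ appearing in $\bigcup_{S_2} M^+(S_1, S_2)$ is at most $|U(S_1)| \le k$; together with the at most one symbol $T(\varepsilon)$ this yields $k+1$.

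For part~2, with $v \ne \varepsilon$ the productions of $T(v)$ are $T(v) \rightarrow_\stackDS Peek(v)\,T(v)$, $T(v) \rightarrow_\stackDS T(\varepsilon)\,T(v)$, and $T(v) \rightarrow_\stackDS \opr{\pop}{v}$, of which the first two are right-recursive in $T(v)$. By induction on $|S_1| - |S_2|$ I would argue that $T(v) \in M^+(S_2, S_1)$ implies that following the right spine of the derivation down to the terminal $\opr{\pop}{v}$ production exhibits a partition state equal to $S_1 \setminus \set{o}$ with $o = \opr{\pop}{v} \in S_1$, so $o \in D(S_1)$. Distinct values $v$ then give distinct $\pop$-operations of $D(S_1)$, so the number of $T(v)$ with $v \ne \varepsilon$ occurring in $\bigcup_{S_2} M^+(S_2, S_1)$ is at most $|D(S_1)| \le k$. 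The step I expect to be most delicate is this induction — reading off the Chomsky-normal-form shape of $G_\stackDS$ to see that the index reached just before the terminal $\pop$-production along the right spine is exactly $S_1 \setminus \set{o}$ (hence a genuine partition state, rather than merely some subset of $S_1$); the small case analysis used for the frontier-set bounds is the other place requiring care, and the remainder is bookkeeping.
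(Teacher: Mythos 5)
Your proof is correct and follows essentially the same route as the paper's: identify $Push(v)$/$Peek(v)$ with single operations extending $S_1$ and $T(v)$ with derivations ending in a $\opr{\pop}{v}$ that is removable from $S_1$, then bound each collection by $k$. The only difference is one of rigor — you explicitly prove the bounds $|U(S_1)|, |D(S_1)| \leq k$ via the interval characterization of partition states and the concurrency classes at $\mu$ and $\nu$, whereas the paper simply asserts that at most $k$ operations can extend (or be the last element of) a partition state.
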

\begin{proof}
  \begin{enumerate*}
    \item
    Observe that $Push(v)$ and $Peek(v)$ generates to a single abstract operation $\abs{o}$ such that
    $(I_1, \abs{o}, I_2) \in \fgedges{\hist}$.
    There are only be a maximum of $|\procs_\hist|$ such abstract operation.
    Including $T(\varepsilon)$ yields a maximum of $|\procs_\hist|+1$ symbols.
    \item
    $T(v)$ only generates sequences ending with $\opr{\pop}{v}{\vok}$.
    Again, there can be only a maximum of $|\procs_\hist|$ such operations such that
    $(I_3, \opr{\pop}{v}{\vok}, I_1) \in \fgedges{\hist}$ for some ideal $I_3$ and value $v$.
    The conclusion follows.
  \end{enumerate*}
\end{proof}

\MMReduceBMM*
\begin{proof}
  Suppose we want to compute the product of matrices $M_1 \otimes M_2 = M_3$.
  First, we construct a representative boolean matrix $M'_1$ for a given $M_1$.
  Notice that in the production rules, the right non-terminal must be of the form $T(\varepsilon)$ or $T(v)$.
  Consider the entry, $M_1(I_1, I_2)$ where $T(\varepsilon), T(v) \in M_1(I_1, I_2)$.
  We can see that there are at most $k + 1$ such values for any entries ending with $I_2$ by \lemref{tran-clo-card},
  we associate these values with $I_2$.
  The membership of $T(v)$ in $M_1(I_1, I_2)$ is now presented by a new entry $M'_1(I_1, \tuple{I_2, T(v)})$ being set.
  Similarly, notice that the left non-terminal must be of the form $Push(v)$, $Peek(v)$ or $T(\varepsilon)$.
  Again, consider the entry, $M_1(I_3, I_4)$.
  There are at most $k + 1$ such values for any entries starting with $I_3$ by \lemref{tran-clo-card}.
  Hence, the membership of any non-terminal $T \in M_1(I_3, I_4)$ is now represented by a new entry $M'_1(\tuple{I_3, T}, I_4)$ being set.
  By enforcing the presence of all possible indexes for both dimensions,
  we get a boolean matrix of size $(N + N(2k+1)) = (2kN + 2N)$.
  After constructing $M'_2$ from $M_2$ the same way, we compute $M'_1 \otimes M'_2 = M'_3$ using the black-box BMM procedure.
  We can now see that $M_3$ can be efficiently constructed from $M'_3$ by iterating through each entry of $M_3$,
  and set $M_3(I_1, I_2) = \bigcup\setpred{P}{P \rightarrow_\stackDS LR, M'_3(\tuple{I_1, L}, \tuple{I_2, R}) = 1}$.
  This conversion back to $M_3$ be done in $O(kN^2)$.
\end{proof}

\section{Proofs from \secref{queue-adt}}
\applabel{queue-proofs}


\subsection{Proofs from \secref{queue-rewrite}}
\applabel{queue-rewrite}

Here, we prove \lemref{queue-rewrite-correctness}. We first present
some auxiliary definitions and lemmas.
In the following, we will say that an operation $o$ is a \emph{successful}
operation if $\abs{o} \notin \setpred{\opr{m}{}{\vfail}}{m \in \set{\peek, \deq}}$.

\newcommand{\Survives}[1]{\mathsf{Survives}_{#1}}

\myparagraph{Surviving enqueue operations}
Fix a sequential queue history $\tau \in \QueueSpec{}$.
For every successful dequeue operation $o_{\deq}$ in $\tau$,
let $\match_{\tau}(o_{\deq})$ denote the unique earlier enqueue
operation matched to $o_{\deq}$.
Let $\tau = \alpha \cdot \beta$.
An enqueue operation $o_{\enq} \in \alpha$ is \emph{surviving}
if it is not matched by any successful dequeue in $\tau$.
We define the set of surviving enqueue operations from $\alpha$ as
\[
\mathsf{Survivors}(\alpha,\beta)
=
\setpred{o_{\enq} \in \alpha}{
\text{$o_{\enq}$ is unmatched in } \alpha \cdot \beta
}.
\]

Using this notation, for $v \in \vals$, we define:
\begin{align*}
\Survives{\qbot}(\alpha,\beta)
\quad\text{iff}\quad
& \alpha \cdot \beta \in \QueueSpec{},
\\
& \mathsf{Survivors}(\alpha,\beta)=\emptyset,
\\
& \text{every successful } \peek \text{ or } \deq \text{ operation in } \beta
\text{ observes/matches an enqueue in } \alpha,
\\
& \text{and every failed } \peek \text{ or } \deq \text{ operation in } \beta
\text{ occurs at the beginning of } \beta,
\\[0.5em]
\Survives{v}(\alpha,\beta)
\quad\text{iff}\quad
& \alpha \cdot \beta \in \QueueSpec{},
\\
& \mathsf{Survivors}(\alpha,\beta)=\set{o_{\enq}},
\\
& \valOf{o_{\enq}} = v \text{ and } o_{\enq} \text{ is the last enqueue in } \alpha,
\\
& \text{every successful } \peek \text{ or } \deq \text{ operation in } \beta
\text{ observes/matches an enqueue in } \alpha,
\\
& \text{and every failed } \peek \text{ or } \deq \text{ operation in } \beta
\text{ occurs at the beginning of } \beta.
\end{align*}

\begin{lemma}
\lemlabel{queue-rewrite-invariant}
For every $v_{\qbot} \in \vals \cup \set{\qbot}$ and $\alpha,\beta \in \Sigma^*$,
\[
(\alpha,\beta) \in R_{v_{\qbot}}
\quad\Longleftrightarrow\quad
\Survives{v_{\qbot}}(\alpha,\beta).
\]
\end{lemma}

\begin{proof}
We prove the two directions separately.

\medskip

\smallskip
\noindent
$(\Rightarrow)$ We prove by induction on the derivation of
$(\alpha,\beta)\in R_{v_{\qbot}}$.

\begin{description}
\item[Base case.]
For the rule $(\epsilon,\epsilon)\in R_{\qbot}$, we have
$\epsilon\in\QueueSpec$, there are no enqueues, and there are no operations
in the suffix. Hence $\Survives{\qbot}(\epsilon,\epsilon)$ holds.

\item[Inductive step.]
We consider the last rule used in the derivation.

\begin{enumerate}
\item Suppose the last rule is
\[
(\alpha,\epsilon)\in R_{\qbot}
\Rightarrow
(\alpha,\opr{m}{}{\vfail})\in R_{\qbot}
\quad(m\in\set{\peek,\deq}).
\]
By induction, $\Survives{\qbot}(\alpha,\epsilon)$ holds. Thus $\alpha$ is
legal and all enqueues in $\alpha$ are matched, so appending the failed
operation $\opr{m}{}{\vfail}$ is legal. The failed operation matches no
enqueue and occurs at the beginning of the suffix. Hence
$\Survives{\qbot}(\alpha,\opr{m}{}{\vfail})$.

\item Suppose the last rule is
\[
(\alpha,\beta)\in R_v
\Rightarrow
(\alpha,\beta\cdot\opr{\peek}{}{v})\in R_v.
\]
By induction, $\Survives{v}(\alpha,\beta)$ holds. Appending
$\opr{\peek}{}{v}$ is legal, since the unique surviving enqueue from
$\alpha$ has value $v$. A peek does not change matching, and the new
successful peek observes an enqueue from $\alpha$. Hence
$\Survives{v}(\alpha,\beta\cdot\opr{\peek}{}{v})$.

\item Suppose the last rule is
\[
(\alpha,\beta)\in R_v
\Rightarrow
(\alpha,\beta\cdot\opr{\deq}{}{v})\in R_{\qbot}.
\]
By induction, $\Survives{v}(\alpha,\beta)$ holds. Appending
$\opr{\deq}{}{v}$ matches the unique surviving enqueue from $\alpha$,
so all enqueues in $\alpha$ become matched. The new successful dequeue
matches an enqueue from $\alpha$, and the failed-operation condition is
unchanged. Hence $\Survives{\qbot}(\alpha,\beta\cdot\opr{\deq}{}{v})$.

\item Suppose the last rule is
\[
(\alpha,\beta)\in R_{v_{\qbot}}
\Rightarrow
(\alpha,\beta\cdot\opr{\enq}{v}{\vok})\in R_{v_{\qbot}}.
\]
By induction, $\Survives{v_{\qbot}}(\alpha,\beta)$ holds. Appending an
enqueue is always legal, does not change which enqueues in $\alpha$ are
matched, and introduces no successful or failed operation in the suffix.
Hence $\Survives{v_{\qbot}}(\alpha,\beta\cdot\opr{\enq}{v}{\vok})$.

\item Suppose the last rule is
\[
(\alpha,\opr{\enq}{v}{\vok}\cdot\beta)\in R_{\qbot}
\Rightarrow
(\alpha\cdot\opr{\enq}{v}{\vok},\beta)\in R_v.
\]
By induction, $\Survives{\qbot}(\alpha,\opr{\enq}{v}{\vok}\cdot\beta)$
holds. Hence the full sequence
$\alpha\cdot\opr{\enq}{v}{\vok}\cdot\beta$ is legal, all enqueues in
$\alpha$ are matched, and every successful peek/dequeue in
$\opr{\enq}{v}{\vok}\cdot\beta$ observes/matches an enqueue from $\alpha$.
After moving $\opr{\enq}{v}{\vok}$ to the prefix, it becomes the unique
surviving enqueue from the prefix; it has value $v$ and is the last enqueue
in the new prefix. The remaining suffix $\beta$ still satisfies the
successful-operation and failed-operation conditions. Thus
$\Survives{v}(\alpha\cdot\opr{\enq}{v}{\vok},\beta)$.

\item Suppose the last rule is
\[
(\alpha,\opr{\peek}{}{v_{\vfail}}\cdot\beta)\in R_{v_{\qbot}}
\Rightarrow
(\alpha\cdot\opr{\peek}{}{v_{\vfail}},\beta)\in R_{v_{\qbot}}.
\]
By induction,
$\Survives{v_{\qbot}}(\alpha,\opr{\peek}{}{v_{\vfail}}\cdot\beta)$ holds.
Moving a peek from the beginning of the suffix to the end of the prefix
does not change the full sequence or any matching of enqueues. Moreover,
successful operations remaining in $\beta$ still observe enqueues from the
new prefix, and the failed-operation condition is preserved. Hence
$\Survives{v_{\qbot}}(\alpha\cdot\opr{\peek}{}{v_{\vfail}},\beta)$.

\item Suppose the last rule is
\[
(\alpha,\opr{\deq}{}{v_{\vfail}}\cdot\beta)\in R_{v_{\qbot}}
\Rightarrow
(\alpha\cdot\opr{\deq}{}{v_{\vfail}},\beta)\in R_{v_{\qbot}}.
\]
By induction,
$\Survives{v_{\qbot}}(\alpha,\opr{\deq}{}{v_{\vfail}}\cdot\beta)$ holds.
Moving a dequeue from the beginning of the suffix to the end of the prefix
does not change the full sequence. If it is successful, it already
matched an enqueue from $\alpha$, hence from the enlarged prefix; if it is
failed, the failed-operation condition is unaffected for the remaining
suffix. Thus $\Survives{v_{\qbot}}(\alpha\cdot\opr{\deq}{}{v_{\vfail}},\beta)$.
\end{enumerate}
\end{description}

This completes the proof of the forward direction.

\smallskip
\noindent
$(\Leftarrow)$ We prove the converse by induction on the lexicographic
measure $(|\alpha|+|\beta|,|\alpha|)$.

\begin{description}
\item[Base case.]
If $\alpha=\beta=\epsilon$, then $\Survives{v_{\qbot}}(\epsilon,\epsilon)$
implies $v_{\qbot}=\qbot$, and the base rule gives
$(\epsilon,\epsilon)\in R_{\qbot}$.

\item[Inductive step.]
Assume $\Survives{v_{\qbot}}(\alpha,\beta)$.

\begin{enumerate}
\item Suppose $\beta=\epsilon$ and $\alpha=\alpha'\cdot o$.
If $o=\opr{\enq}{v}{\vok}$, then $v_{\qbot}=v$ and
$\Survives{\qbot}(\alpha',\opr{\enq}{v}{\vok})$ holds. By induction,
$(\alpha',\opr{\enq}{v}{\vok})\in R_{\qbot}$, and the register-front rule gives
$(\alpha'\cdot\opr{\enq}{v}{\vok},\epsilon)\in R_v$.
If $o$ is a $\peek$ or $\deq$ operation, then
$\Survives{v_{\qbot}}(\alpha',o)$ holds. By induction,
$(\alpha',o)\in R_{v_{\qbot}}$, and the corresponding skip rule gives
$(\alpha'\cdot o,\epsilon)\in R_{v_{\qbot}}$.

\item Suppose $\beta=\beta'\cdot \opr{\enq}{v}{\vok}$.
Then $\Survives{v_{\qbot}}(\alpha,\beta')$ holds. By induction,
$(\alpha,\beta')\in R_{v_{\qbot}}$, and the deferred-enqueue rule gives
$(\alpha,\beta'\cdot\opr{\enq}{v}{\vok})\in R_{v_{\qbot}}$.

\item Suppose $\beta=\beta'\cdot\opr{m}{}{\vfail}$, where
$m\in\set{\peek,\deq}$. Since failed operations in $\beta$ occur only at the
beginning of $\beta$, we have $\beta'=\epsilon$. Also $v_{\qbot}=\qbot$ and
$\Survives{\qbot}(\alpha,\epsilon)$ holds. By induction,
$(\alpha,\epsilon)\in R_{\qbot}$, and the failed-operation rule gives
$(\alpha,\opr{m}{}{\vfail})\in R_{\qbot}$.

\item Suppose $\beta=\beta'\cdot\opr{\peek}{}{v}$ for $v\in\vals$.
The final peek observes an enqueue in $\alpha$, so $v_{\qbot}=v$ and
$\Survives{v}(\alpha,\beta')$ holds. By induction,
$(\alpha,\beta')\in R_v$, and the peek-front rule gives
$(\alpha,\beta'\cdot\opr{\peek}{}{v})\in R_v$.

\item Suppose $\beta=\beta'\cdot\opr{\deq}{}{v}$ for $v\in\vals$.
The final dequeue matches an enqueue in $\alpha$, so
$\Survives{v}(\alpha,\beta')$ holds. By induction,
$(\alpha,\beta')\in R_v$, and the dequeue-front rule gives
$(\alpha,\beta'\cdot\opr{\deq}{}{v})\in R_{\qbot}$.
\end{enumerate}
\end{description}
\end{proof}

We now move to the proof of \lemref{queue-rewrite-correctness}:

\QueueRewriteSystem*

\begin{proof}
We use \lemref{queue-rewrite-invariant}.

For $\supseteq$, suppose $\tau=\alpha\cdot\beta$ and
$(\alpha,\beta)\in R_{v_{\qbot}}$ for some $v_{\qbot}$.
By \lemref{queue-rewrite-invariant},
$\Survives{v_{\qbot}}(\alpha,\beta)$ holds, whose definition implies
$\alpha\cdot\beta\in\QueueSpec{}$. Hence $\tau\in\QueueSpec{}$.

For $\subseteq$, let $\tau\in\QueueSpec{}$.
If every enqueue in $\tau$ is matched by a successful dequeue, then
$\Survives{\qbot}(\tau,\epsilon)$ holds. Hence, by
\lemref{queue-rewrite-invariant}, $(\tau,\epsilon)\in R_{\qbot}$.

Otherwise, let $o$ be the first enqueue operation in $\tau$ that is not
matched by any successful dequeue in $\tau$. Write
$\tau=\alpha\cdot\beta$, where $\alpha$ is the prefix ending at $o$.
Let $v=\valOf{o}$. Then $o$ is the last enqueue in $\alpha$, and it is the
unique unmatched enqueue from $\alpha$: any earlier unmatched enqueue would
contradict the choice of $o$, and any later enqueue lies in $\beta$.
Moreover, any successful $\peek$ or $\deq$ in $\beta$ observes or matches an
enqueue in $\alpha$, since $o$ is the first unmatched enqueue and remains the
front of the queue throughout the suffix. Thus $\Survives{v}(\alpha,\beta)$
holds. By \lemref{queue-rewrite-invariant}, $(\alpha,\beta)\in R_v$.
\end{proof}

\subsection{Proofs from \secref{queue-lang-reachability}}
\applabel{queue-lang-reachability}

\begin{lemma}[Enabled-edge commutation]\lemlabel{enabled-edge-commutation}
Let $\hist$ be a history and let $I_1,I_2,I_3 \in \fgnodes{\hist}$.
If $(I_1,o,I_2)\in\fgedges{\hist}$ and $I_2 \subseteq I_3$, then for every
path $I_1 \leadsto_w I_3$, there exist words $u,v$ such that
$w = u \cdot o \cdot v$ and $I_2 \leadsto_{u\cdot v} I_3$.
\end{lemma}

\begin{proof}
Since $(I_1,o,I_2)\in\fgedges{\hist}$, we have
$I_2 = I_1 \cup \set{o}$. As $I_2 \subseteq I_3$, the operation $o$
belongs to $I_3 \setminus I_1$. Hence every path from $I_1$ to $I_3$
must add $o$ exactly once. Let
\[
  I_1 = J_0 \leadsto_{\abs{o_1}} J_1
  \leadsto_{\abs{o_2}} \cdots
  \leadsto_{\abs{o_m}} J_m = I_3
\]
be a path labeled $w=\abs{o_1}\cdots \abs{o_m}$, and let $r$ be the
unique index such that $o_r=o$. Write
\[
  w = u \cdot o \cdot v
\]
where $u=\abs{o_1}\cdots \abs{o_{r-1}}$ and
$v=\abs{o_{r+1}}\cdots \abs{o_m}$.

We construct a path from $I_2$ to $I_3$ labeled $u\cdot v$ by taking the
edge adding $o$ first and then replaying the remaining operations in the
same order. For $0\leq i<r$, define
\[
  J'_i = J_i \cup \set{o}.
\]
We claim each $J'_i$ is an ideal. Indeed, since $I_2=I_1\cup\set{o}$ is
an ideal, all predecessors of $o$ already belong to $I_1$. Since
$I_1\subseteq J_i$, adding $o$ to $J_i$ preserves downward closure.
Thus each $J'_i$ is a frontier-graph node.

Moreover, for every $0\leq i<r-1$, the transition
$J'_i \to J'_{i+1}$ adds exactly the operation $o_{i+1}$, and hence is a
frontier-graph edge labeled $\abs{o_{i+1}}$. Thus we have a path
\[
  I_2 = J'_0
  \leadsto_{\abs{o_1}} J'_1
  \leadsto_{\abs{o_2}} \cdots
  \leadsto_{\abs{o_{r-1}}} J'_{r-1}.
\]
Since $J_r = J_{r-1}\cup\set{o}=J'_{r-1}$, we may then continue along
the original path from $J_r$ to $I_3$, using the edges labeled
$\abs{o_{r+1}},\ldots,\abs{o_m}$. Therefore
\[
  I_2 \leadsto_{u\cdot v} I_3.
\]
\end{proof}

\begin{lemma}[Sequence split]
\lemlabel{queue-sequence-split}
Let $\hist$ be a concurrent history and let
$\tau = o \cdot \beta \cdot o'$ be a linearization of $\hist$.
If $o$ and $o'$ are concurrent,
then there exists a decomposition $\beta = \beta_1 \cdot \beta_2$ such that
every operation in $\beta_1$ is concurrent with $o$ and every operation in
$\beta_2$ is concurrent with $o'$.
\end{lemma}

\begin{proof}
Assume for contradiction that no such decomposition exists.
Then there are two possibly non-distinct operations $x,y$ in $\beta$
such that $\tau(x) \leq \tau(y)$, and $o <_\hist x$ and $y <_\hist o'$.
We hence have $\resTimeOf{y} < \invTimeOf{o'} < \resTimeOf{o} < \invTimeOf{x}$.
Hence, $y <_\hist x$, which contradicts $\tau(x) \leq \tau(y)$.

The conclusion follows.
\end{proof}

\begin{lemma}[Soundness of queue ideal relations]
\lemlabel{queue-ideal-rel-sound}
Let $\hist$ be a concurrent $\queueDS{}$ history. If
$(I_1,I_2)\in \IdlRel{v_{\qbot}}$, then there exists a path
$I^{\init}_{\hist}\leadsto_{\tau} I_2$ such that there is a split
$\tau=\alpha\cdot\beta$ of $\tau$ for which the following hold:
\begin{enumerate}
  \item the enqueue operations in $\alpha$ are precisely the enqueue
  operations in $I_1$,
  \item every enqueue operation in $\beta$ is unmatched in $\tau$,
  \item every successful $\peek$ operation in $\beta$ observes an enqueue
  operation in $\alpha$, and every successful $\deq$ operation in $\beta$
  matches an enqueue operation in $\alpha$,
  \item among the enqueue operations in $\alpha$, all are matched in $\tau$
  except possibly the last enqueue operation in $\alpha$,
  \item if all enqueue operations in $\alpha$ are matched in $\tau$, then
  $v_{\qbot}=\qbot$; otherwise, the last enqueue operation in $\alpha$ is
  unmatched in $\tau$ and has value $v_{\qbot}$.
\end{enumerate}
\end{lemma}

\begin{proof}
We prove the claim by induction on the derivation of
$(I_1,I_2)\in\IdlRel{v_{\qbot}}$.

\smallskip
\noindent
\emph{Base case.}
For the rule
$(I^{\init}_{\hist},I^{\init}_{\hist})\in\IdlRel{\qbot}$, take
$\tau=\alpha=\beta=\epsilon$. All conditions hold vacuously.

\smallskip
\noindent
\emph{Inductive cases.}
We consider the last rule used in the derivation.

\begin{enumerate}
\item \emph{Failed operation.}
Suppose the last rule is
\[
(I_1,I_1)\in\IdlRel{\qbot}
\land
(I_1,\opr{m}{}{\vfail},I_2)\in E_\hist
\implies
(I_1,I_2)\in\IdlRel{\qbot},
\]
where $m\in\set{\peek,\deq}$.
By induction, there is a path
$I^{\init}_{\hist}\leadsto_{\tau} I_1$ and a split
$\tau=\alpha\cdot\beta$ satisfying the invariant for $\qbot$.
Since $\tau$ linearizes exactly $I_1$ and the enqueue operations in
$\alpha$ are precisely those in $I_1$, the suffix $\beta$ contains no
enqueue operations. Moreover, since the control is $\qbot$, all enqueue
operations in $\alpha$ are matched; hence the queue is empty after $\tau$.
Thus
\[
\tau'=\tau\cdot\opr{m}{}{\vfail}
\]
is a legal queue linearization of $I_2$. Taking
$\alpha'=\alpha$ and $\beta'=\beta\cdot\opr{m}{}{\vfail}$ preserves all
enqueue-related conditions, and the newly added failed operation introduces
no successful observation or match. Therefore the invariant holds for
$(I_1,I_2)\in\IdlRel{\qbot}$.

\item \emph{Peek front.}
Suppose the last rule is
\[
(I_1,I_2)\in\IdlRel{v}
\land
(I_2,\opr{\peek}{}{v},I_3)\in E_\hist
\implies
(I_1,I_3)\in\IdlRel{v}.
\]
By induction, there is a path
$I^{\init}_{\hist}\leadsto_{\tau} I_2$ and a split
$\tau=\alpha\cdot\beta$ satisfying the invariant for $v$.
The last enqueue in $\alpha$ is unmatched in $\tau$ and has value $v$;
all enqueue operations before it in $\alpha$ are matched, and every enqueue
in $\beta$ is unmatched. Hence the front of the queue after $\tau$ has value
$v$, so
\[
\tau'=\tau\cdot\opr{\peek}{}{v}
\]
is legal. Take $\alpha'=\alpha$ and
$\beta'=\beta\cdot\opr{\peek}{}{v}$. No enqueue is added or matched, and the
new successful $\peek$ observes the unmatched enqueue from $\alpha$. Hence
the invariant holds for $(I_1,I_3)\in\IdlRel{v}$.

\item \emph{Dequeue front.}
Suppose the last rule is
\[
(I_1,I_2)\in\IdlRel{v}
\land
(I_2,\opr{\deq}{}{v},I_3)\in E_\hist
\implies
(I_1,I_3)\in\IdlRel{\qbot}.
\]
By induction, there is a path
$I^{\init}_{\hist}\leadsto_{\tau} I_2$ and a split
$\tau=\alpha\cdot\beta$ satisfying the invariant for $v$.
The last enqueue in $\alpha$ is unmatched in $\tau$ and has value $v$;
therefore appending $\opr{\deq}{}{v}$ is legal and matches this enqueue.
Let
\[
\tau'=\tau\cdot\opr{\deq}{}{v},\qquad
\alpha'=\alpha,\qquad
\beta'=\beta\cdot\opr{\deq}{}{v}.
\]
The enqueues in $\beta'$ are exactly those in $\beta$, and remain
unmatched; the new successful dequeue matches an enqueue from $\alpha$.
All enqueue operations in $\alpha$ are now matched in $\tau'$, so the
control becomes $\qbot$. Hence the invariant holds.

\item \emph{Deferred enqueue.}
Suppose the last rule is
\[
(I_1,I_2)\in\IdlRel{v_{\qbot}}
\land
(I_2,\opr{\enq}{v}{\vok},I_3)\in E_\hist
\implies
(I_1,I_3)\in\IdlRel{v_{\qbot}}.
\]
By induction, there is a path
$I^{\init}_{\hist}\leadsto_{\tau} I_2$ and a split
$\tau=\alpha\cdot\beta$ satisfying the invariant for $v_{\qbot}$.
Appending an enqueue is always legal, so let
\[
\tau'=\tau\cdot\opr{\enq}{v}{\vok},\qquad
\alpha'=\alpha,\qquad
\beta'=\beta\cdot\opr{\enq}{v}{\vok}.
\]
The enqueue operations in $\alpha'$ are unchanged, and the new enqueue is
in the suffix and unmatched. No successful $\peek$ or $\deq$ is introduced,
and the matched/unmatched status of enqueues in $\alpha'$ is unchanged.
Thus the invariant holds with the same control value $v_{\qbot}$.

\item \emph{Register front.}
Suppose the last rule is
\[
(I_1,I_3)\in\IdlRel{\qbot}
\land
(I_1,\opr{\enq}{v}{\vok},I_2)\in E_\hist
\land
I_1\subseteq I_2\subseteq I_3
\implies
(I_2,I_3)\in\IdlRel{v}.
\]
Let $e$ denote the enqueue operation labeling the edge
$(I_1,\opr{\enq}{v}{\vok},I_2)$.
By induction, there is a path
$I^{\init}_{\hist}\leadsto_{\tau} I_3$ and a split
$\tau=\alpha\cdot\beta$ satisfying the invariant for $\qbot$.
Thus the enqueue operations in $\alpha$ are precisely those in $I_1$, all
of them are matched in $\tau$, every enqueue in $\beta$ is unmatched, and
every successful $\peek$ or $\deq$ in $\beta$ observes or matches an enqueue
from $\alpha$.

Since $I_2=I_1\cup\set{e}\subseteq I_3$, the operation $e$ occurs in
$\tau$. As $e\notin I_1$ and the enqueue operations in $\alpha$ are exactly
those in $I_1$, the operation $e$ occurs in $\beta$.

We first transform the witness, if necessary, so that $e$ is the first
enqueue operation in the suffix $\beta$. If this is already the case, there
is nothing to do. Otherwise, let $e'$ be the enqueue operation immediately
preceding $e$ among the enqueue operations of $\beta$, and write the segment
from $e'$ to $e$ as
\[
e'\cdot \delta \cdot e ,
\]
where $\delta$ contains no enqueue operations. The operations $e'$ and $e$
are concurrent: since $e$ is enabled at $I_1$, all predecessors of $e$ are
already in $I_1$, whereas $e'\notin I_1$; and $e$ cannot precede $e'$ since
$e'$ appears before $e$ in the linearization $\tau$.

By \lemref{queue-sequence-split}, write $\delta=\delta_1\cdot\delta_2$
so that every operation in $\delta_1$ is concurrent with $e'$ and every
operation in $\delta_2$ is concurrent with $e$. Hence the segment
\[
e'\cdot \delta_1\cdot\delta_2\cdot e
\]
can be replaced by another linearization in which $e$ occurs before $e'$,
while preserving the relative order of all other enqueue operations.
Queue legality is preserved: the operations in $\delta$ are not enqueues;
successful $\peek$ and $\deq$ operations in the suffix observe or match
enqueues from $\alpha$, not from suffix enqueues; and failed operations in
the suffix, if any, occur before the first suffix enqueue and hence do not
occur in this segment. Thus all return values and matchings are preserved.

This swap decreases the number of enqueue operations of $\beta$ preceding
$e$. Repeating finitely many times, we obtain a legal linearization
$\widehat{\tau}=\widehat{\alpha}\cdot e\cdot\widehat{\beta}$ of $I_3$
such that:
(i) the enqueue operations in $\widehat{\alpha}$ are precisely those in
$I_1$, (ii) $e$ is the first enqueue operation after $\widehat{\alpha}$,
and (iii) all invariant conditions for the original $\qbot$-witness are
preserved.

Now set
\[
\alpha'=\widehat{\alpha}\cdot e,\qquad
\beta'=\widehat{\beta}.
\]
Then the enqueue operations in $\alpha'$ are precisely those in
$I_2=I_1\cup\set{e}$. All old enqueue operations from $\widehat{\alpha}$
are matched, while $e$ is unmatched, has value $v$, and is the last enqueue
operation in $\alpha'$. All enqueue operations in $\beta'$ are unmatched,
and every successful $\peek$ or $\deq$ in $\beta'$ still observes or matches
an enqueue from $\alpha'\). Hence the invariant holds for
$(I_2,I_3)\in\IdlRel{v}$.

\item \emph{Skip over peek.}
Suppose the last rule is
\[
(I_1,I_3)\in\IdlRel{v_{\qbot}}
\land
(I_1,\opr{\peek}{}{v_{\vfail}},I_2)\in E_\hist
\land
I_1\subseteq I_2\subseteq I_3
\implies
(I_2,I_3)\in\IdlRel{v_{\qbot}}.
\]
By induction, there is a path
$I^{\init}_{\hist}\leadsto_{\tau} I_3$ and a split
$\tau=\alpha\cdot\beta$ satisfying the invariant for $v_{\qbot}$.
Since the edge from $I_1$ to $I_2$ is not an enqueue, $I_1$ and $I_2$
contain the same enqueue operations. Therefore the same witness
$\tau=\alpha\cdot\beta$ satisfies the enqueue-agreement condition for
$I_2$. All other conditions are unchanged, so the invariant holds for
$(I_2,I_3)\in\IdlRel{v_{\qbot}}$.

\item \emph{Skip over dequeue.}
The argument is identical. The rule is
\[
(I_1,I_3)\in\IdlRel{v_{\qbot}}
\land
(I_1,\opr{\deq}{}{v_{\vfail}},I_2)\in E_\hist
\land
I_1\subseteq I_2\subseteq I_3
\implies
(I_2,I_3)\in\IdlRel{v_{\qbot}}.
\]
By induction, take a witness $\tau=\alpha\cdot\beta$ for
$(I_1,I_3)\in\IdlRel{v_{\qbot}}$. Since the added operation is not an
enqueue, $I_1$ and $I_2$ have the same enqueue operations. Thus the same
split witnesses the invariant for $(I_2,I_3)\in\IdlRel{v_{\qbot}}$.
\end{enumerate}
\end{proof}

\begin{lemma}[Completeness of queue ideal relations]
\lemlabel{queue-ideal-rel-complete}
Let $\hist$ be a concurrent $\queueDS{}$ history.
Suppose
\[
I^{\init}_{\hist} \leadsto_{\alpha} I_1
\qquad\text{and}\qquad
I_1 \leadsto_{\beta} I_2
\]
are paths in $\fg{\hist}$, and $(\alpha,\beta)\in R_{v_{\qbot}}$.
Then $(I_1,I_2)\in \IdlRel{v_{\qbot}}$.
\end{lemma}

\begin{proof}
We prove the claim by induction on the derivation of
$(\alpha,\beta)\in R_{v_{\qbot}}$. The induction is over all choices of
ideals $I_1,I_2$ witnessing paths
$I^{\init}_{\hist}\leadsto_{\alpha} I_1$ and
$I_1\leadsto_{\beta} I_2$.

\smallskip
\noindent
\emph{Base case.}
For the rule $(\epsilon,\epsilon)\in R_{\qbot}$, the path assumptions imply
$I_1=I_2=I^{\init}_{\hist}$. Hence
$(I^{\init}_{\hist},I^{\init}_{\hist})\in \IdlRel{\qbot}$ by the base rule of
\defref{queue-ideal-rel}.

\smallskip
\noindent
\emph{Inductive cases.}
We consider the last rule used in the derivation.

\begin{enumerate}
\item \emph{Failed operation.}
Suppose the last rule is
\[
(\alpha,\epsilon)\in R_{\qbot}
\Rightarrow
(\alpha,\opr{m}{}{\vfail})\in R_{\qbot},
\qquad m\in\set{\peek,\deq}.
\]
Assume
$I^{\init}_{\hist}\leadsto_{\alpha} I_1$ and
$I_1\leadsto_{\opr{m}{}{\vfail}} I_2$.
By induction applied to $(\alpha,\epsilon)\in R_{\qbot}$, we have
$(I_1,I_1)\in \IdlRel{\qbot}$. Since
$(I_1,\opr{m}{}{\vfail},I_2)\in E_{\hist}$, the corresponding failed-operation
rule of \defref{queue-ideal-rel} yields
$(I_1,I_2)\in \IdlRel{\qbot}$.

\item \emph{Peek front.}
Suppose the last rule is
\[
(\alpha,\beta)\in R_v
\Rightarrow
(\alpha,\beta\cdot\opr{\peek}{}{v})\in R_v.
\]
Assume
$I^{\init}_{\hist}\leadsto_{\alpha} I_1$ and
$I_1\leadsto_{\beta\cdot\opr{\peek}{}{v}} I_3$.
Factor the second path as
\[
I_1\leadsto_{\beta} I_2
\leadsto_{\opr{\peek}{}{v}} I_3.
\]
By induction, $(I_1,I_2)\in \IdlRel{v}$. Applying the peek-front rule of
\defref{queue-ideal-rel} gives $(I_1,I_3)\in \IdlRel{v}$.

\item \emph{Dequeue front.}
Suppose the last rule is
\[
(\alpha,\beta)\in R_v
\Rightarrow
(\alpha,\beta\cdot\opr{\deq}{}{v})\in R_{\qbot}.
\]
Assume
$I^{\init}_{\hist}\leadsto_{\alpha} I_1$ and
$I_1\leadsto_{\beta\cdot\opr{\deq}{}{v}} I_3$.
Factor the second path as
\[
I_1\leadsto_{\beta} I_2
\leadsto_{\opr{\deq}{}{v}} I_3.
\]
By induction, $(I_1,I_2)\in \IdlRel{v}$. Applying the dequeue-front rule of
\defref{queue-ideal-rel} gives $(I_1,I_3)\in \IdlRel{\qbot}$.

\item \emph{Deferred enqueue.}
Suppose the last rule is
\[
(\alpha,\beta)\in R_{v_{\qbot}}
\Rightarrow
(\alpha,\beta\cdot\opr{\enq}{v}{\vok})\in R_{v_{\qbot}}.
\]
Assume
$I^{\init}_{\hist}\leadsto_{\alpha} I_1$ and
$I_1\leadsto_{\beta\cdot\opr{\enq}{v}{\vok}} I_3$.
Factor the second path as
\[
I_1\leadsto_{\beta} I_2
\leadsto_{\opr{\enq}{v}{\vok}} I_3.
\]
By induction, $(I_1,I_2)\in \IdlRel{v_{\qbot}}$. Applying the deferred-enqueue
rule of \defref{queue-ideal-rel} gives
$(I_1,I_3)\in \IdlRel{v_{\qbot}}$.

\item \emph{Register front.}
Suppose the last rule is
\[
(\alpha,\opr{\enq}{v}{\vok}\cdot\beta)\in R_{\qbot}
\Rightarrow
(\alpha\cdot\opr{\enq}{v}{\vok},\beta)\in R_v.
\]
Assume
\[
I^{\init}_{\hist}\leadsto_{\alpha\cdot\opr{\enq}{v}{\vok}} I_2
\qquad\text{and}\qquad
I_2\leadsto_{\beta} I_3.
\]
Factor the first path as
\[
I^{\init}_{\hist}\leadsto_{\alpha} I_1
\leadsto_{\opr{\enq}{v}{\vok}} I_2.
\]
Then
\[
I_1\leadsto_{\opr{\enq}{v}{\vok}\cdot\beta} I_3.
\]
By induction applied to
$(\alpha,\opr{\enq}{v}{\vok}\cdot\beta)\in R_{\qbot}$, we obtain
$(I_1,I_3)\in \IdlRel{\qbot}$. Since
$(I_1,\opr{\enq}{v}{\vok},I_2)\in E_{\hist}$ and
$I_1\subseteq I_2\subseteq I_3$, the register-front rule of
\defref{queue-ideal-rel} yields $(I_2,I_3)\in \IdlRel{v}$.

\item \emph{Skip over peek.}
Suppose the last rule is
\[
(\alpha,\opr{\peek}{}{v_{\vfail}}\cdot\beta)\in R_{v_{\qbot}}
\Rightarrow
(\alpha\cdot\opr{\peek}{}{v_{\vfail}},\beta)\in R_{v_{\qbot}}.
\]
Assume
\[
I^{\init}_{\hist}\leadsto_{\alpha\cdot\opr{\peek}{}{v_{\vfail}}} I_2
\qquad\text{and}\qquad
I_2\leadsto_{\beta} I_3.
\]
Factor the first path as
\[
I^{\init}_{\hist}\leadsto_{\alpha} I_1
\leadsto_{\opr{\peek}{}{v_{\vfail}}} I_2.
\]
Then
\[
I_1\leadsto_{\opr{\peek}{}{v_{\vfail}}\cdot\beta} I_3.
\]
By induction applied to
$(\alpha,\opr{\peek}{}{v_{\vfail}}\cdot\beta)\in R_{v_{\qbot}}$, we obtain
$(I_1,I_3)\in \IdlRel{v_{\qbot}}$. Applying the skip-over-peek rule of
\defref{queue-ideal-rel} gives $(I_2,I_3)\in \IdlRel{v_{\qbot}}$.

\item \emph{Skip over dequeue.}
Suppose the last rule is
\[
(\alpha,\opr{\deq}{}{v_{\vfail}}\cdot\beta)\in R_{v_{\qbot}}
\Rightarrow
(\alpha\cdot\opr{\deq}{}{v_{\vfail}},\beta)\in R_{v_{\qbot}}.
\]
Assume
\[
I^{\init}_{\hist}\leadsto_{\alpha\cdot\opr{\deq}{}{v_{\vfail}}} I_2
\qquad\text{and}\qquad
I_2\leadsto_{\beta} I_3.
\]
Factor the first path as
\[
I^{\init}_{\hist}\leadsto_{\alpha} I_1
\leadsto_{\opr{\deq}{}{v_{\vfail}}} I_2.
\]
Then
\[
I_1\leadsto_{\opr{\deq}{}{v_{\vfail}}\cdot\beta} I_3.
\]
By induction applied to
$(\alpha,\opr{\deq}{}{v_{\vfail}}\cdot\beta)\in R_{v_{\qbot}}$, we obtain
$(I_1,I_3)\in \IdlRel{v_{\qbot}}$. Applying the skip-over-dequeue rule of
\defref{queue-ideal-rel} gives $(I_2,I_3)\in \IdlRel{v_{\qbot}}$.
\end{enumerate}
\end{proof}

\QueueReachability*

\begin{proof}
$(\Rightarrow)$ Suppose
$(\fg{\hist}, I^{\init}_{\hist}, I^{\final}_{\hist})$ satisfies
$\QueueSpec$-reachability. Then there is a path
\[
I^{\init}_{\hist} \leadsto_{\tau} I^{\final}_{\hist}
\]
such that $\tau \in \QueueSpec{}$.
By \lemref{queue-rewrite-correctness}, there exist
$\alpha,\beta$ and $v_{\qbot}\in \vals_\hist\uplus\set{\qbot}$ such that
$\tau=\alpha\cdot\beta$ and $(\alpha,\beta)\in R_{v_{\qbot}}$.
Let $I'$ be the frontier-graph node reached after reading the prefix
$\alpha$ along the above path. Then
\[
I^{\init}_{\hist}\leadsto_{\alpha} I'
\qquad\text{and}\qquad
I'\leadsto_{\beta} I^{\final}_{\hist}.
\]
By \lemref{queue-ideal-rel-complete},
$(I',I^{\final}_{\hist})\in \IdlRel{v_{\qbot}}$.

\smallskip
\noindent
$(\Leftarrow)$ Suppose there exist
$I'\in\fgnodes{\hist}$ and
$v_{\qbot}\in \vals_\hist\uplus\set{\qbot}$ such that
$(I',I^{\final}_{\hist})\in \IdlRel{v_{\qbot}}$.
By \lemref{queue-ideal-rel-sound}, there is a path
\[
I^{\init}_{\hist}\leadsto_{\tau} I^{\final}_{\hist}
\]
such that $\tau\in\QueueSpec{}$. Hence
$(\fg{\hist},I^{\init}_{\hist},I^{\final}_{\hist})$ satisfies
$\QueueSpec$-reachability.
\end{proof}

\end{document}
\endinput